\newcites{latex}{References}
\newcolumntype{Y}{>{\centering\arraybackslash}X}
\newtheorem{remark}{Remark}
\newtheorem{prop}{Proposition}
\newtheorem{corollary}{Corollary}
\def\rR{\mathbb{R}}
\def\eE{\mathbb{E}}
\def\C{{\cal C}}
\def\O{{\cal O}}
\def\S{{\cal S}}
\def\wh{\widehat}
\def\wt{\widetilde}
\def\wbar{\overline}
\def\var{\hbox{var}}
\def\corr{\hbox{corr}}
\newcommand{\mn}{{\mathrm{N}}}
\def\Bern{\mathrm{Bernoulli}}
\def\Dir{\mathrm{Dir}}
\def\DP{\mathrm{DP}}
\def\Exp{\mathrm{Exp}}
\def\Ga{\mathrm{Ga}}
\def\GEM{\mathrm{GEM}}
\def\NIG{\mathrm{NIG}}
\def\Unif{\mathrm{Unif}}
\def\Mult{\mathrm{Mult}}
\def\P_25_ICML{{\it Proceedings of the 25th international conference on Machine learning}}
\def\bse{\begin{eqnarray*}}
\def\ese{\end{eqnarray*}}
\def\be{\begin{eqnarray}}
\def\ee{\end{eqnarray}}
\def\bq{\begin{equation}}
\def\eq{\end{equation}}
\def\wh{\widehat}
\def\trans{^{\rm T}}
\def\th{^{th}}
\def\bone{{\mathbf 1}}
\def\Data{{\hbox{Data}}}
\def\bb{\bm{b}}
\def\bc{\bm{c}}
\def\b1e{\bm{e}}
\def\b1f{\bm{f}}
\def\bF{{ F}}
\def\bg{\bm{g}}
\def\bI{\bm{I}}
\def\bu{\bm{u}}
\def\bU{\bm{U}}
\def\bx{\bm{x}}
\def\bX{\bm{X}}
\def\bXs{\bm{X^\star}}
\def\bY{\bm{Y}}
\def\bzero{\bm{0}}
\def\simind{\stackrel{\mathrm{\scriptsize{ind}}}{\sim}}
\def\simiid{\stackrel{\mathrm{\scriptsize{iid}}}{\sim}}
\newcommand{\bmu}{\bm{\mu}}
\newcommand{\bvarphi}{\bm{\varphi}}
\newcommand{\bpi}{\bm{\pi}}
\newcommand{\bxi}{\bm{\xi}}
\newcommand{\btheta}{\bm{\theta}}
\newcommand{\thb}{\bm{\theta}}
\newcommand{\bbeta}{\bm{\beta}}
\newcommand{\bzeta}{\bm{\zeta}}
\newcommand{\bomega}{\bm{\omega}}
\newcommand{\de}{\mathrm{d}}
\newcommand{\brho}{\bm{\rho}}
\newcommand{\brhon}{\brho_{n}}
\newcommand{\kn}{k{(n_{2})}}
\newcommand{\wi}{w_i} 
\newcommand{\tilkn}{\kn} 
\newcommand{\deltah}{\wh{\delta}}
\newcommand{\zetat}{\wt{\bzeta}}
\newcommand{\omegat}{\wt{\bomega}}
\newcommand{\tht}{\wt{\btheta}}
\newcommand{\ct}{\wt{\bc}}
\newcommand{\zetas}{\bm{\zeta^\star}}
\newcommand{\bXon}{\bX_{1}}
\newcommand{\bXtw}{\bX_{2}}
\newcommand{\bcon}{\bc_{1}}
\newcommand{\bctw}{\bc_{2}}
\newcommand{\bpion}{\bpi_{1}}
\newcommand{\bpitw}{\bpi_{2}}
\newcommand{\ns}{n_{s}}
\newcommand{\gh}{\widehat{g}}
\newcommand{\mm}{^{(m)}}
\def\boldq{{q}}
\newcommand{\h}{h}
\renewcommand{\H}{H}
\newcommand{\HR}{\mathrm{HR}}
\renewcommand\footnoterule{\kern-3pt \hrule \textwidth 2in \kern 2.6pt}
\newcommand{\removen}[1] { {\color{blue} \sout{#1}}}
\newcommand{\changen}[1]{{\color{blue} #1}}
\def\boxit#1{\vbox{\hrule\hbox{\vrule\kern6pt \vbox{\kern6pt \textcolor{blue}{#1}\kern6pt}\kern6pt\vrule}\hrule}}
\def\authorfootnote#1{{\let\thefootnote\relax\footnotetext{#1}}}
\newcommand{\ech}{\color{black}\rm  }    
\begin{document}
\thispagestyle{empty}
\baselineskip=28pt

\newcommand{\papertitle}{\LARGE{\bf Bayesian Nonparametric Common Atoms \\
		\vskip-1ex
		Regression  for Generating \\
		\vskip-1ex
		Synthetic Controls in Clinical Trials}}
\begin{center}
\papertitle
		 \end{center}
\baselineskip=12pt
\vskip 12pt 

\newcommand{\authors}{\begin{center}
		Noirrit Kiran Chandra$^{a}$ (noirrit.chandra@utdallas.edu)\\
		Abhra Sarkar$^{b}$ (abhra.sarkar@utexas.edu)\\
		John F. de Groot$^{c}$ (john.degroot@ucsf.edu)\\
		Ying Yuan$^{d}$ (yyuan@mdanderson.org)\\
		Peter M\"uller$^{b,e}$ (pmueller@math.utexas.edu)\\
		
		\vskip 7mm
		$^{a}$Department of Mathematical Sciences, \\
		The University of Texas at Dallas, TX, USA 
		\vskip 5pt
		$^{b}$Department of Statistics and Data Sciences, \\
		The University of Texas at Austin, TX, USA 
		\vskip 5pt
		$^{c}$Department of Neurological Surgery,\\
		University of California San Francisco, CA, USA\\
		\vskip 5pt
		$^{d}$Department of Biostatistics,\\
		The University of Texas MD Anderson Cancer Center,
		Houston, TX, USA\\
		\vskip 5pt 
		$^{e}$Department of Mathematics, \\
		The University of Texas at Austin, TX, USA 
\end{center}}
\authors

\begin{center}
{\Large{\bf Abstract}} 
\end{center}
\baselineskip=14pt
The availability of electronic health records (EHR) has opened opportunities to supplement increasingly expensive and difficult to
carry out randomized controlled trials (RCT) with evidence from readily available real world data. 
In this paper, we use EHR data to construct synthetic control arms for treatment-only single arm trials.
We propose a novel nonparametric Bayesian common atoms mixture model
that allows us to find equivalent population strata in the EHR and the treatment arm and then resample the EHR data to create equivalent patient populations under 
both the single arm trial and the resampled EHR.  
Resampling is implemented via a density-free importance sampling scheme. 
Using the synthetic control arm, inference for the treatment
effect can then be carried out using any method available for RCTs.
Alternatively the proposed nonparametric Bayesian model allows straightforward model-based inference. 
In simulation experiments, the proposed method 
exhibits higher power than alternative methods in detecting treatment
effects,  specifically for non-linear  response functions. 
We apply the method to supplement single arm treatment-only glioblastoma studies
with a synthetic control arm based on historical trials.

\vskip 10pt 
\baselineskip=12pt
\noindent\underline{\bf Key Words}: 
common atoms mixture,
glioblastoma,
importance sampling,
mixtures,
real world data,
single-arm trials.

\par\medskip\noindent
\underline{\bf Short/Running Title}: Common Atoms Mixture Model for Synthetic Controls

\par\medskip\noindent
\underline{\bf Corresponding Author}: Noirrit Kiran Chandra (noirrit.chandra@utdallas.edu)

\pagenumbering{arabic}
\setcounter{page}{0}
\newlength{\gnat}
\setlength{\gnat}{26pt}
\baselineskip=\gnat
\vskip 4cm

\newpage
\section{Introduction}
\label{sec:intro}

We introduce a novel Bayesian nonparametric 
regression model to construct synthetic control arms from external real world
data (RWD) to supplement single arm treatment-only trials.  The use of
common atoms across multiple random probability measures is a critical
feature of the proposed construction.  Models with similar features
have been used before in the literature, including
\citet{denti_common,camerlenghi2019nDP,rodriguez2008nDP,teh_HDP}.

Randomized controlled trials (RCT) are the gold standard in
evidence-based evaluation of new treatments.  RCTs are, however,
increasingly associated with bottlenecks involving volunteer
recruitment, patient 
truancy and adverse events \citep{rct_challenges} and hence are often
very time consuming, expensive and laborious.  
This is of particular concern for rare diseases, such as glioblastoma (GBM). 
With digitization of
health records and other advances in medical informatics, new data
sources are becoming available that can supplement RCTs.  For example,
relevant information on a control treatment is often
available from completed RCTs, electronic health record data,
insurance claims data or patient registries from hospitals
\citep{ehr}.  Such external data, also referred to as RWD,
can augment or substitute the control group in the target
clinical trial \citep{singlearm}. This has led researchers to
consider the creation of synthetic control arms from RWD (see
\citealt{synthetic_control_review} for a review).  However, the
heterogeneity of RWD prohibits the direct use of patient
level data as a control arm, lest differences with the actual treatment
population with respect to patient profiles
bias inference on treatment effects
\citep{synthetic_control_challenges}.  Many existing methods adjust
for the lack of randomization in treatment assignments by correcting
the bias in the response model and hence can be sensitive to the
specification of the treatment assignment as well as the response
model as we discuss below.
%
In this article, we take a fundamentally different approach
by resampling the RWD to
construct a cohort equivalent to the 
treatment arm in terms of their covariate profiles
which can then serve
as the (synthetic) control arm.

There is a fast growing literature on the problem of incorporating
RWD in clinical trials.
Traditional meta-analytic approaches aim to combine
information across studies to construct comparisons of treatments
\citep{meta_analysis3}.
Power prior
\citep{prevost2000,chen_power_prior}, commensurate prior
\citep{hobbs2011} and elastic prior \citep{jiang2021}
 constructions 
try to incorporate information from historical data
by way of informative prior models.
However, these approaches may be inadequate when the RWD population is
considerably more heterogeneous than the experimental arm; see \citet{mueller_cam23} for a review. 

Many methods to incorporate RWD in trial design and data analysis
are based on propensity scores (PSs), defined as the conditional probabilities of treatment assignment
given covariates.
In the context of incorporating external data, investigators
often use PSs
for a patient being selected into the 
current trial versus the external data; 
in case of supplementing a single arm treatment-only trial, 
the PSs are identical to treatment assignments.
\citet{rosenbaum83propscore}
showed that an unbiased estimate of the average treatment effect can
be obtained by PS adjustments.
%
Most PS-based methods can be broadly classified to be based on 
matching, stratification, weighting, or regression. 
Matching is used to achieve covariate balance across different
arms. 
However, matching
PSs do not generally imply matching covariates
\citep{propensity_limitation2}.  
Stratification splits the
data into strata with respect to PSs
and calculates an average treatment effect as a weighted
average of within-stratum estimates \citep{propensity_power_prior,
propensity_CL, propensity_multiple}.  
PS-stratification
may be sensitive to the definition of the strata and
weight-based estimators may be sensitive to the misspecification of
the PS model \citep{zhao2004}. 
Regression adjustments, that use the
PS as a regressor for the outcome, address these
issues \citep{rosenbaum83propscore} but the estimates may again be
biased if the regression model is misspecified
\citep{vansteelandt2014}.  
Bayesian nonparametric models that
avoid a particular parametric family or structure,  such as linearity, of the
regression relationship have thus also been proposed \citep{wang2019}.
Nevertheless, consolidated unidimensional PSs can be inadequate in matching multivariate covariates from multiple studies 
\citep{propensity_limitation1,propensity_limitation2}.
Additionally, these methods often do not efficiently use all available
data by dropping unmatched data. 
Finally, some other methods \citep{hasegawa2017,li_integrated_framework}, 
although not specifically designed to create synthetic controls,
also integrate multiple studies using the covariate distributions.

In this article, we develop an alternative approach based on 
Bayesian nonparametric (BNP) mixture models. 
Mixture models imply a random partition of experimental units linked
to different atoms in the mixture \citep{dahl_2006}.  
We exploit this property to propose a BNP common
atoms mixture (CAM) model  to  introduce matched clusters of
patients in a treatment-only trial data set and a (typically much larger)
RWD.
We show how such matched clusters allow a \textit{density free
importance resampling scheme} to generate a subpopulation of the RWD such
that the distribution of covariates in the subpopulation can be 
considered to be equivalent to the single-arm trial.
That is, the patients in a matching RWD cluster can be considered 
\textit{digital clones} of patients in a matching cluster 
in the single-arm trial.

The proposed CAM model
 allows, among other things, the following two 
alternatives for inference on treatment effects.
Having established equivalent patient populations, 
inference can in principle proceed as if treatment had been
assigned at random, using inference for RCTs.
Alternatively, we propose model-based inference using an extension of
the CAM model with a sampling model for the outcome.
While both alternatives are based on the same underlying CAM
model, 
we prefer the model-based inference on treatment effect
as a more explicit and principled approach.

The proposed CAM model builds on related BNP models in the literature,
including the 
hierarchical Dirichlet process (DP) \citep{teh_HDP} which allows for
information sharing across multiple groups through common atoms,  the
nested DP \citep{rodriguez2008nDP} which can identify distributional
clusters, and 
\citet{camerlenghi2019nDP} who proposed a latent mixture of shared and
idiosyncratic processes across the sub-models.  \citet{denti_common}
proposed a CAM model for the analysis of nested datasets where the
distributions of the units differ only over a small fraction of the
observations sampled from each unit.
In contrast to these constructions, the CAM model proposed here introduces more
structure as needed in our application by setting up two
nonparametric Bayesian mixture models with shared atoms and
constraints on the implied clusters.

The rest of this paper is organized as follows.
Section \ref{sec:GBM_description} describes the glioblastoma study
that motivated this work.  
Section \ref{subsec:CAM_cov} introduces the
proposed common atoms mixture model on the covariates and how it can handle
variable dimensional covariates of different data types; Section
\ref{subsec:IS_reweighing} introduces a novel density-free importance
resampling scheme to achieve equivalent populations; 
and Section \ref{subsec:CA-PPMx} 
discusses the general common atoms
regression model, a flexible mixture of lognormals for censored
survival outcomes and an easy to use graphical tool for model
validation. 
In Section \ref{sec:regression_with_CAM}, we discuss two alternative strategies for 
inference on treatment effects.  
Section \ref{sec:sim_study} presents simulation studies.  
Section \ref{sec:GBM} shows results for the motivating GBM data. 
 Section \ref{sec:discussion} concludes with final remarks.  
Below, in Table \ref{tab:acronyms}, we list the many acronyms used in the paper for
easy reference.


\begin{table}[!ht] \centering
  \caption{List of acronyms}
  \vskip-1.75ex
  \label{tab:acronyms}
  \hspace*{-4ex}\resizebox{.95\textwidth}{!}{ \begin{minipage}[b]{.48\linewidth}
   \begin{tabular}{|p{.2\linewidth}|p{.75\linewidth}|}
      \hline Acronym&\multicolumn{1}{c|}{Full forms} \\
      \hline AUC& area under the receiver operating characteristic curve\\
      BART& Bayesian additive regression tree\\
      BNP& Bayesian nonparametric\\
      CAM& common atoms mixture\\
      CA-PPMx& common atoms PPMx\\      
      \hline
    \end{tabular}
  \end{minipage} \hspace*{.016\linewidth}
\begin{minipage}[b]{.48\linewidth}  	
    \begin{tabular}{|p{.2\linewidth}|p{.8\linewidth}|} \hline
      Acronym& \multicolumn{1}{c|}{Full forms} \\
      \hline 
      DP& Dirichlet process \\
      GBM & glioblastoma \\
      IS& importance sampling\\
      PPMx& product partition model with regression on covariates\\
      PS& propensity score\\
      RCT& randomized controlled trial \\
      RWD& real-world data\\
      \hline
    \end{tabular}
  \end{minipage}}
\end{table}

\section{Motivating Application in Glioblastoma}
\label{sec:GBM_description}
Our motivating application arises from a GBM data science project at
MD Anderson Cancer Center.  
GBM is a devastating disease with the
average life expectancy of less than 12 months in the general population
\citep{ostrom2016}.  
Despite decades of intensive clinical research,
the progress in developing an effective treatment for GBM lags behind
that of other cancers \citep{aldape2019challenges}.  
In the last 30 years, only two drugs (carmustine wafers and temozolomide) have been
approved by the Federal Drug Administration (FDA) for patients with newly diagnosed GBM \citep{gbm_drugs}. 
These drugs extend median survival by
less than three months and neither offers a potential for cure.  
One major cause of the high failure rate of the drug development for GBM
is suboptimal design of phase II trials, in particular,
the lack of a control arm in many studies
 \citep{gbm_trials}.  
A review of phase I/II GBM
trials from 1980 to 2013 found that only 20 (5\%) were randomized
compared to 365 (95\%) single-arm trials \citep{grossman2016}.
Reasons for the dominance of single-arm trials include the
small number of GBM patients available for clinical trials and investigator's desire to speed up drug development and reduce trial costs.  
GBM is a rare disease by the definition of the Orphan Drug Act \citep{fda2020}.  
Unfortunately, the high heterogeneity of GBM patients makes single-arm trials highly susceptible to bias,
contributing to the fact that almost all phase II trials showing
promising treatment effects failed in phase III RCTs
\citep{mandel2017}.  
The objective of the GBM data science project is to address this pressing issue by leveraging historical data collected at the MD Anderson Cancer Center.  
The overarching goal is to develop a platform for future single-arm clinical trials in GBM, with synthetic controls constructed from the historical database to enhance the evaluation and screening of new drugs. 
Working towards this goal, we describe here a method to create synthetic
controls, as the engine of the platform, for future trials.

We work with a database that comprises records from 339
highly clinically and molecularly annotated GBM patients treated
at MD Anderson over more than 10 years.
Once the system is set, the database is expected to be continuously
updated with new patient data collected at MD Anderson Cancer Center
and potentially also be combined with the data from other institutions.

After discarding variables with minimal variability across patients and relying on clinical judgment, we identified
11 clinically important 
 categorical  covariates. 
These covariates are commonly considered as prognostic factors in GBM
treatments \citep{gbm_covs, trippa_insight} and are briefly described
in Table \ref{tab:GBM_covs}.

Figure \ref{fig:GBM_heatplot} shows the categorical covariates in the
historical database and a future treatment-only study which we
elaborate in Section \ref{sec:GBM}. 
Figure \ref{fig:GBM_freq_plots} in the supplementary materials 
highlights the lack of randomization in the two populations.


\begin{table}[!h]
  \centering
  \caption{Description of the covariates in the GBM data.}
  \vskip-1.75ex
  \label{tab:GBM_covs}
  \resizebox{.9\textwidth}{!}{
    \begin{tabular} {|p{.19\linewidth}|p{.8\linewidth}|}
      \hline
      \multicolumn{1}{|c|}{Covariate}&\multicolumn{1}{c|}{Description}  \\
      \hline
      Age    & dichotomized at 55 years\\
      KPS    & Karnofsky performance score, categorized into
               three classes: \\ & ``$\leq 60$'', ``$(60,80]$'' and
                                   ``$>80$''\\ 
      RT Dose& radiation therapy dose: dichotomized at 50 Gray\\
      SOC    & received standard-of-care (concurrent
               radiation therapy and temozolomide): Yes/No   \\[4pt]
      CT     & participation in a therapeutic trial: Yes/No  \\
      MGMT   & status of MGMT ($O^{6}$-methylguanine-DNA
               methyltransferase) gene: methylated (M),
               unmethylated (UM) or uninterpretable (UI)\\[4pt]
      ATRX   & loss of the ATRX chromatin remodeler gene: Yes/No \\
      Gender & gender\\
      EOR    & extent of tumor resection:
               ``total", ``subtotal" or  ``laser interstitial thermal therapy" \citep[LITT,][]{litt}\\
      Histologic grade & grade of astrocytoma:
                         IV (GBM) (most cases), or\\ & I-III (low-grade or anaplastic) (few) \\[4pt]
      Surgery reason  & ``therapeutic" or ``other" (relapse)\\
      \hline
    \end{tabular}}	
\end{table}

\begin{figure}[!ht]
	\centering
	\includegraphics[width=.8\linewidth]{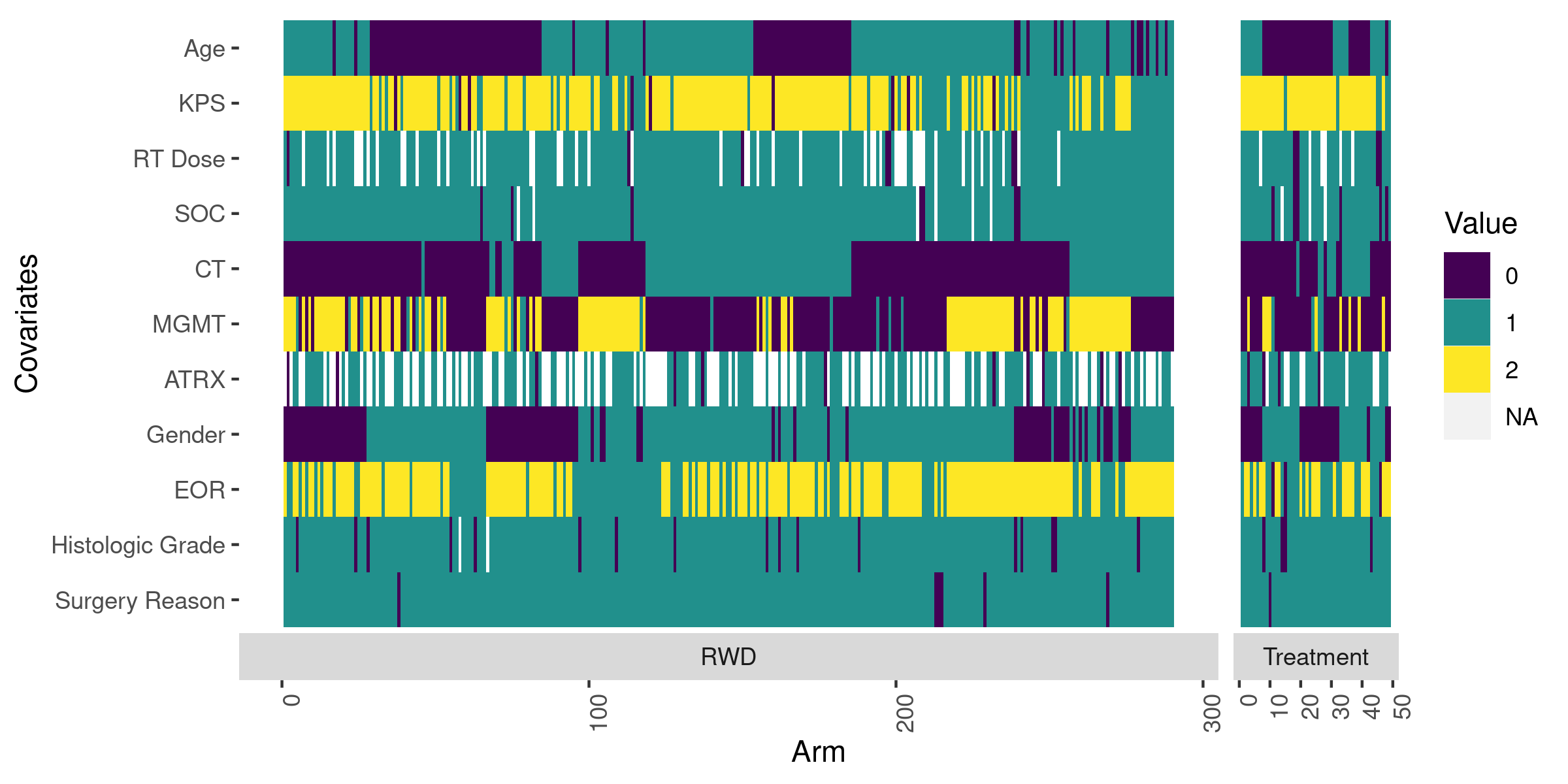}
	\vskip-1.75ex
	\caption{Glioblastoma dataset of 11 baseline categorical
		covariates with missing entries in the two treatment arms.
		The left block shows the historical patients. The
		(smaller) right block shows a hypothetical future trial.}
	\label{fig:GBM_heatplot}
\end{figure}
\vspace*{-2ex}
\section{Common Atoms Mixture Model}
\label{sec:CAM_regression}
We first introduce a model for matching patients with respect to their covariate profiles across different treatment
arms and then an extension of the model to also include outcomes.
Later we will introduce two alternative methods for inference on
treatment effects that build on this model. 

\subsection{Common Atoms Mixture Model on the Covariates}
\label{subsec:CAM_cov}
Suppose we have $S$ datasets $(\bX_{s,i},Y_{s,i})$, $s=1,2,\dots,S$,
comprising $p$-dimensional covariate vectors
$\bX_{s,i}=(X_{s,i,1},\dots,X_{s,i,p})\trans$ and corresponding
responses $Y_{s,i}$ associated with patients $i=1,\dots,n_{s}$. 
In this article, we assume the responses to be univariate.
Let $s=1$ refer to the arm for the (new) experimental therapy, and
$s=2,\dots,S$ denote the RWD datasets.  
Focusing on the motivating GBM
application, we elaborate the model for $S=2$ with a single RWD set.
When we have multiple historical datasets, i.e., when $S>2$, 
we would simply merge them and consider the merged data set to be a single
RWD with increased heterogeneity  
as illustrated in Section \ref{sm subsec:mult_hist} of the
supplementary materials. 
For a valid evaluation of treatment effects, it is
then important to verify equivalent patient populations, i.e., matching
the distributions of $\bX_{s,i}$ under $s=1$ versus $s=2$,
or to otherwise adjust for any detected differences
\citep{synthetic_control_challenges}.   
As the RWD population can be from a variety of sources, such data are
typically more heterogeneous than the patient population in the
ongoing trial.
We develop a novel BNP CAM model with this specific feature to model the two
distributions. 
 The proposed CAM model gives rise to a random partition of
similar $\bX_{1,i}$ and a matching partition of $\bX_{2,i}$.
Clusters under the latter partition can be considered
digital clones of the matching clusters of the earlier partition. 

We first construct the model for covariates $\bX_{2,i}$ in the RWD. 
Let ${\zetat}=\{\zetat_{j} \}_{j=1}^{\infty}$
and $\bpitw=\{ \pi_{2,j} \}_{j=1}^{\infty}$ 
denote cluster-specific parameters and weights, respectively. 
We let
\vspace*{ -2ex}\\
\begin{equation}
  \bX_{2,i}\mid \zetat,\bpitw \simiid
  \overbrace{\textstyle\sum_{j=1}^{\infty} \pi_{2,j} \boldq(\bX_{2,i}\mid \zetat_{j}) }^{\bF_{2} {(\bX_{2,i} \mid \bpitw,\zetat )}},\quad
	\zetat_{j}\mid \bxi \simiid G_{0}(\zetat_{j} \mid \bxi),\quad
	\bpitw\sim \GEM(\alpha_{2}). \label{eq:RWD_model}
\end{equation}
\vspace*{ -6ex}\\
\noindent
Here $\boldq(\cdot \mid \zetat_{j})$ is a
suitably chosen kernel with parameter $\zetat_{j}$, $G_{0}(\cdot
\mid \bxi)$ is a prior distribution for the $\zetat_{j}$'s,
and $\GEM(\alpha)$ is a stick-breaking prior on the mixture weights
corresponding to a DP with mass parameter $\alpha>0$
\citep{ sethuraman_stickbreaking}.
Let $G = \sum_{j=1}^{\infty} \pi_{2,j} \delta_{\zetat_{j}}(\cdot)$ denote a discrete
probability measure with atoms at the $\zetat_{j}$'s. 
An equivalent hierarchical model 
representation of \eqref{eq:RWD_model} is
\vskip -8ex
\begin{equation}
	\bX_{2,i}\mid \bzeta_{i} \simiid \boldq(\bX_{2,i}\mid \bzeta_{i}),\quad
	\bzeta_{i}\mid G  \simiid G, \quad 
	G \mid \alpha_{2}, \bxi \sim \DP\{\alpha_{2}, G_{0}(\cdot\mid
        \bxi) \},
\label{eq:RWDhier}
\end{equation}
\vskip -3ex
\noindent where $\DP(\alpha, G_{0})$ is a DP with base
measure $G_{0}$ and concentration parameter $\alpha$
\citep{ferguson73}.
The discrete nature of the DP random measure $G$ gives rise to
possible ties between the $\bzeta_{i}$'s, which define the desired
clusters.
For later reference we define notations for these ties and clusters.
Let $\zetas = \{\zetas_{j},\; j=1,\ldots,\tilkn\}$ denote the distinct 
values in $\{\bzeta_{i};i=1,\dots,n_{2}\}$,  let
$c_{2,i}=j$ if $\bzeta_i=\zetas_j$ denote cluster membership
indicators defining clusters $C_j=\{i:\; \bzeta_i=\zetas_j\}$.
We assume the distribution of $\bX_{1,i}$ be
a mixture with the same kernel $\boldq$ and the same atoms $\zetas$,
\vspace*{ -2ex}\\
\begin{equation}
  \bX_{1,i}\mid \zetas \simiid
  \overbrace{\textstyle{\sum_{j=1}^{\tilkn} \pi_{1,j} \boldq(\bX_{1,i}\mid \zetas_{j})}}^{\bF_{1} {(\bX_{1,i} \mid \bpion,\zetas )}},\quad
  \textstyle{\bpion  \sim
  \Dir\left\{\frac{{\alpha_{1}}}{\tilkn},\dots,\frac{\alpha_{1}}{\tilkn}\right\}},\label{eq:experimental_model} 
\end{equation}
\vspace*{ -6ex}\\
\noindent where $\bpion=(\pi_{1,1},\dots,\pi_{1,\tilkn})$,
$\Dir(a_{1},\dots,a_{r})$ indicates an $r$-dimensional Dirichlet distribution
with parameters $a_{1},\dots,a_{r}$, 
and $\alpha_{1}>0$ is a concentration parameter. 
Note that model \eqref{eq:experimental_model} is defined conditionally
on
\eqref{eq:RWD_model} and $\zetas$ 
such that $F_1$ and $F_2$ 
share the same set of atoms.
Importantly, the construction avoids the imputation of clusters
(strata) with only $\bX_{1,i}$'s.
There is always a corresponding (non-empty) cluster for the $\bX_{2,i}$'s from the RWD.
This is important for the upcoming constructions. 
The motivation here is that, 
owing to the bigger size of the RWD compared to the trial arm, 
$\bXtw$ can be expected to exhibit greater heterogeneity
than $\bXon$ 
(see, e.g., the right panel in Figure \ref{fig:common_atoms_motivate}).
\begin{figure}[!ht]
  \centering
  \includegraphics[trim={0.5cm 0.4cm 0.5cm 0.25cm},clip,height=.15\paperheight] {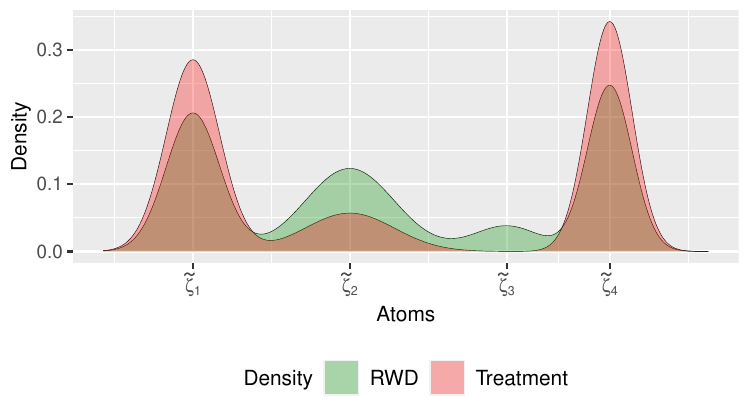}
  \includegraphics[trim={0.5cm 0.375cm 0.35cm 0.1cm},clip,height=.15\paperheight] {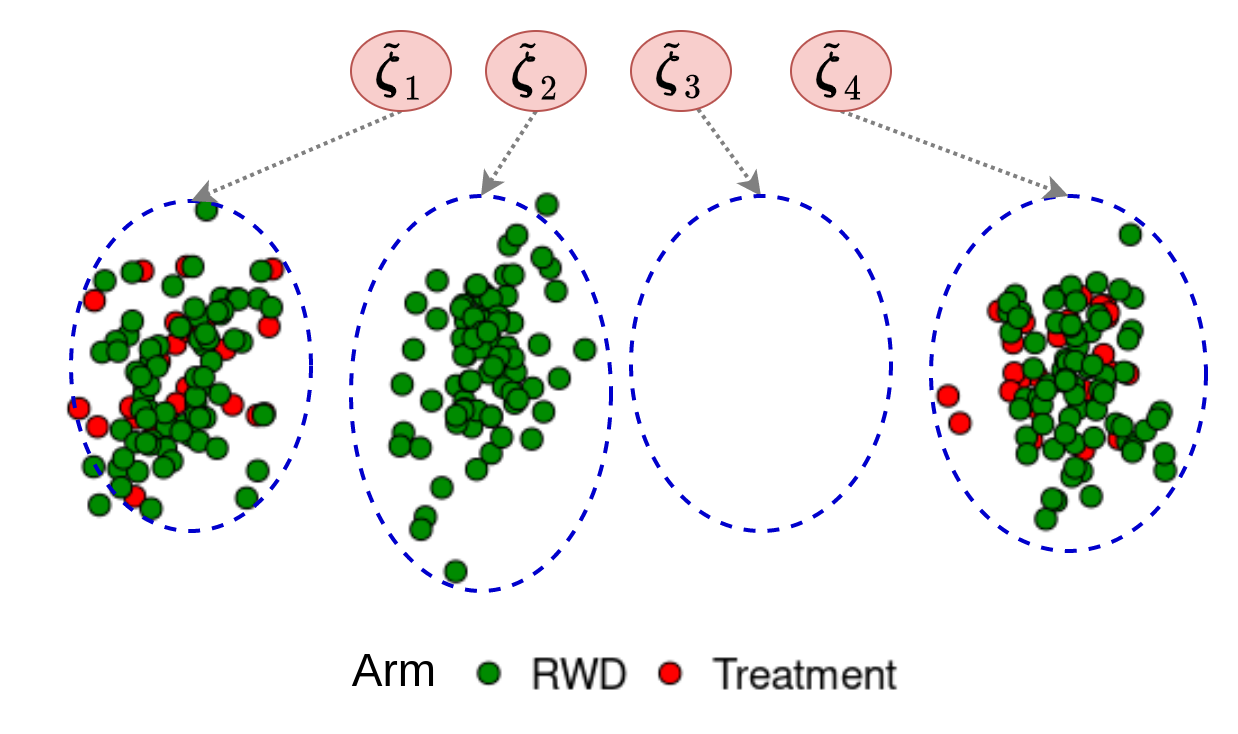}	
  \caption{An illustration of the CAM model: In the 
    generative model, there are a total of four atoms $\zetat_{1:4}$
    shared between RWD and the treatment arm (left panel).  
    Despite having  positive weight, the atom $\zetat_{3}$ is not associated with any
    sample from the RWD (right panel) and hence the density of the treatment arm is also allowed to be supported only on the remaining
    non-empty clusters $\zetat_{1}, \zetat_{2},\zetat_{4}$ of the RWD.
    The atom $\zetat_{2}$ is linked with only
    the RWD (right panel). 
    A cluster for the treatment arm alone is however not permissible.
  }
  \label{fig:common_atoms_motivate}
\end{figure}

In summary, 
we define $\bF_{1}(\bX \mid \bpion,\zetas )=\sum_{j=1}^{\kn} \pi_{1,j}
\boldq(\bX\mid \zetas_{j})$ and $\bF_{2}(\bX \mid \bpitw,\zetat
)=\sum_{j=1}^{\infty} \pi_{2,j} \boldq(\bX\mid \zetat_{j})$,
with the prior on atoms and weights as discussed.
\label{define_the_Fs}
Figure \ref{fig:common_atoms_motivate} shows a stylized representation
of the generative process of the proposed CAM model.
Notice that here atom $\zetat_{3}$ is not linked with any $\bXtw$ observation and hence $\kn=3$.
Accordingly, $\bF_{1}(\cdot \mid \bpion,\zetas )$ is a mixture of
three components.
Finally, no observation from $\bXon$ is linked to $\zetat_{2}$.
The $\bX_{2,i}$'s linked to $\zetat_{1}$ and $\zetat_{4}$ can be regarded as digital clones of the $\bX_{1,i}$'s linked to the same atoms. 

The described CAM model is different from existing BNP mixture models.
In \eqref{eq:RWD_model}-\eqref{eq:experimental_model}, the atoms linked to
$\bXon$ are always a subset of those atoms that are linked to $\bXtw$,
which is not  naturally the case for the hierarchical DP model
\citep{teh_HDP}.
Also, unlike the nested DP
\citep{rodriguez2008nDP} and the common atoms nested DP
\citep{denti_common} models, there is no notion of
clustering distributions.
That is, $p\{\bF_{1}  {(\bX \mid \bpion,\zetas
)}=\bF_{2} {(\bX \mid \bpitw,\zetat )} \}=0$ a priori. Instead,
the intention here is to cluster similar covariate values across the
datasets.

Regarding the concentration parameters $\alpha_{s}$,
	we assume $\log \alpha_{s}\sim
\mn(\mu_{\alpha},\sigma_{\alpha}^{2})$ for $s=1,2$.
\citet{ascolani2022clustering} showed that a hyper-prior on the
concentration parameters can solve the problem of inconsistency of DP
mixtures \citep{miller_DP_inconsistency}.

\paragraph*{Handling mixed data types and missing values:} 
An appealing feature of the proposed CAM model over existing approaches is
the easy use of covariates of different data-types and missing values.  
Covariates in RCTs often comprise different
data-types including continuous, discrete and categorical variables. 
Missing values are also quite common.
For example, in Figure \ref{fig:GBM_heatplot}, 
there are a large number of missing values for the ATRX gene which has
only recently been identified as a therapeutic target for glioma
\citep{atrx_gbm} and was therefore not commonly recorded before.

Many existing methods for handling missing data 
rely on imputation \citep{propensity_missingdata}, 
possibly at the expense of an
additional layer of prediction errors.
Alternatively, data records with missing variables may be dropped altogether, resulting in a reduced sample size.

Assuming missingness completely at random,
the proposed CAM model avoids these issues by accommodating variable
dimensional covariates in a principled manner by considering a
separate univariate kernel for each covariate.
Note that a mixture with independent kernels
can still accommodate marginal dependence between the covariates
\citep[Section 7.2.2, pp
175]{ghosal_book}. \label{pg:dependent_marginal}
Specifically, let $\O_{s,i}=\{j: X_{s,i,j} \mbox{ is recorded}\}$
denotes the set of observed covariates for patient $i$ in dataset $s$.
We use independent kernels
\vspace*{ -5ex}\\
\begin{equation}
\textstyle{	\boldq(\bX_{s,i} \mid \zetas_{j})= \prod_{\ell\in \O_{s,i}} q_{\ell}(X_{s,i,\ell}\mid \zetas_{j,\ell}), \quad G_{0}(\zetas_{j} \mid \bxi)= \prod_{\ell=1}^{p} g_{0,\ell}( \zetas_{j,\ell} \mid \bxi)},
	\label{eq:model_missingdata}
\end{equation} 
\vspace*{ -6ex}\\
\noindent where $q_{\ell}( \cdot \mid \zetas_{j,\ell})$ is a
univariate kernel corresponding to the $\ell\th$ covariate with
parameters $\zetas_{j,\ell}$ and $g_{0,\ell}( \zetas_{j,\ell} \mid
\bxi)$ is a prior on $\zetas_{j,\ell}$ with hyper-parameters $\bxi$.
The likelihood function of $\bX_{s,i}$ is then computed on the basis
of only the observed values.
The kernel $q_{\ell}$ is chosen to accommodate the data-type of the $\ell\th$
covariate. 
The model allows co-clustering of $\bX_{s,i}$ with some missing
variables and another fully observed $\bX_{s,i'}$; see Section \ref{sm
  sec:ppmx_missing} of the supplementary materials for additional details.	 
Missingness patterns other than completely at random can be handled by
introducing additional hierarchy in the model, see, e.g.,
\cite{linero2018_missing} for a review.\label{referee1_concern}  

\subsection{Density-free Importance Resampling of RWD} 
\label{subsec:IS_reweighing}
Building on the fitted CAM for covariates,
we propose an importance resampling method to create a subpopulation
of $\bXtw$ that can be considered to be equivalent to $\bXon$
(see below for a definition of equivalence that is being used here).
Under the assumption of no unmeasured confounders, the
$\bX_{2,i}$'s in the sampled (or weighted) subpopulation can be assumed
to follow the same distribution as $\bX_{1,i}$, and be
considered digital clones of the $\bX_{1,i}$.
With such equivalent populations, in principle, any desired method
for randomized clinical trials 
can subsequently be used to carry out inference on treatment effects.
Such focus on equivalent populations follows recent recommendations by the FDA  
\citep{fdaregulation2021}.

Recall that $\bF_{s}$ denotes the mixture model for $\bX_{s,i}$,
$s=1,2$, under \eqref{eq:RWD_model} and
\eqref{eq:experimental_model}, respectively.
We define equivalent populations as a subset (possibly all) of
$\bXtw$ together with a set of weights such that expectation of
any function of interest $g(\bX_{1,i})$ under $\bX_{1,i} \sim \bF_{1}$ can be evaluated as
a (weighted) Monte Carlo average using these $\bXtw$ (and the weights).
Here we assume that all stated expectations exist and that
the order of taking expectations and limits can be switched.

Recall that $\bF_{2} {(\cdot \mid \bpitw,\zetat )} = \sum_{j=1}^{\infty} \pi_{2,j} \boldq(\cdot \mid \zetat_{j})$.
Alternatively, the joint model of $(\bXtw,\bctw)$ can be expressed as
$\bF_{2}^{H}(\bX_{2,i},c_{2,i}\mid \bpitw,\zetat)=\boldq(\bX_{2,i}
\mid \zetat_{c_{2,i}}) \pi_{2,c_{2,i}}$. 
For easier housekeeping, we assume
$\zetas_{j} = \zetat_{j}$ for $j=1,\dots,\tilkn$, i.e., the first
$\tilkn$ atoms are linked with the  $\bX_{2,i}$'s. 
Accordingly, we let $\bF_{1} {(\cdot \mid \bpion,\zetat )} = \sum_{j=1}^{\tilkn} \pi_{1,j} \boldq(\cdot \mid \zetat_{j})$ using the same first $\tilkn$ atoms observed in the $\bXtw$ population.
This is the exact construction of
\eqref{eq:RWD_model} and \eqref{eq:experimental_model}. 
For an {equivalent population}, we require weights
$\wi$ attached to
 {$(\bX_{2,i} {,c_{2,i}})$} (using $\wi=0$ to drop samples)
such that: 
\vspace*{ -8ex}\\
\bse
	\eE_{\bF_{1} {(\cdot \mid \bpion,\zetat )} } \left\{g(\bX_{1,i}) \right\} = \eE_{\bF_{2}^{H} (\cdot \mid \bpitw,\zetat )} \{\gh(\bXtw {,\bctw})\}
	\mbox{ with }
	\gh(\bXtw {,\bctw}) = \sum_{i=1}^{n_{2}}  \wi\, g(\bX_{2,i} {,c_{2,i}}).
\ese
\vspace*{-8ex}\\
\noindent
The weights $\wi$ are functions of $c_{2,i}$ and $\pi_{1,j}$ as
follows. 
Define $n_{2,j}=\abs{ C_{2,j}}$, the cardinality of the earlier
introduced clusters $C_{2,j}$.  
Then $\frac{1}{n_{2,j}}
\sum_{i\in C_{2,j}} g(\bX_{2,i})$ is an unbiased estimator of
$\eE_{\boldq(\cdot \mid \zetat_{j})}g(\bX)$  
and
\vspace*{-8ex}\\
\be
	\textstyle{\gh =
	\sum_{j} \pi_{1,j} \left\{\sum_{i\in C_{2,j}} \frac{1}{n_{2,j}} g(\bX_{2,i})\right\}
	= \sum_{i=1}^{n_{2}} \frac{\pi_{1,c_{2,i}}}{n_{2,c_{2,i}}} g(\bX_{2,i})} \label{eq:ref_needs}
\ee
\vspace*{-8ex}\\
\noindent
is an unbiased estimator of $\eE_{\bF_{1} {(\cdot \mid \bpion,\zetat )}} g(\bX)$. 
We then recognize $ \pi_{1,c_{2,i}}/n_{2,c_{2,i}}$ as the ideal
weights. 
 {Since we only observe $\bX_{s}$ but not $\bpion$ and $\bctw$, we}  
replace $\pi_{1,c_{2,i}}/n_{2,c_{2,i}}$ in $\gh$ by a Monte Carlo
average under 
posterior MCMC simulation to get the desired equality simulation-exact
(i.e., in the limit as $n_{1}, n_{2}$ and the number of MCMC
simulations increases).   
Let $m=1,\dots,M$ index
the posterior sample and use $\pi_{1,j}\mm$, $n\mm_{2,j}$, etc. to
indicate parameter values in the $m\th$ sample. 
We use
\vspace*{-5ex}\\
\begin{equation}
	\textstyle \gh = \sum_i w_i g(\bX_{2,i}),\quad
	w_{i} \propto  \sum_{m=1}^{M} \pi\mm_{1,c\mm_{2,i}}/n\mm_{2,c\mm_{2,i}},
	\label{eq:imp_dist}
\end{equation}
\vspace*{-6ex}\\
\noindent with $w_{i}$ being the
\textit{importance sampling} weight for $\bX_{2,i}$.
The $\bX_{2,i}$'s can be resampled with these weights to
obtain the desired subpopulation with distribution $\bF_{1}(\bx \mid \bpion,\zetat)$
\citep{importance_resampling}.  
This resampled subpopulation
of $\bXtw$ can then be regarded as equivalent in distribution to
$\bXon$.
Algorithm \ref{algo:importance_resampling} summarizes the procedure. 
\begin{algorithm}[!ht]
	\caption{Density-free importance resampling of RWD and validation}
	\label{algo:importance_resampling}
	\SetAlgoLined
	\DontPrintSemicolon
	\textbf{Input} two data sets $\bXon$ and $\bXtw$.\;
	\textbf{Fit the CAM model} to the data using MCMC simulation.
	Let $\bpion^{(m)}$ and $\bctw^{(m)}$ be the $m\th$ MCMC sample of $\bpion$ and $\bctw$, respectively, 
	and $n^{(m)}_{2,j}$ be the size of cluster $C_{2,j}$ in the $m\th$ MCMC iteration for $m=1,\dots,M$.\; 
	\textbf{Calculate} importance sampling weights
	\hspace*{65pt}$w_{i} \propto \sum_{m=1}^{M}
        {\pi^{(m)}_{1,c^{(m)}_{2,i}}}/ {n^{(m)}_{2,c^{(m)}_{2,i}}}, 
        ~~~i=1,\dots,n_{2}$.\;
	\textbf{Resample} a subpopulation of size $n_{1}$ from $\bXtw$
        with importance resampling weights $w_{i}$ with
        replacement.\; 
	\textbf{Test for equivalence} of $\bXon$ and the resampled subpopulation of $\bXtw$ using a supervised classification algorithm (e.g., a BART as described in the text).	
\end{algorithm}	
\vspace*{-5pt}

To test the equivalence of the two populations, 
we use a Bayesian additive regression tree \citep[BART,][]{bart2010} in Step 5 of Algorithm \ref{algo:importance_resampling}.
In extensive simulation studies in Section \ref{sec:sim_study}, we
notice that an AUC (area
under the receiver operating characteristic curve) 
less than 0.6 
yields excellent empirical performance. 
Once equivalence is achieved, in principle any existing approach for
inference on treatment effects can be used
(see Section  \ref{sec:regression_with_CAM} and later).

 {Note that even if the RWD population is not a heterogeneous superset of the current trial, one can still fit the CAM model.
	In case the RWD is not comparable, Step 5 of Algorithm \ref{algo:importance_resampling} can discriminate the two populations and 
	the AUC can quantify the degree of incongruence.}\label{pg:quantify_comparability}

In general, importance sampling schemes need the ratio of the target
density (in our case, $\bF_{1}$) and the importance sampling density
(in our case, $\bF_{2}$). 
For our problem, this would require high-dimensional
density estimation. 
Even if the densities were known, importance sampling
would be plagued by unbounded weights
\citep{au2003importance_sample}. \label{pg:why_density_free}
Exploiting the common atoms structure, our proposed scheme however
avoids evaluation of the marginal multivariate densities.
We therefore refer to this as a \textit{density-free importance resampling scheme}, and for brevity often simply as an \textit{IS scheme}. 
{In the denominator of $w_{i}$, the use of $n_{2,j}$ (which by definition are $\geq 1$) 
avoids complications arising from
unbounded weights.}
 {Conventional importance sampling schemes are asymptotically consistent. 
This is seen to hold in numerical experiments with our algorithm as well.}\label{pg:IS_unbiased}
Additional discussions on Algorithm \ref{algo:importance_resampling} are in Section \ref{sm sec:importance_resamping} of the supplementary materials.


\subsection{Regression with CAM Model on Covariates}
\label{subsec:CA-PPMx}
Note that up to here we only concerned ourselves with the covariates,
without any reference to the outcomes $\bY$.
In preparation for one of the strategies in the upcoming discussion of
treatment comparison (Section \ref{sec:regression_with_CAM}), we now augment the CAM model to include a sampling model for the outcomes.  
That is, we add a response model on top of the CAM model on covariates.

The extended model defines a regression of $Y_{s,i}$ on covariates
$\bX_{s,i}$ by first grouping patients
with similar covariate profiles into clusters
and then adding a cluster-specific sampling model for the outcome $Y_{s,i}$.
That is, the overall model specifies a regression of $Y_{s,i}$ on
$\bX_{s,i}$ via a random partition.
A major advantage of this approach is
that it allows a variable-dimension covariate vector -- a feature that
is not straightforward to include in a regression otherwise.
Similar product partition models with regression on covariates (PPMx,
see also \ref{subsec:ppmx} in the supplementary materials)
were considered by \citet{ppmx} and \citet{ppmx_missing}, albeit
without any notion of common atoms. 
We will therefore refer to the model proposed below as the common
atoms PPMx (CA-PPMx).
Formally,
we introduce cluster-specific parameters
$\btheta_{s}=\{\btheta_{s,j};~ j=1,\ldots,\tilkn\}$, and assume 
\vskip-5ex
\begin{equation}	
	(Y_{s,i}\mid \btheta_{s}, c_{s,i}=j) \simind \h(Y_{s,i}\mid \btheta_{s,j}), \label{eq:sampligmodel_cappmx}
\end{equation}
\vskip-2ex
\noindent for a suitable choice of $\h$.
For example, for an event-time response,
$\h$ could be a lognormal, exponential or Weibull model.
The response model \eqref{eq:sampligmodel_cappmx} depends on the
covariates indirectly via $c_{s,i}$'s, i.e., the partition induced by the
covariates.
 Within stratum $C_{j} = C_{1,j} \cup C_{2,j}$, the response models
allows for a treatment comparison based on $(\btheta_{1,j},
\btheta_{2,j})$, which can then be averaged with respect to the
assumed distribution of $\bX_{s,j}$
to define an average treatment effect. 

For the implementation in the motivating case study, we let $Y_{s,i}$
denote the $\log$ OS (overall survival) times and assume 
$h(Y_{s,i} \mid \btheta_{s,j})$ to be a normal kernel with $\btheta_{s,j}=(\mu_{s,j}, \sigma^{2}_{s,j})$. 
Such mixtures are highly flexible \citep{ghosal_density99}, making them an attractive choice for many applications.
%
We complete the model with conjugate
normal-inverse-gamma $(\NIG)$ priors on the $(\mu_{s,j},
\sigma^{2}_{s,j})$'s. 
In summary, we have 
\vspace*{-3ex}
\begin{equation}
	\textstyle{Y_{s,i}\mid c_{s,i}=j, \btheta_{s} \simind \mn(\mu_{s,j}, \sigma_{s,j}^{2}),~
	\mu_{s,j}\mid \sigma^{2}_{s,j} \simind \mn\left(\mu_{0}, \frac{\sigma^{2}_{s,j}}{ \kappa_{0}}\right),~
	\sigma^{-2}_{s,j} \simiid \Ga(a_{0},b_{0})}, \label{eq:mixture_lognormal}
\end{equation}
\vskip-3ex
\noindent where $\Ga(a_{0},b_{0})$ is a gamma distribution with mean $a_{0}/b_{0}$. 
We add the hyper-priors
$\mu_{0}\sim \mn(m_{\mu}, s_{\mu}^{2})$
and  $\log b_{0}\sim \mn(m_{b}, s_{b}^{2})$
on the main location-scale controlling hyper-parameters $\mu_{0}$ and
$b_{0}$ while fixing the precision hyper-parameters $\kappa_{0}$ and
$a_{0}$. 
Choices of these hyperparameters are discussed in Section \ref{sm sec:hyperparam} of the supplementary materials.
Finally, 
for a goodness-of-fit test under the proposed model, we 
use the approach of
\citet{valen_model_fit_2007} to build a
graphical tool based on quantile plots.
Such visual tools are often quite effective for detecting departures from model assumptions \citep[Chapter
2]{meloun2011}. 
See Section \ref{subsec:goodnessoffit} in the supplement for more details.


\section{Inference on Treatment Effects}
\label{sec:regression_with_CAM}

\subsection{Two-step Importance Sampling (IS) Approach}
\label{subsec:two_step_IS}

We already described the use of the weights $\pi_{s,j}$ in the CAM
model to achieve equivalent patient populations.  
This allows a straightforward approach to treatment    
comparison. 
Using the adjusted (resampled) subpopulation of $\bXtw$, one can 
proceed with inference on the treatment effect using any method
relying on equivalent patient populations across the two arms.
We refer to this approach as the {\bf ``two-step IS"} and use it 
in the simulation studies and applications in Sections
\ref{sec:sim_study} and \ref{sec:GBM}, respectively.
This approach does not make use of the outcome model of 
Section \ref{subsec:CA-PPMx}. 
%

\subsection{Model-Based Inference for Treatment Effects}
\label{subsec:treatment_effect}
Alternatively, we implement inference using the response model of
Section \ref{subsec:CA-PPMx}, i.e., the full CA-PPMx. 
We refer to this approach as
{\bf ``model-based inference"}. 
We assume that the desired inference on treatment effects takes the
form of inference for some notion of difference $\delta(\cdot,\cdot)$
of the marginal distributions
under the two treatment arms,
$\Delta_{\theta}=\delta\left\{
f_{1}(\cdot \mid \btheta_{1},\bpion),
f_{2}(\cdot \mid \btheta_{2},\bpitw) \right\}.$
However, 
since the covariate populations in the two treatment arms can be substantially different, 
comparison between the marginal
(with respect to the covariates) 
outcome models
$f_{1}(\bY_{1,i} \mid
\btheta_{1},\bpion)$ and $f_{2}(\bY_{2,i} \mid \btheta_{2},\bpitw)$
can be biased.
We need to appropriately adjust for the differences
in the two populations. 
We do this by replacing $f_{2}$ as
follows. 
Exploiting the common atoms structure of the proposed CA-PPMx, there is
an operationally simple method to carry out this adjustment and 
infer treatment effects. 
Since within each cluster, the covariate populations can be considered
equivalent, the adjustment for the lack of randomization amounts to
adjusting the corresponding cluster weights.
We define 
\vskip-5ex
\begin{equation*}
	\textstyle \wt{f}_{2}(Y \mid  \btheta_{2}, \bpion) = \sum_{j=1}^{\tilkn} \pi_{1,j} \h(Y \mid \btheta_{2,j}),
\end{equation*}
\vskip-2ex
\noindent
where the mixture components $\h(Y \mid \btheta_{2,j})$ of the
response model in the RWD are weighted by $\bpion$, i.e., the cluster
weights associated with $\bXon$ (rather than $\bpitw$).
Thus $\wt{f}_{2}$ is the distribution of 
outcomes under control in the treatment population or in other words, the
response of an average individual from the trial arm potentially
treated with the control therapy.
With these notions, we define the population adjusted treatment effect as 
\vskip-8ex
\begin{equation}
  \wt {\Delta}_{\theta}=\delta\{f_{1}(\cdot \mid \btheta_{1},\bpion),
  \wt{f}_{2}(\cdot \mid \btheta_{2},\bpion) \}.
\label{eq:Delta3}      
\end{equation} 
\vskip-2ex
\noindent
For example, when $Y$ is a univariate response variable
and $\delta(f_{1},f_{2})={\eE_{f_{1}} (Y)- \eE_{f_{2}}(Y)}$,
$\wt{\Delta}_{\theta}$ simplifies to
$\wt{\Delta}_{\theta}= \sum_{j=1}^{\tilkn} \pi_{1,j} \left\{
\eE_{\h(Y\mid \btheta_{1,j})} (Y) - \eE_{\h(Y\mid \btheta_{2,j})} (Y) \right\}$,
which further reduces to $\wt\Delta_{\theta} = \sum_{j=1}^{k(n_{2})}
\pi_{1,j} (\mu_{1,j}-\mu_{2,j})$ when $\mu_{s,j}=\eE_{\h(Y\mid
  \btheta_{s,j})}\{T(Y)\}$.

In general, each cluster of covariates in the CAM model can be interpreted as
a homogeneous sub-population of patients.
For the $j\th$ group, the average treatment effect is
$\delta\{\h(Y\mid \btheta_{1,j}), \h(Y\mid  \btheta_{2,j})\}$
and its proportion in the target population is $\pi_{1,j}$.
%
%
The reported treatment effect \eqref{eq:Delta3} includes the
adjustment with the sub-population proportions 
$\pi_{1,j}$.
On a related point, the proposed
model-based inference on treatment effects in the CA-PPMx model can be interpreted as a stochastic
propensity score stratification approach.
See Section \ref{sec:SM-PS} in the supplementary materials for the details. 
\vskip-1ex
{We prefer the Bayesian model-based approach to avoid discarding unmatched patient records from the RWD from the analysis.
The two-step IS can be useful to validate the results obtained
by the model-based approach. 


\section{{Posterior Computation}}
\label{sec:posterior_computation}
\label{editor_comment}
We develop an efficient Gibbs sampler 
for posterior inference in the proposed CAM model for
non-conjugate mixture of lognormals on survival outcomes.
One potential complication arises from 
the varying dimension of $\bpion$ depending on the observed atoms in $\bXtw$.
Posterior simulation with variable dimensional parameters 
generally involves complicated trans-dimensional
Markov chain Monte Carlo \citep{rjmcmc}, often
resulting in poor mixing and computational inefficiencies. 
Our posterior sampling algorithm avoids such complications while
rigorously maintaining the architecture of the CAM model.
See Section \ref{SM subsec:MCMC} in the supplementary materials for more details.

\section{Simulation Study}
\label{sec:sim_study}
We first describe the simulation scenarios.

\noindent 
\underline{CAM scenario:}
We first consider a scenario where the covariates are generated from a CAM model.
In this scenario, we take the first $q=p-3$ 
covariates to be continuous and the remaining $3$ to be binary. 
For the trial arm $s=1$, we generate $\bX_{1,i, 1:q} \simiid \sum_{j=1}^{2} \pi_{1,j} \mn_{q}(\bmu_{j}, \sigma_{j}^{2} \bI_{q})$ 
and $X_{1,i,\ell} \simiid \Bern(\varrho_{1})$ for $\ell={q}{+1},\dots, p$.
For the RWD arm, $s=2$, we generate $\bX_{2,i,1:q} \simiid
\sum_{j=1}^{3} \pi_{2,j} \mn_{q}(\bmu_{j}, \sigma_{j}^{2} \bI_{q})$ and  
$X_{2,i,\ell} \simiid \sum_{j=1}^{2} \iota_{j} \Bern(\varrho_{j})$ for $\ell={q}{+1},\dots, p$ where $\iota_{1}=\pi_{2,1}+\pi_{2,2}$ and $\iota_{2}=\pi_{2,3}$.
We take $\iota_{1}\ll \iota_{2}$ ensuring that the $\bXtw$ population  is
substantially different from $\bXon$ in having more heterogeneity. 

\noindent \underline{MIX scenario:}
In this scenario, we generate 
$\bX_{s,i}\simiid \sum_{j=1}^{k} \pi_{s,j} \mn_{p}(\bmu_{s,j}, 0.05 \bI_ {p})$.
We take $\bmu_{1,j}=\bmu_{2,j}$ for all $j< k$ but set $\bmu_{1,k}\neq \bmu_{2,k}$ so that the atoms in the treatment arm are not exactly a subset of those in the RWD.
Given the typically larger heterogeneity of the RWD, this is not
a realistic scenario.
We include it to evaluate the approach under model misspecification.
Different weights attached to the atoms 
in the two populations result in significantly different
marginal densities. 

\noindent
\underline{Interaction scenario:} \label{interaction_scenario}
In this scenario, we resample from the historical GBM
database of 339 patients to create a future single-arm trial population. 
 Let $\bF(\bX)$ denote the (unknown) distribution of the covariates in the
database, and $Z$ be an indicator variable such that
$Z=s$ if $\bX$ is selected into arm $s$. 
That is, we sample $\bX_{1,i}$ i.i.d. from
$p(\bX_{1,i}) \propto \bF(\bX_{1,i})\cdot e(\bX_{1,i})$ 
and $\bX_{2,i}$ from 
$p(\bX_{2,i}) \propto \bF(\bX_{2,i})\cdot \{1-e(\bX_{2,i})\}$ 
where
$e(\bX)=\Pr(Z= 1 \mid \bX)$
is the PS of assignment to the treatment arm.
We set $e(\bX)$ to be a 
logistic regression with  {pairwise}\label{pairwise_interaction} interactions between some covariates.
We can sample $\bX_{s,i}$ by simple weighted resampling of the
historical database, without explicitly knowing $\bF(\cdot)$.
\smallskip

\noindent
\underline{Oracle scenario:} 
In this fourth and final scenario, we proceed as in the Interaction scenario but
now with $e(\bX)$ defined as a logistic regression with main effects of the true predictors only, 
i.e., as if an oracle had revealed the right predictors. 

\noindent \underline{Outcome model:}
\label{psrwe}
Under the CAM and MIX scenarios, we generate
$Y_{1,i}=\delta+ f(\bX_{1,i}) + \epsilon_{1,i} $ and
$Y_{2,i}= f(\bX_{2,i}) + \epsilon_{2,i} $
where $f(\cdot)$ is a nonlinear function;
in the Interaction and Oracle scenarios we generate
$Y_{1,i}=\delta+ \bX_{1,i}\trans \bbeta + \epsilon_{1,i} $ and
$Y_{2,i}= \bX_{2,i}\trans \bbeta + \epsilon_{2,i} $, where
$\epsilon_{s,i}\simiid \mn(0, 1)$ for 
$i=1,\dots,\ns$ and $s=1,2$, implying $\delta$ as the true treatment effect. 
We repeat the experiments for $\delta=-1,0,1,3$.

We repeat the simulations in the CAM and MIX scenarios for $p=10,20$,
$n_{1}=50,100,150$ and set $n_{2}=6 \times n_{1}$
for all setups, keeping the ratio of the population sizes consistent 
with the GBM application.
For each $(n_{1},p,\delta)$ combination in the CAM and MIX scenarios,
we perform 500 independent replications. 
Under the Interaction and Oracle scenarios, there are $p=11$ covariates and we use $n_{1}=49$.
To avoid reporting summaries that might just hinge on a lucky
choice of the logistic regression coefficients in $e(\cdot)$
and to remove one source of randomness unrelated to the methods
under comparison, 
we independently sample
different sets of regression coefficients (from a discrete
mixture distribution) for each of the 500 repeat simulations.
 {Further details are provided in Section \ref{sm subsec:simulation_truth} of the supplementary materials.} \label{ae:true_sim_details}

\vspace*{-2ex}
\paragraph*{Analyses:} 
We compare the CA-PPMx model
with the PS-integrated power prior
and composite likelihood approaches
\citep{propensity_power_prior,propensity_CL_first,propensity_CL} as implemented  in the
\texttt{psrwe} \texttt{R} package, 
and a two-step population \textit{matching} approach. 
We perform  {seven} different analyses
for each of the four scenarios to estimate the treatment effect
$\Delta_{\theta}$ which 
we define here as the difference in mean outcomes, i.e., $\delta$. 
The analyses are 
(i)  \textit{CA-PPMx}: The proposed \textit{CA-PPMx}
model of Section \ref{subsec:treatment_effect};
(ii) \textit{IS-LM}: The two-step IS approach introduced in Section
  \ref{subsec:IS_reweighing}.
We first sample a subpopulation of size
$n_{1}$ from $\bXtw$ following the importance resampling scheme
proposed in Section \ref{subsec:IS_reweighing} and subsequently
estimate the treatment effect between the subpopulation and the
treatment arm by fitting a linear model; 
(iii) and (iv) \textit{PP-Logistic} and \textit{PP-RF}: Two PS-based power prior approaches using 
logistic regression and random forest \citep{randomforest}, respectively;
(v) and (vi) \textit{CL-Logistic} and \textit{CL-RF}: Two composite likelihood based approaches with logistic and
random forest classifier based PSs, respectively; 
and finally, 
(vii) \textit{Matching}: A
distance based bipartite matching method designed to
match treatment and control groups in observational
studies \citep{optmatch_paper} and subsequently using a linear model
for detecting treatment effects as implemented in the \texttt{optmatch
  R} package. 

\vspace*{-2ex}
\paragraph*{Equivalence of populations:} 
In preparation for inference under the two-step IS approach, 
we generate equivalent populations using the density-free
importance resampling scheme discussed in Section
\ref{subsec:IS_reweighing} based on the fitted CAM model.
To formally test for equivalence of the adjusted datasets, 
we implement Step 5 in Algorithm \ref{algo:importance_resampling}. 
We first merge the datasets and then try to classify patients in
the merged sample as originally RWD or single-arm treatment cohort
($s=2$ vs. $s=1$ in our earlier notation).
For classification, we use  BART 
and report the boxplots of the area under the receiver operating
characteristic curve (AUC) of the classification accuracy across the
independent experiments for all simulation settings in Figure
\ref{fig:AUC_boxplots}.
For comparison, 
we also subsample randomly (instead of using the IS weights) and
report the AUCs in the same figure.
We refer to the two sampling strategies as \textit{IS} and
\textit{Random}, respectively. 

\begin{figure}[!ht]
	\centering
	\begin{subfigure}[b]{.75\linewidth}
		\includegraphics[width=\linewidth]{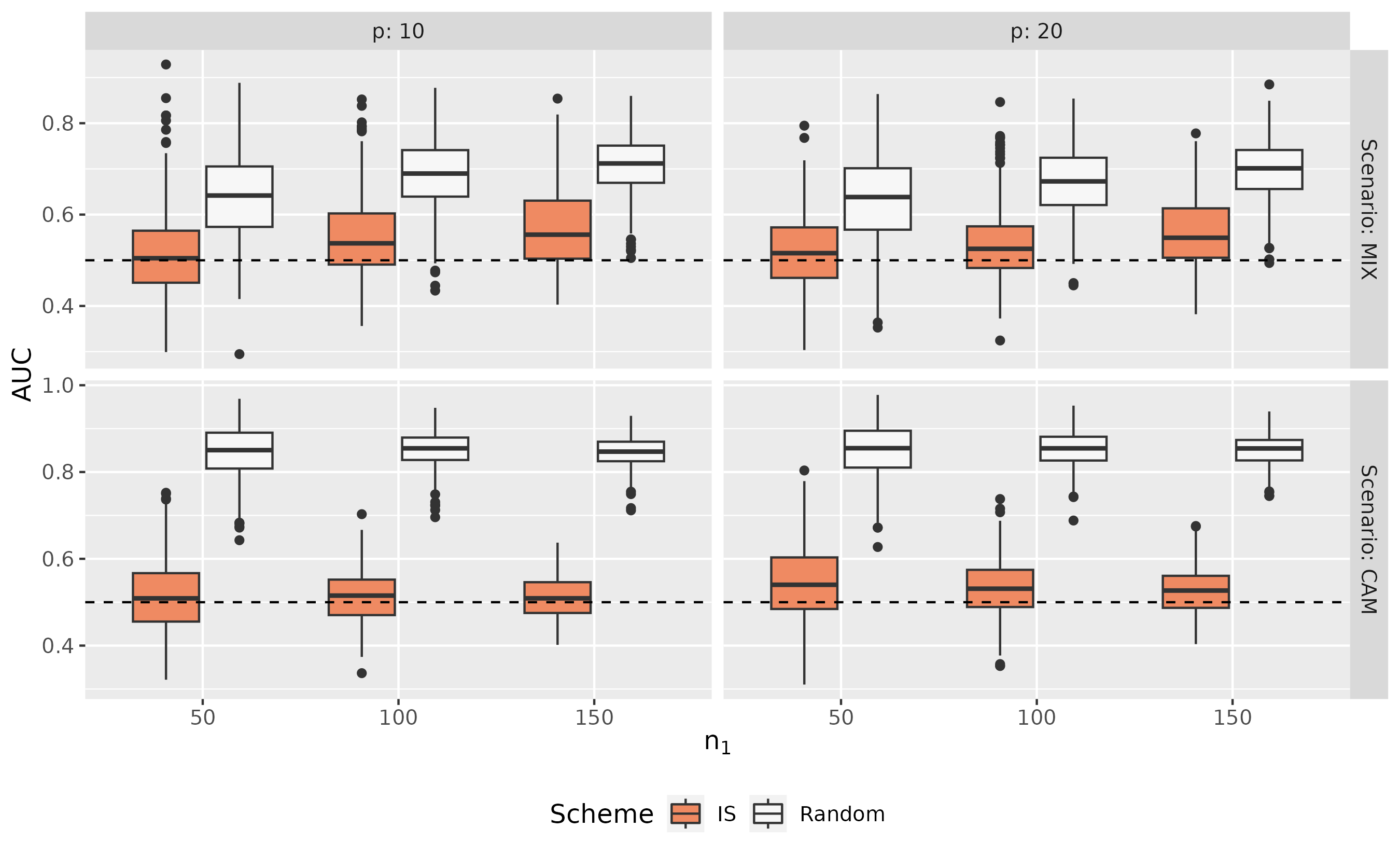}
		\vskip-1.5ex
		\caption{}
		\label{fig:CAM_MIX_auc}
	\end{subfigure}
	\begin{subfigure}[b]{.225\linewidth}
		\centering        
		\includegraphics[width=\linewidth,trim={.275in 0cm 0cm 0cm},clip]{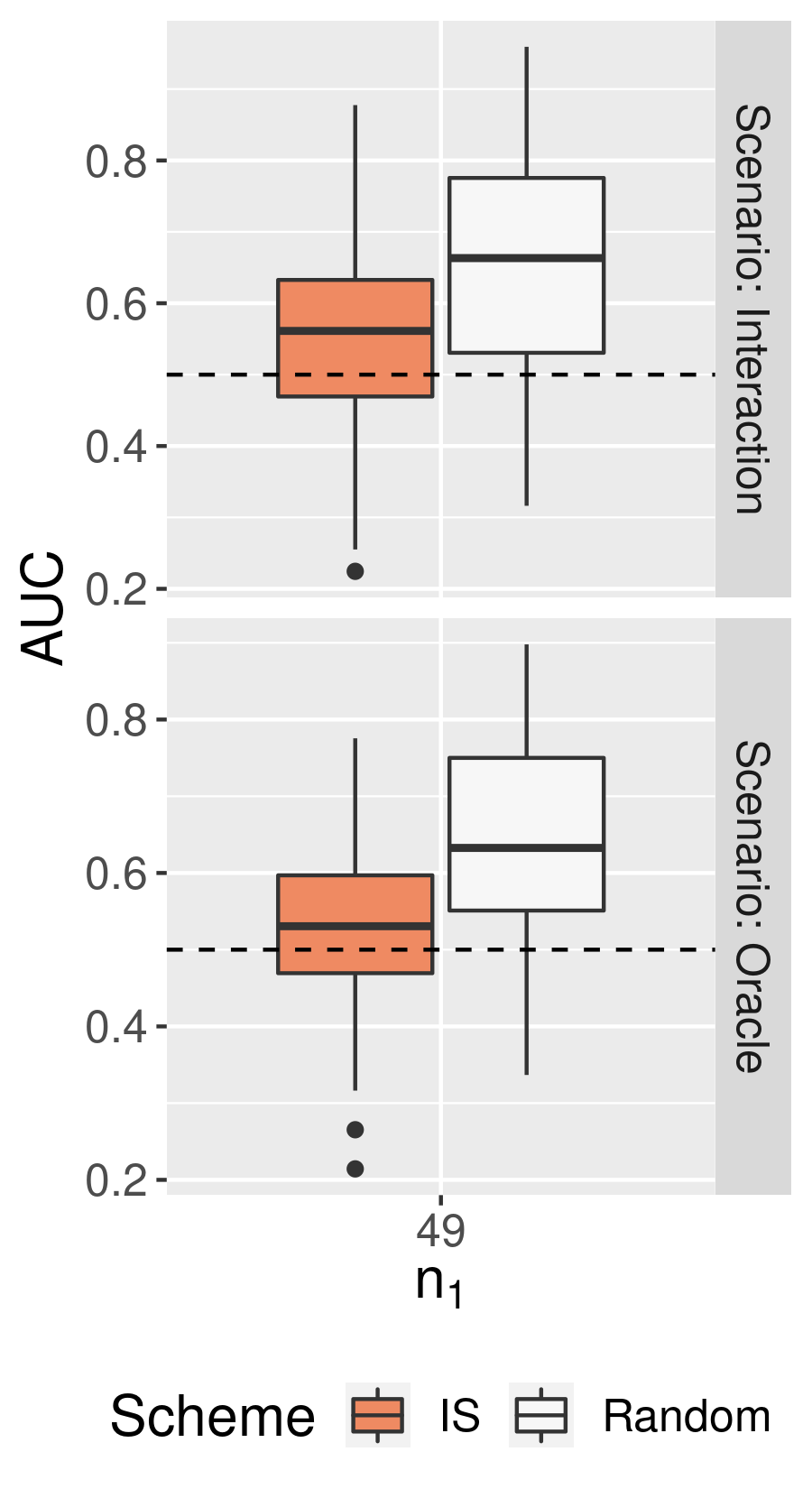}
		\vskip-1.5ex
		\caption{}
		\label{fig:GBM_auc}
	\end{subfigure}
	\vskip-1.25ex
	\caption{Boxplots of the area under the receiver operating
		characteristic curve (AUC) of the classification accuracy for a merged
		dataset consisting of $\bXon$ and the subsampled $\bXtw$ using BART,
		and trying to classify into originally $\bXon$ versus $\bXtw$.  Two
		subsampling schemes are used - the importance sampling (IS) strategy
		in Section \ref{subsec:IS_reweighing} and simple random resampling.
		Here AUCs close to 0.5 imply near equivalence between the
		populations.
		Panel (a) shows the AUCs in the CAM and MIX scenarios across
		different sample sizes and number of covariates; 
		and (b) shows the AUCs under the Interaction and Oracle scenarios.
	}
	\label{fig:AUC_boxplots}  
\end{figure}

In Figure \ref{fig:CAM_MIX_auc}, the Random resampling strategy
yields high AUC, indicating that the two populations are
substantially different and adjustment in the RWD population is
necessary before using it as synthetic control.  For both, the CAM and MIX
scenarios, the performance of the IS scheme improves with increasing sample size. 
 This is expected as for small sample sizes $\bXtw$ is lacking enough data
to produce a subsample equivalent to $\bXon$.
AUC values close to 1 under the CAM scenario imply 
that the true populations are indeed very different in this
case.
In contrast, the AUC values close to 0.5 under the IS scheme
indicate near equivalence after adjustment. 
In both scenarios, AUC is substantially reduced under the IS
resampling scheme, implying that the proposed CAM model indeed adjusts for
the lack of randomization.

Results under the last two scenarios are shown in Figure
\ref{fig:GBM_auc}.
Recall that in both scenarios the simulation truth is not based on the
CAM model. Still, the fit under the proposed CAM model achieves 
near perfect adjustment as shown in the figure. 
\begin{figure}[!t]
	\begin{subfigure}[b]{\linewidth}
		\includegraphics[width=\linewidth]{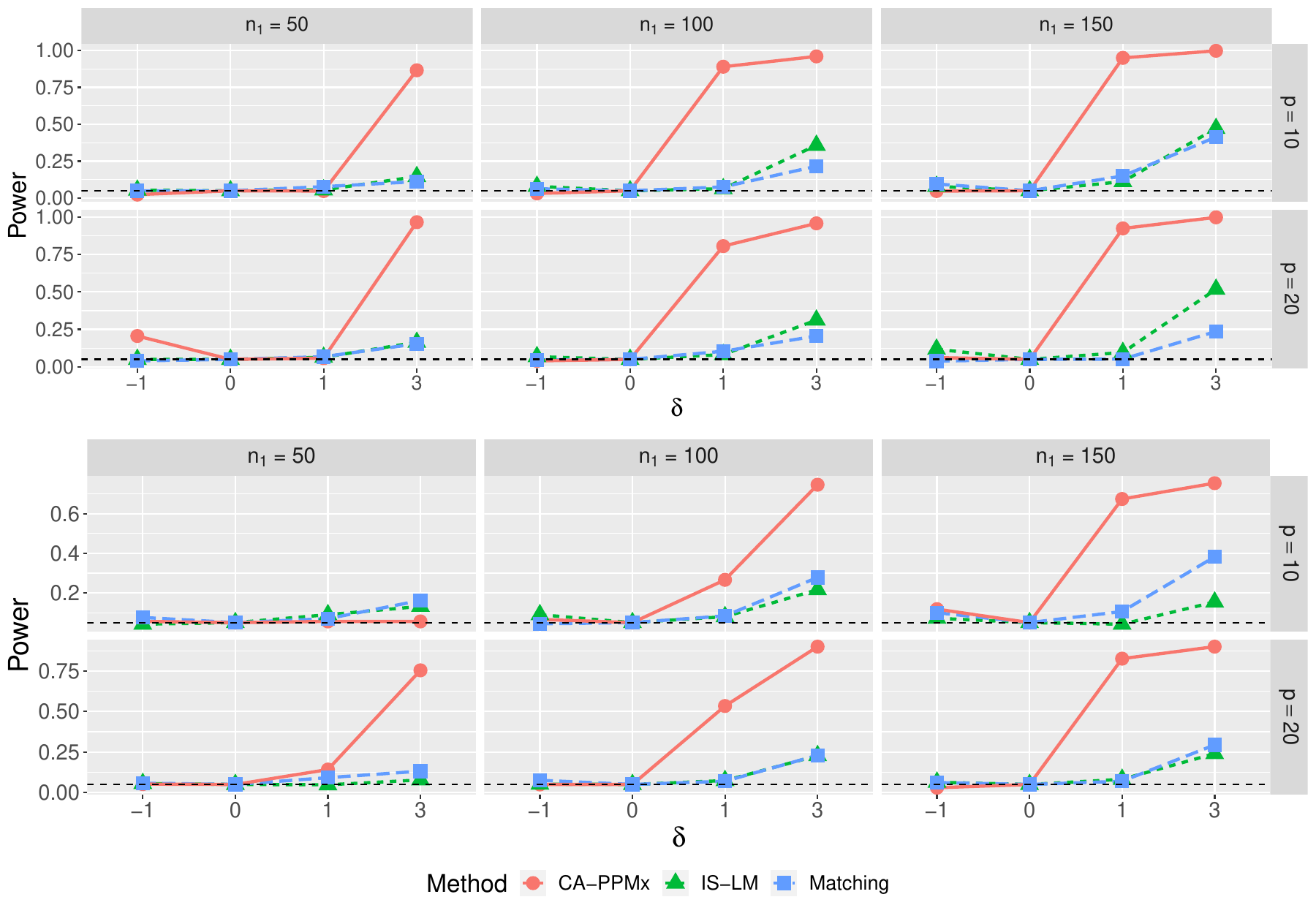}
		\caption{CAM (top) and MIX (bottom) scenarios.}
		\label{fig:mixture}
	\end{subfigure}
	\begin{subfigure}[b]{\linewidth}
		\includegraphics[width=\linewidth]{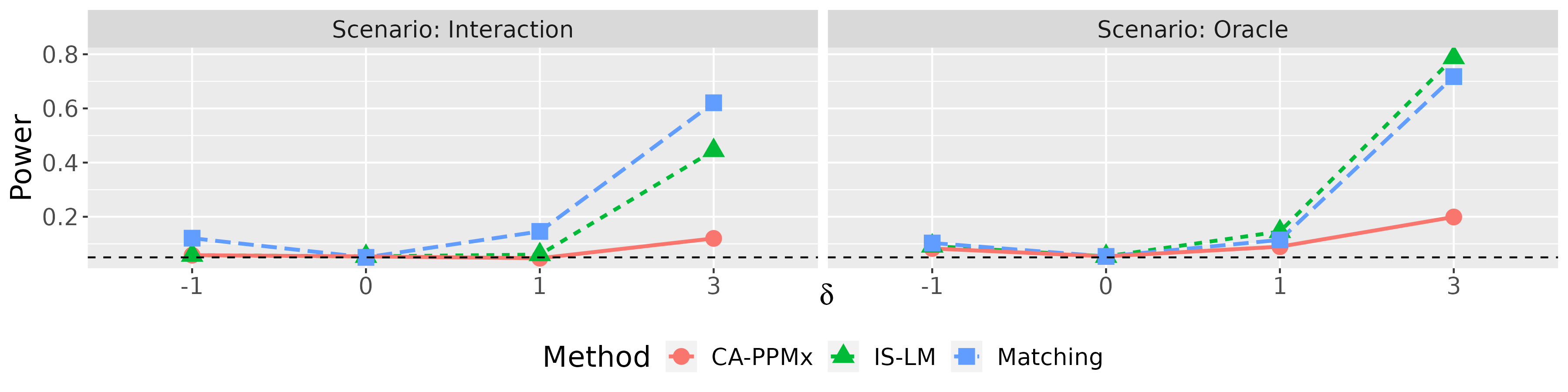}
		\caption{Interaction (left) and oracle (right) scenarios.}
		\label{fig:gbm_bias}
	\end{subfigure}
	\caption{Power of detecting treatment effects in different simulation setups for a $5\%$ level of significance test:		
		{Seven methods are used to estimate the effects where
			IS-LM and CA-PPMx are based on the proposed CAM model. 
			Panel (a) corresponds to the CAM (top) and MIX (bottom) scenarios.
			Panel (b) shows results under the Interaction (left side) and
			the Oracle (right side) scenarios.} }
	\label{fig:Biases}
\end{figure}
\vspace*{-2ex}
\paragraph*{Inference on treatment effects:}\label{referee_complain}
 {In each simulation setup, we test $H_{0}:\delta=0$ versus $H_{1}:\delta \neq 0$ at $5\%$ level of significance.
We elaborate the testing procedure in Section \ref{sm subsec:sim_testing_procedure} of the supplementary materials.}
We report power in Figure \ref{fig:Biases}, with detailed numerical
results appearing in Tables 
\ref{tab: sim_sim}, \ref{tab: sim_gbm} and \ref{sm tab:ps_sim} in the
supplementary materials.
Under the PS-based approaches, the power remains below 15\%
across all scenarios (not shown in the figure). 
Fully model-based nonparametric CA-PPMx has higher power than IS-LM
and Matching when the true response models are non-linear.
In contrast, the IS-LM and Matching perform comparably
and have higher power than the CA-PPMx approach in Interaction and
Oracle scenarios where the true response model is linear, 
but are susceptible to model misspecification as reflected in the CAM
and MIX scenarios.  \label{pg:AE_specific}
This is because IS-LM and Matching assume a linear model for
the outcome, which happens to match the simulation truth in the
Interaction and Oracle scenarios.
Except under the PS-based approaches, power increases with increasing
sample size, indicating that PS-based methods may require a
much larger population size in the RWD to adjust for the lack of
randomization.

 \label{pg:editor}

\section{Application in Glioblastoma}
\label{sec:GBM}

We return to the motivating case study of creating  
a synthetic control for a hypothetical upcoming single-arm GBM trial. The
sample size of the trial is $n_{1}=49$, similar to past trials
\citep{samplesize_gbm}. 
The endpoint of interest is overall survival (OS).
We evaluate the operating characteristics of the proposed design by
simulating $L=100$ trial replicates.
See \citet[Section 2.5.4]{BerryAl:10}
for a discussion of the role of frequentist operating
characteristics in Bayesian inference.
To create treatment arm data, we first select covariates
$\bX_{1,i}$ by randomly selecting patients from the historical
database.
To generate \label{pg:logistic_selection} a realistic non-equivalent patient population, 
we select not uniformly but using a logistic regression on the covariates
(as described in the Interaction scenario in Section \ref{sec:sim_study}). 
The treatment effect is quantified by the hazard ratio (HR) between
the treatment arm and the (synthetic) control arm, with the null and
alternative hypotheses  
$H_{0}: \HR = 1$ vs. $H_{1}: \HR \leq 0.6$ at 50 weeks. 
The HR of 0.6 was suggested by clinical collaborators as a meaningful clinical target.

We show results under two alternative scenarios
(a) $H_{0}$:  no treatment effect (i.e., $\HR = 1$),
created by keeping the OS for the patients in the treatment arm as 
originally observed in the historical database (since the patients received treatments with similar efficacy); and
(b) $H_{1}$: there is a clinically meaningful treatment effect.
We created $H_{1}$ by 
increasing the OS of patients in the treatment arm with an increment that
would correspond to a HR of 0.6 under an exponential model.

We apply three methods to make inference on the treatment effect: 
(i) \textit{IS-based two-step procedure}: Here we first create equivalent patient populations using Algorithm \ref{algo:importance_resampling}
and then proceed with inference on the
treatment effect
as if patients were randomly assigned to treatment and control;
(ii) \textit{Matching-based two-step procedure}: Operationally similar to (i) 
but now the Matching method discussed in Section \ref{sec:sim_study} is used to create equivalent patient populations; and
(iii) \textit{Model-based inference}: The extension of the CAM model to
include the outcomes $Y_{s,i}$, as described in Section  \ref{subsec:treatment_effect}.

\vspace*{-1.5ex}
\paragraph*{(i) IS-based two-step procedure:}
In preparation for inference, we start with a test 
for equivalence of the subsampled population in each of the
$L=100$ repeat simulations. 
Figure \ref{fig:density_adjustment} plots the relative
frequencies for each covariate in the treatment arm (red) and 
in the synthetic control arm constructed from the RWD using:
(a) the IS sampling following Algorithm
\ref{algo:importance_resampling} (green) and 
(b) random sampling (blue). 
Very different frequencies in the two arms under random resampling
indicate significant differences in the covariate distributions 
between the treatment and the control arms. 
For most covariates, the differences are however greatly reduced by the IS scheme. 
\begin{figure}[h] \centering
	\includegraphics[width=\linewidth]{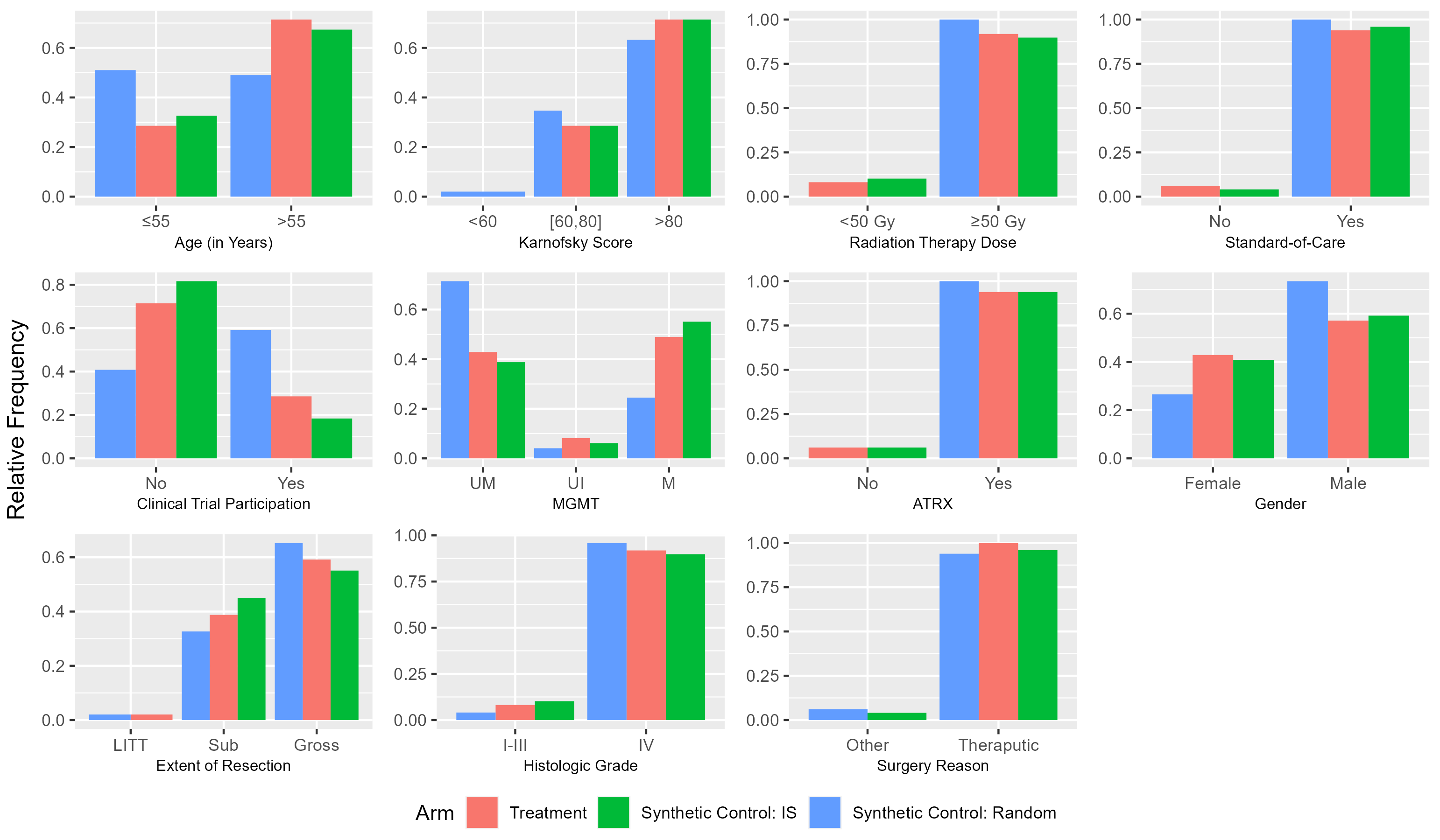} \vskip-1ex
	\caption{Covariate distributions before and afte adjustments. 
		The red bars show the distributions of the covariates in the treatment arm.
		The green and blue bars show the distributions of the covariates in the synthetic control arms formed using the IS and random  resampling schemes, respectively.}
	\label{fig:density_adjustment}
\end{figure}

\begin{figure}[!t]
	\centering
	\begin{subfigure}[b]{.3\linewidth} \centering
		\includegraphics[width=.9\linewidth]{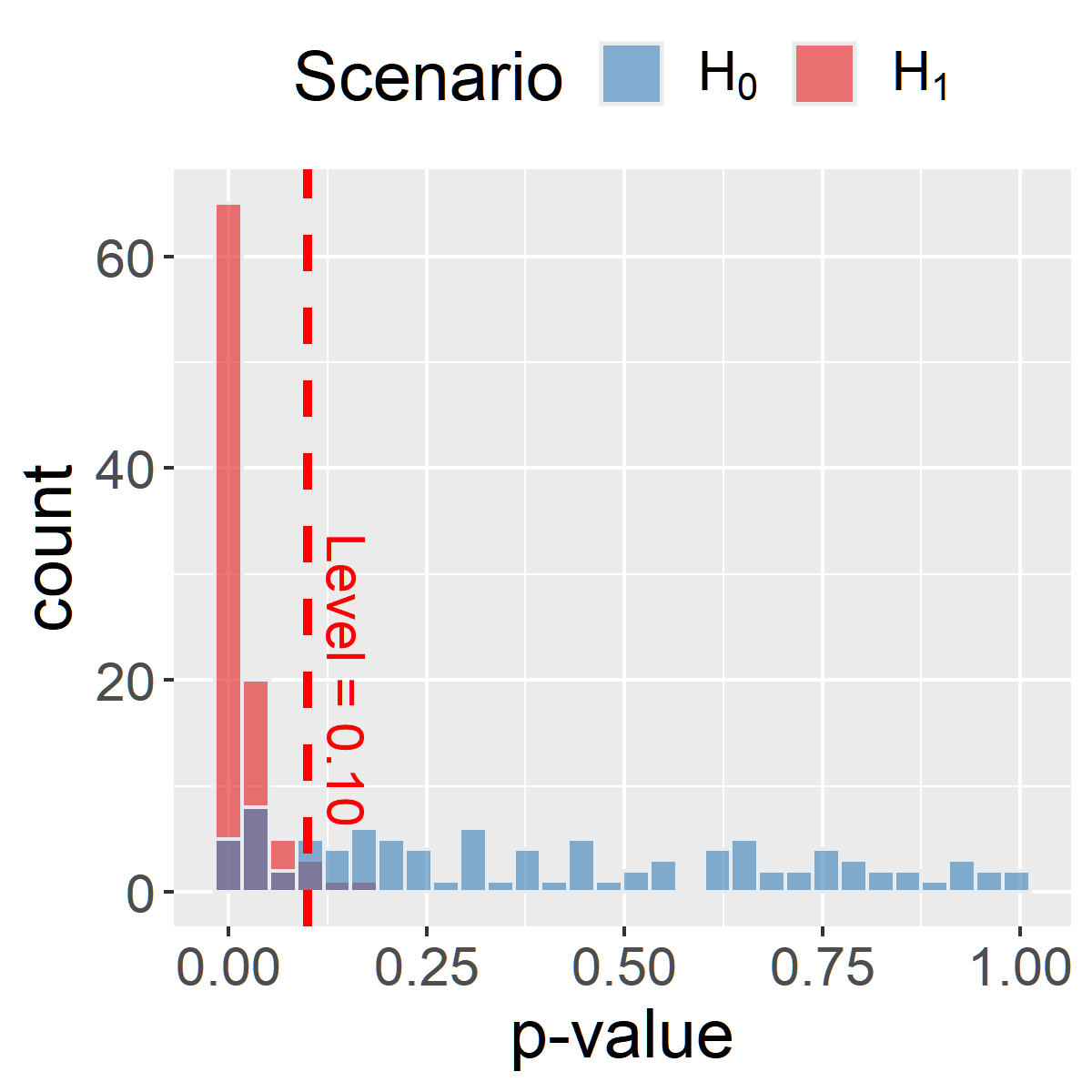} 
		\includegraphics[trim={0cm 0cm 0cm 1.5cm}, clip,width=.9\linewidth]{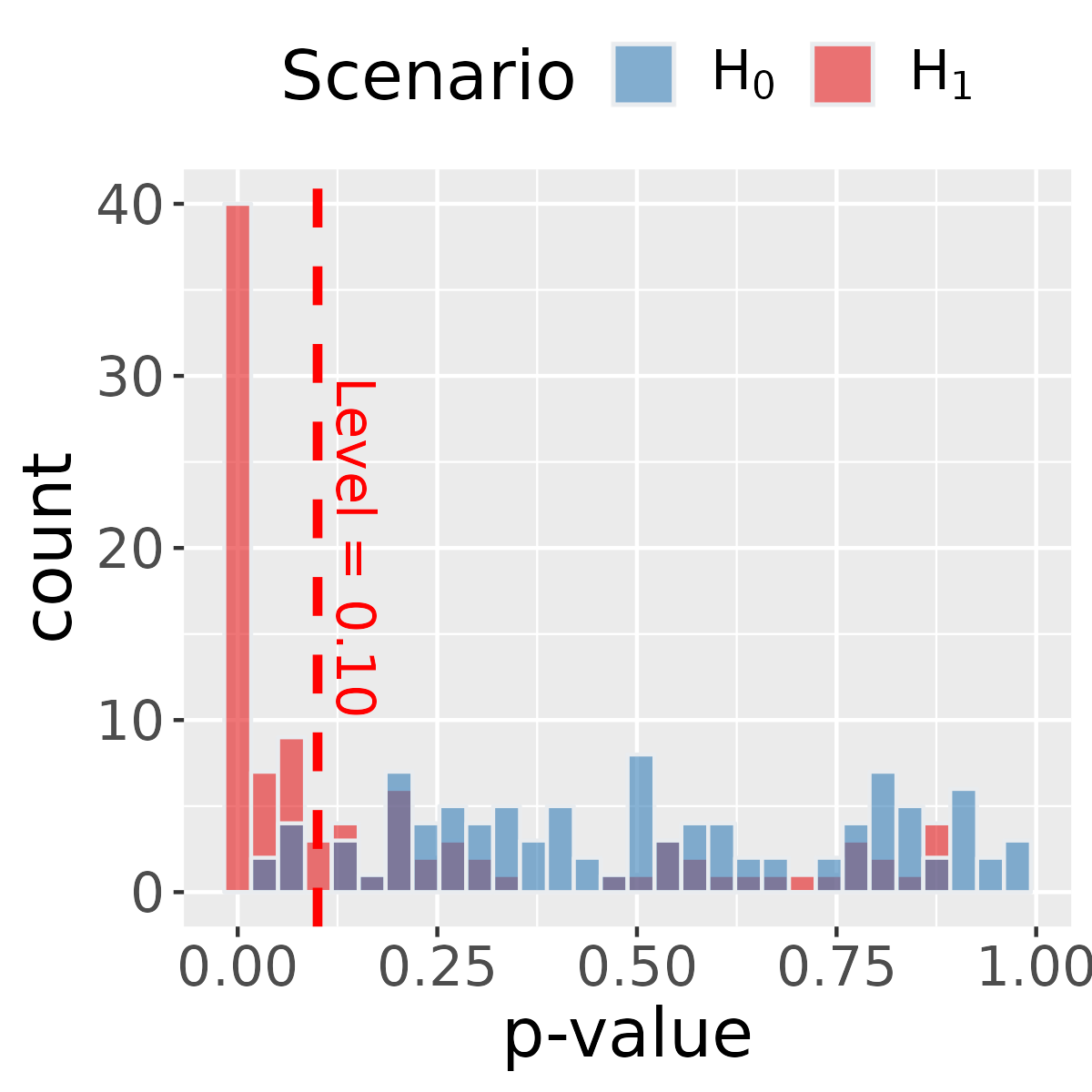}\vskip-1ex
		\caption{$p$-values under the Cox PH model.}
		\label{fig:pvals_GBM}
	\end{subfigure}	
	\begin{subfigure}[b]{.68\linewidth} \centering
		\includegraphics[width=.47\linewidth]{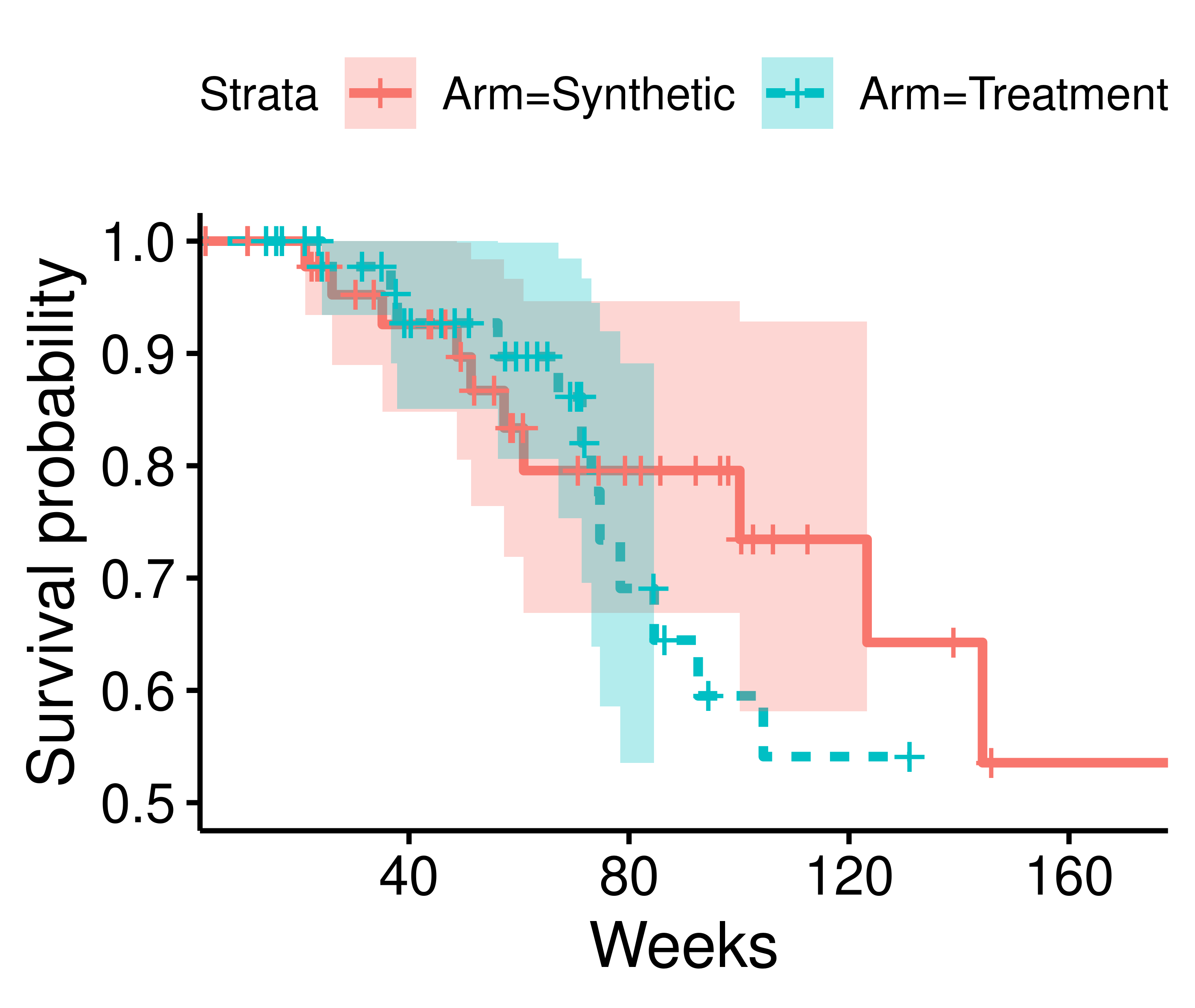}
		\includegraphics[width=.47\linewidth]{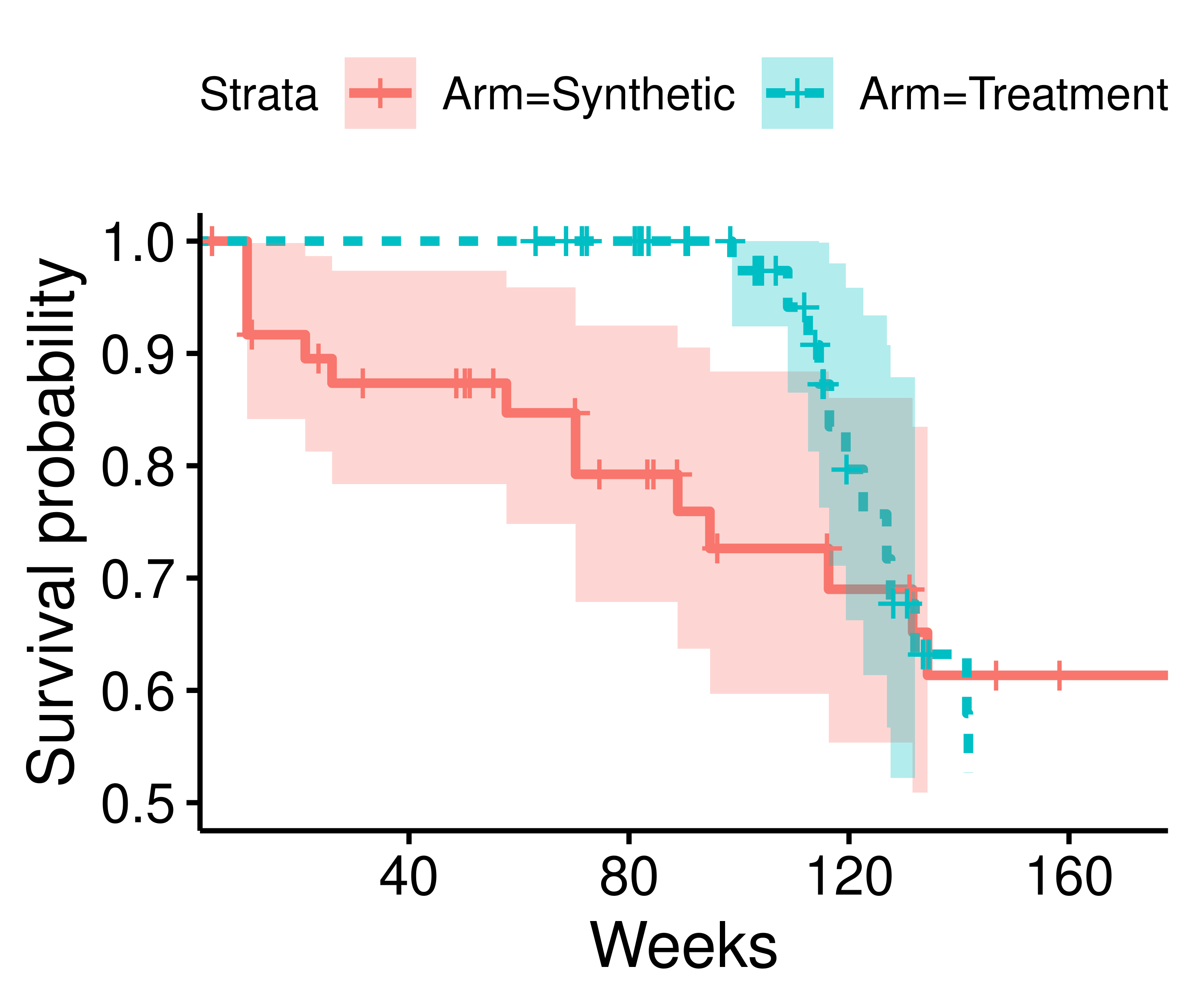} 
		\includegraphics[trim={0cm 0cm 0cm 1.5cm}, clip,width=.47\linewidth] {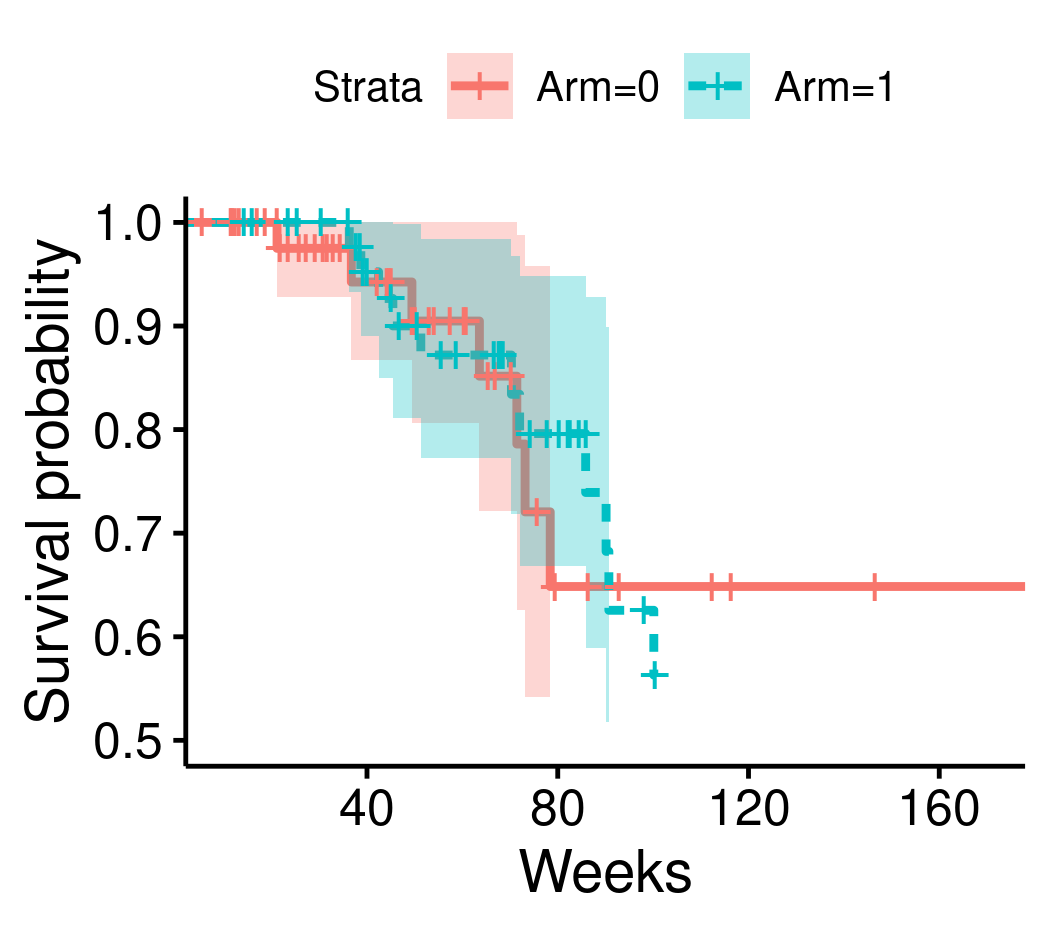}
		\includegraphics[trim={0cm 0cm 0cm 1.5cm}, clip,width=.47\linewidth] {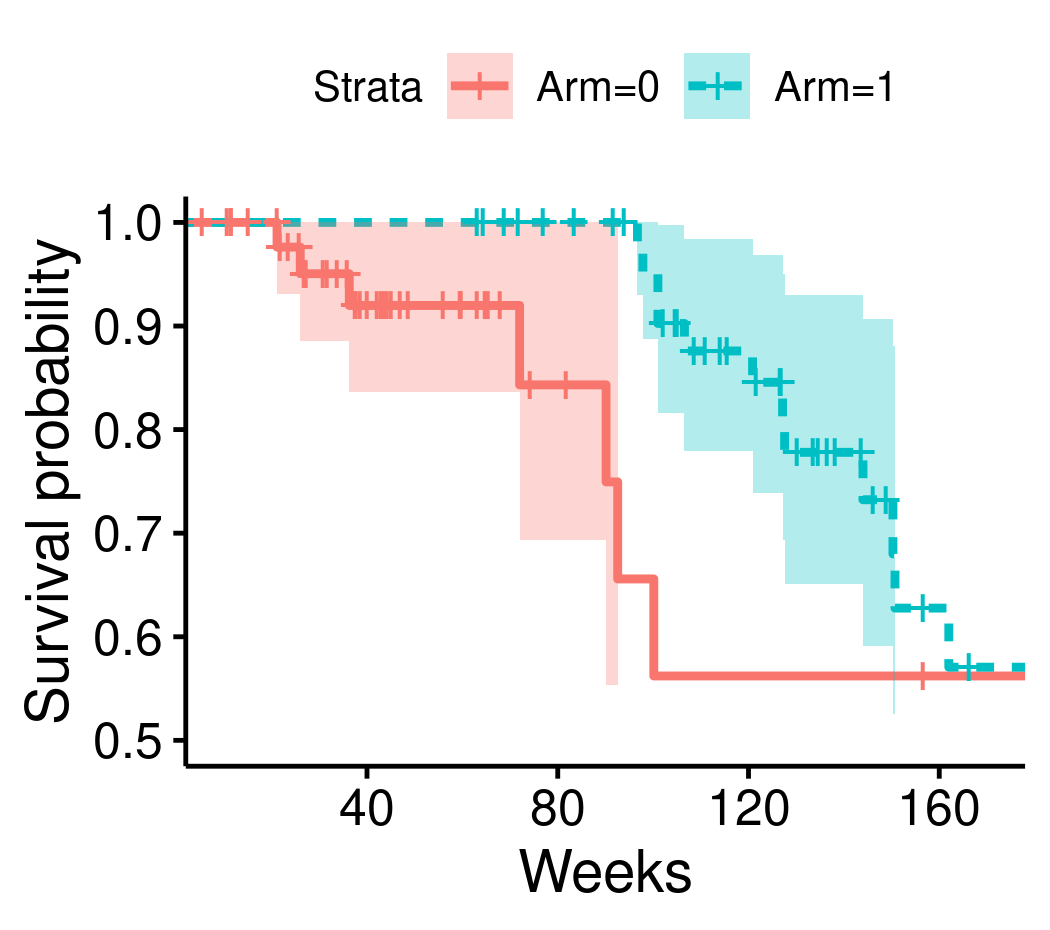} 
		\vskip-1ex
		\caption{Kaplan-Meier curves under $H_{0}$ (left) and $H_{1}$ (right) scenarios.}
		\label{fig:KM}
	\end{subfigure}		
	\caption{Inference under treatment effects under the two-step procedures:
		Panel (a) shows histograms of the $p$-values
		corresponding to a logrank test under the Cox PH model comparing the survival curves
		between the treatment arms; 
		Panel (b) shows the Kaplan-Meier curves and pointwise confidence intervals for treatment (blue) and control (red) arms under scenarios $H_{0}$ (left) and $H_{1}$, respectively.
	The top and bottom panels of (a) and (b) show the results corresponding to the IS and Matching based approaches, respectively.}
	\label{fig:two_step_GBM}
\end{figure}

Once we establish equivalence of the patient populations, we proceed
with inference for the treatment effect.  
We use a Cox proportional hazard (PH) model \citep{coxph1972} and the
logrank test \citep{logrank_test} to compare the survival functions.
The top panel of Figure \ref{fig:pvals_GBM} shows inference summaries over the $L=100$
repetitions. 
The figure shows the histograms of
$p$-values  
under $H_{0}$  (blue) and $H_{1}$ (red).  
Under $H_{0}$, $p$-values are almost uniformly
spread out over $[0,1]$.  
In contrast, under $H_{1}$, the histogram of $p$-values over repeat simulations is peaked close to zero.

Finally, we identify representative simulations from the $L$ repetitions under each of the two scenarios by finding the instance with $p$-value closest to the median of the respective histograms.  
For these two representatives, we show Kaplan-Meier (KM) survival
curves in the top panels of Figure \ref{fig:KM},
respectively.   
We observe that the survival curves in the
two arms are quite alike with wide confidence intervals under the $H_{0}$ scenario, 
whereas significant improvements in the survival times can
be observed for the treatment arm for the first 80 weeks under the $H_1$ scenario.


\vspace*{-1.5ex}
\paragraph*{(ii) Matching-based two-step procedure:} We use the
Matching procedure to create a synthetic control and then
follow the same routine of (i) for inference on treatment effects.
The results are provided in the bottom panels of Figures
\ref{fig:pvals_GBM} and \ref{fig:KM}.
The distribution of the $p$-values under the $H_{1}$ scenario is less
peaked around 0 compared to the IS-based procedure.  This is also
reflected in the representative KM plot under $H_{1}$ in having a much
wider confidence interval around the survival curve possibly
indicating the IS-based approach is doing better than Matching in
creating equivalent populations.

\paragraph*{(iii) Model-based inference:}
As it is not straightforward to account for the uncertainty in creating the synthetic control in the aforementioned two-step procedures, we consider a fully model-based approach. 
For inference on treatment effects, we first assess
goodness-of-fit of the CA-PPMx model
(see Section \ref{subsec:goodnessoffit} in the supplementary materials for details).
Quantile-quantile plots for the two scenarios are shown in Figure
\ref{fig:qqplot_GBM}.
Near diagonal lines indicate no evidence for a lack of fit. 
\begin{figure}[!ht]
	\centering  
	\begin{subfigure}[b]{\linewidth} \centering
		\includegraphics[width=.7\textwidth]{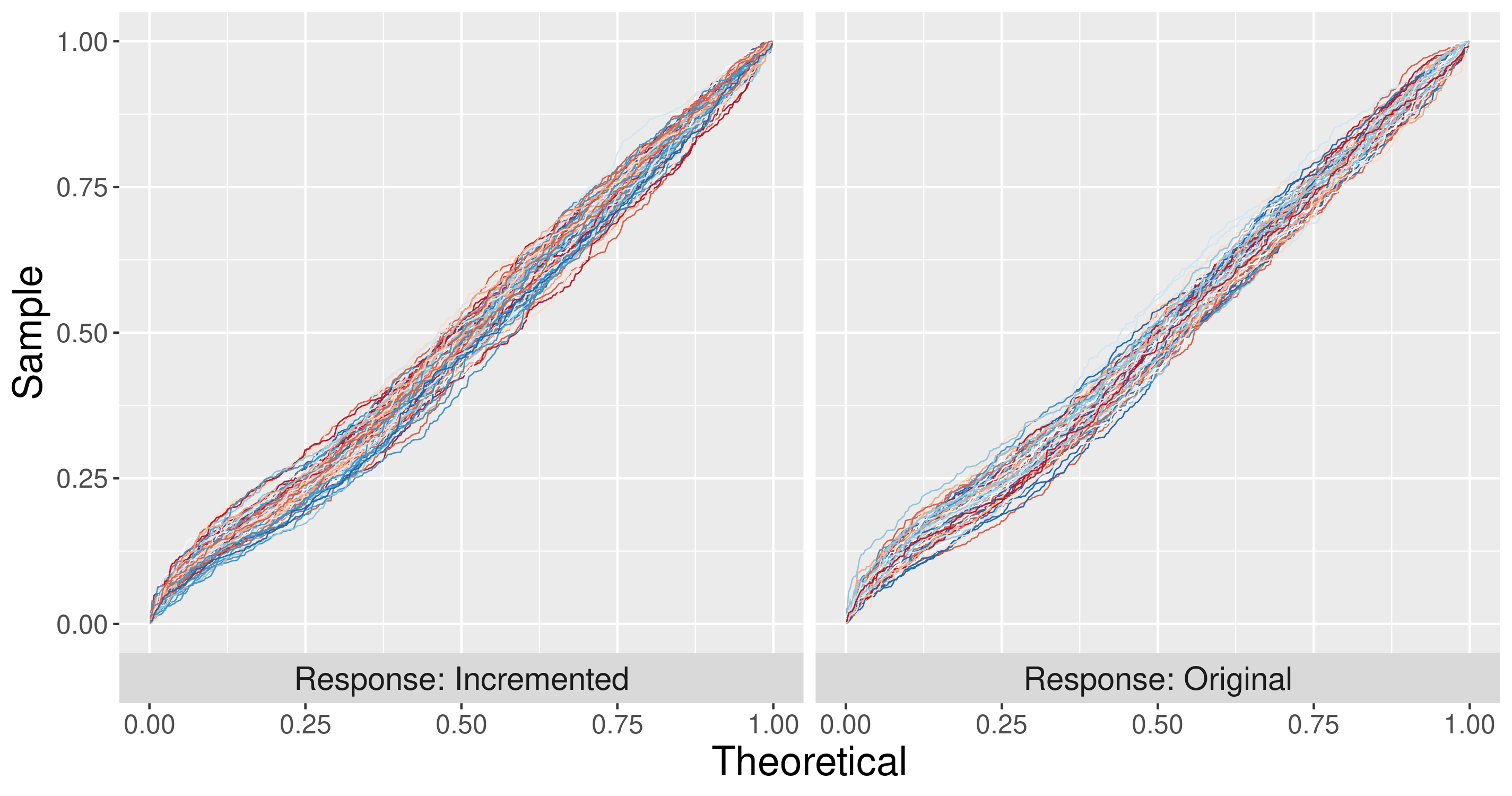} \vskip-.5ex
		\caption{QQ plots for goodness of fit where the $y=x$ line indicates perfect fit.}

		\label{fig:qqplot_GBM}
	\end{subfigure}
\vskip.5ex
	\begin{subfigure}[b]{\linewidth} \centering
		\includegraphics[width=.8\textwidth]{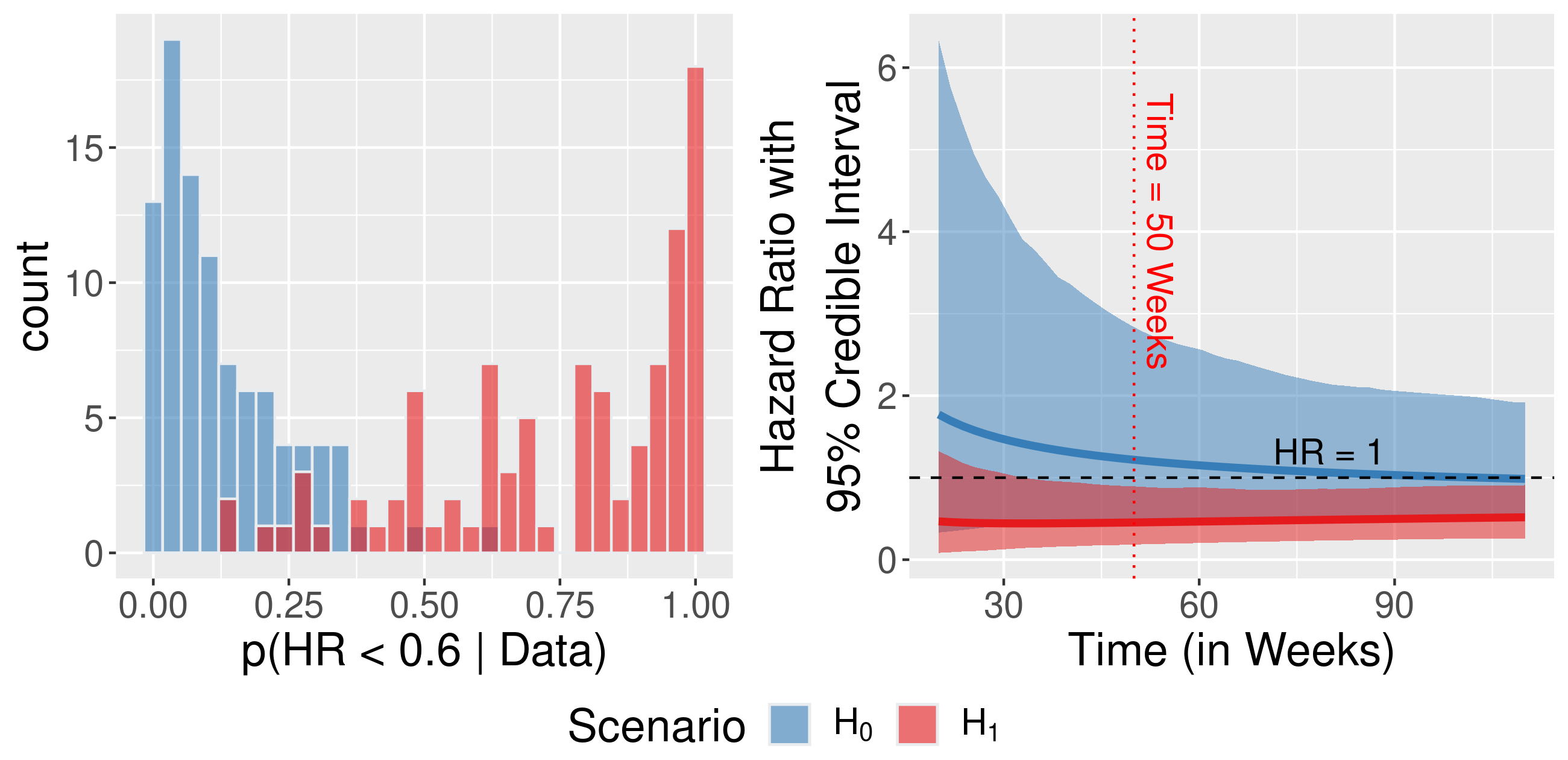} \vskip-.5ex
		\caption{$p(H_{1}\mid \Data)$ (left) and hazard ratio with 95\% posterior credible interval (right). }
		\label{fig:HR_GBM}
	\end{subfigure}  
  \caption{Inference under treatment effects under the model-based approach:
    Panel (a) shows
    quantile-quantile plots to assess model fit from Section \ref{subsec:goodnessoffit} in the supplementary materials;
    the left plot of panel (b) shows posterior
    probabilities $p(\HR < 0.6 \mid \Data)$ under repeat simulations, 
    and  on the right posterior estimated hazard ratios for OS with pointwise 95\%  credible regions are shown under $H_{0}$ and $H_{1}$.} 
  \label{fig:QQ}  
\end{figure}
We then evaluate the posterior probability $p_{\ell} \equiv
p(\HR<0.6 \mid \Data)$ (with $\ell$ indexing the $L=100$ repeat
simulations) at $t=50$ weeks under the proposed model.
The left panel of Figure \ref{fig:HR_GBM} shows histograms of 
$p_{\ell}$
 under $H_{0}$ (in blue) and under $H_{1}$ (in red). 
As desired, the posterior probabilities are clustered near $0$ under $H_{0}$, but are peaked near $1$ under $H_{1}$. 

Finally, we identify a representative simulation again by
selecting the repeat simulation $\ell$ with posterior probability $p_\ell$
closest to the median of the respective histograms under each of the
two scenarios.
For each of the two scenarios, 
we plot the posterior estimated hazard ratios (blue and
red for simulation under $H_{0}$ and $H_{1}$, respectively), together with
pointwise $95\%$ posterior credible intervals in the right panel of Figure \ref{fig:HR_GBM}.
Under $H_{0}$ (blue), HR is almost equal to 1 with wide credible
intervals, whereas under $H_{1}$ (red),
HR is significantly below 1 with high posterior probability.
The median (over the $L$ simulations) posterior
probabilities $p_{{\ell}}(\HR < 0.6 \mid \Data)$ 
are 0.08 and 0.98 under $H_{0}$ and $H_{1}$, respectively.

\section{Discussion}
\label{sec:discussion}
With a long term goal of setting up a platform for future
single-arm early-phase clinical trials in GBM, where new patients only
receive experimental therapies, in this article we developed a Bayesian
nonparametric approach for creating synthetic controls from RWD. 
We introduced a Bayesian CAM model that clusters
covariates with similar values across different treatment arms.

The flexibility of the CAM model makes it easily generalizable to
other problems, e.g., to create two synthetic treatment
arms to compare 
two treatments based on RWD from electronic health records.

Another direction for extensions could build on extracting propensity
scores as inference summaries under the CA-PPMx model. This is briefly
discussed in Section \ref{sec:SM-PS} of the supplementary materials.

A limitation of the current model is {scalability to high-dimensional covariates}. 
In the GBM application, we rely on 11 clinically important
categorical covariates that are commonly considered as prognostic factors in
GBM treatments.  
However, in many applications candidate covariates can be 
high-dimensional.
Implicit in the current construction is the assumption that the
recorded covariates are clinically relevant for the disease or
condition under consideration, 
and the approach may not be appropriate when large numbers of unscreened candidate covariates are used.
Recent advances in Bayesian model-based clustering by
\citet{chandra_lamb} could be useful to construct high-dimensional
generalizations.

\baselineskip=14pt
\subsection*{Supplementary Materials}
Supplementary materials include 
additional discussion of the motivating dataset,
a brief review on the PPMx,
detailed discussion of the graphical goodness-of-fit test for the
regression model, 
an alternative interpretation of our model-based inference approach,
choices of hyperparameters,
details of the posterior simulation scheme,
additional simulation studies and associated details, and MCMC convergence diagnostics.
\texttt{C++} and \texttt{R} programs implementing the methods developed in this article and \texttt{R Markdown} files with instructions are provided in a separately attached \texttt{Codes.zip} folder.

\subsection*{Acknowledgments}
We thank the Editor, Dr. Michael Stein, 
an anonymous Associate Editor and 
two anonymous referees for comments that led to significant improvements in the clarity and presentation of the paper.


\baselineskip=14pt
\bibliographystyle{natbib}
\bibliography{refs}


\clearpage\pagebreak\newpage
\newgeometry{textheight=9in, textwidth=6.5in,}
\pagestyle{fancy}
\fancyhf{}
\rhead{\bfseries\thepage}
\lhead{\bfseries SUPPLEMENTARY MATERIALS}

\baselineskip 20pt
\begin{center}
{\LARGE{Supplementary Materials for\\}} 
\papertitle
\end{center}

\setcounter{equation}{0}
\setcounter{page}{1}
\setcounter{table}{1}
\setcounter{figure}{0}
\setcounter{section}{0}
\numberwithin{table}{section}
\renewcommand{\theequation}{S.\arabic{equation}}
\renewcommand{\thesubsection}{S.\arabic{section}.\arabic{subsection}}
\renewcommand{\thesection}{S.\arabic{section}}
\renewcommand{\thepage}{S.\arabic{page}}
\renewcommand{\thetable}{S.\arabic{table}}
\renewcommand{\thefigure}{S.\arabic{figure}}
\baselineskip=15pt

\vspace{0cm}

\authors

\vskip 10mm
Supplementary materials present
additional discussion on the motivating dataset,
a brief review on the PPMx,
detailed discussion on the graphical goodness-of-fit test of our regression model,
an alternative interpretation of our model-based inference approach,
choices of hyperparameters,
detailed posterior simulation scheme,
additional simulation studies and associated details, and MCMC convergence diagnostics.

\newpage
\baselineskip=16pt

\section{Historical Data and Potential Future Trial}
Figure \ref{fig:GBM_freq_plots} 
shows summaries for the covariates described in Section \ref{sec:GBM_description} in the historical database and a potential future single-arm trial. 
Marginal frequencies for each of the covariates are plotted clearly highlighting the differences between the two populations.

	\begin{figure}[!ht]
		\centering
		\includegraphics[width=\linewidth]{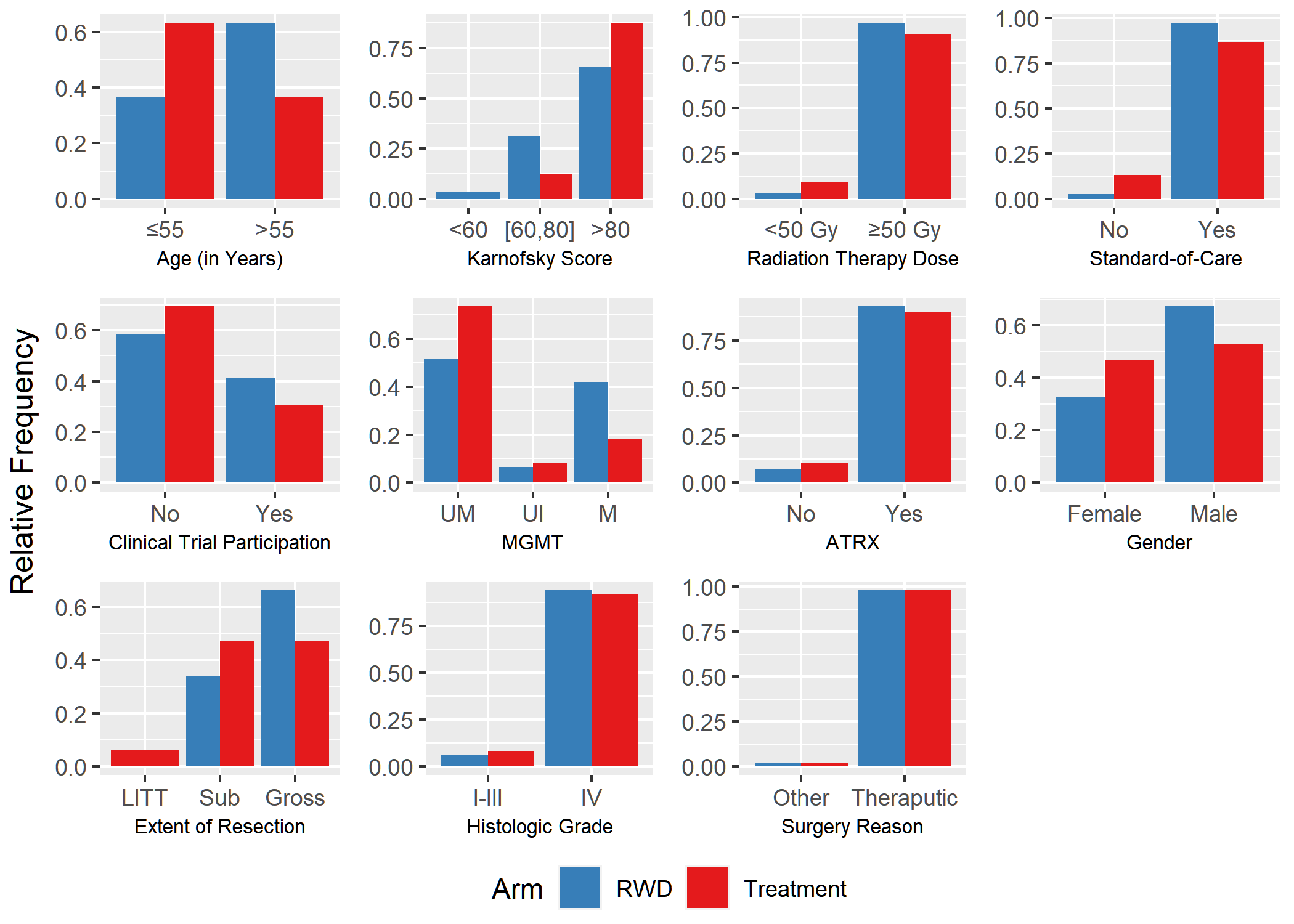}
		\caption{Relative frequency plots of the covariates in the two
			treatment arms.} 
		\label{fig:GBM_freq_plots}	
	\end{figure}

\newcommand{\knn}{k_{n}}

\section{Product Partition Model with Regression (PPMx)} 
\label{subsec:ppmx}
Let $i=1,\dots,n$ be the indices of $n$ data points. For the $i\th$
unit (patient, in our case), the data consists of covariates
$\bX_{i}=\left(X_{i,1},\dots, X_{i,p} \right)\trans$ and response
variables $\bY_{i}$.  Let $\bX=\left\{\bX_{1},\dots, \bX_{n} \right\}$
and $\bY=\left\{ \bY_{1},\dots, \bY_{n} \right\}$ be the complete set
of covariates and responses respectively.  Let
$\brhon=\left\{S_{1},\dots,S_{\knn}\right\}$ denote a partition of the
$n$ units into $\knn$ subsets, where $1\leq\knn\leq n$.  An equivalent
representation of $\brhon$ introduces cluster membership indicators
$c_{i}=j$ if and only if $i\in S_{j}$.
Let $\bXs_{j}$ be the covariates corresponding to the samples in
$S_{j}$.  In the PPMx, it is believed that data points with more
similar covariate values are more likely to \textit{a priori} be in
the same cluster and the corresponding responses are also very
similar.  The prior consists of two functions -
(i) a cohesion function denoted by $c(S_{j} \mid \alpha )\geq 0$ for
$S_{j} \subset \{1, \dots , n\}$ associated with a hyper-parameter
$\alpha$ discerning the prior belief of co-clustering of the elements
of $S_{j}$, and
(ii) a similarity function denoted by $\bg(\bXs_{j} \mid \bxi)$ and
parametrized by $\bxi$, formalizing the `closeness' of the $\bX_{i}$’s
in the cluster $S_{j}$ by producing larger values of $\bg(\bXs_{j}
\mid \bxi)$ for $\bX_{i}$’s that are more similar.
Using the similarity and cohesion functions, the PPMx assumes
\vskip -3.5ex
\begin{equation}
	\Pi\left(\brhon \mid \bX, \alpha,\bxi\right) \propto
        \prod_{j=1}^{\knn} c(S_{j}\mid \alpha)
        \bg(\bXs_{j}\mid\bxi). \label{eq:productprior} 
\end{equation}
A default choice for the first factor is \ech
$c(S_{j}\mid \alpha)=\alpha
\times (\abs{S_{j}}-1)!$, where $\alpha>0$ and $\abs{\cdot}$ being the
cardinality of a set, which is identical to
probability function for a random partition under the 
Chinese restaurant process \citeplatex{ferguson73}.
For the second factor,  
\citetlatex{ppmx} suggested the following default choice for
similarity functions 
\vskip -1.5ex
\begin{equation}
	\bg(\bXs_{j}\mid \bxi)=\int\prod_{i\in S_{j}}\boldq(\bX_{i}\mid \bzeta_{j})G_{0}(\bzeta_{j} \mid \bxi)\de\bzeta_{j}.\label{eq:similarityfn}
\end{equation}
\vskip -1ex
With a conjugate sampling model and prior pair of $\boldq$ and
$G_{0}$, the integral in \eqref{eq:similarityfn} is analytically
available, facilitating easy computation.  The pair is used to assess
the agreement of the data points in $S_{j}$ rather than any notion of
statistical modeling.

The model construction is concluded by specifying a sampling model for
the response variable $\bY_{i}$'s. 
Let $c_{i}=j$ if $i \in S_{j}$ denote cluster membership indicators for all $i=1,\dots,n$. 
For a given partition $\brho_{n}$, we introduce 
cluster-specific parameters
$\btheta=\{\btheta_{1},\dots,\btheta_{\knn}\}$ and assume 
\vskip -2ex
\begin{equation}
	\bY_{i}\mid \btheta, c_{i}=j \simind \h(\bY_{i}\mid \btheta_{j}), \quad
	\btheta_{j} \mid \bvarphi \simiid \Pi(\btheta_{j}\mid \bvarphi),
	\label{eq:samplingmodel}
\end{equation}
\vskip -1ex
\noindent where $\h$ is a sampling model and $\Pi(\cdot \mid
\bvarphi)$ is a prior on $\btheta_{j}$ with possible hyper-parameters
$\bvarphi$. 

Recognizing that $\bX_{i}$'s may not be random, with slight abuse of
notations,
under the similarity function \eqref{eq:similarityfn}
the PPMx can be equivalently stated as  
\begin{equation}
	\bX_{i} \mid c_{i}=j, \bzeta \simiid \boldq(\bX_{i}\mid \bzeta_{j}),\quad
	\bzeta_{j}\mid \bxi \simiid G_{0}(\bzeta_{j} \mid \bxi),\quad
        p(\brhon) \propto \prod c(S_{j} \mid \alpha).
        \label{eq:mixturemodel}
\end{equation}

\section{Missing Data in PPMx}
\label{sm sec:ppmx_missing}
Following the thread of the discussion on handling missing
data from Section \ref{subsec:CAM_cov} of the main paper, we would
like to point out that
the model never rules out the possibility of co-clustering a unit with
missing entries with fully observed units.	 
For the following argument consider
\eqref{eq:mixturemodel} with 
$$
	\bX_{i}\mid c_{i}=j, \bzeta_{j}
        =(\zeta_{j,1},\dots,\zeta_{j,p})\trans ~  \simind 
                                  \textstyle {\prod_{\ell=1}^{p} q_{\ell}(X_{i,\ell} \mid \zeta_{j,\ell})},\\
$$
that is, with $\boldq(\bX_{i}\mid \bzeta_{j})$ factoring over
covariates. 
While implementing inference using 
a Gibbs sampler, we then update the $c_{i}$ as follows

\begin{equation}
  \textstyle{\Pi(c_{i}=j \mid \bX_{i},\bzeta_{1:K}, \bc_{-i}) \propto \Pi(c_{i}=j \mid  \bc_{-i}) \times \prod_{\ell=1}^{p} q_{\ell}(X_{i,\ell} \mid \zeta_{j,\ell})},\label{eq:gibbs_step}
\end{equation}
where $\bc_{-i}$ is the set of $c_{\ell}$'s for $\ell=1,\dots,n$ excluding $c_{i}$.
      
Now consider the case where we have missing observations in some components of $\bX_{i}$ and let $\O_{i}=\{1\leq\ell\leq p:X_{i,\ell}\text{ is observed}\}$ be the indices of the observed variables in $\bX_{i}$.
In this case \eqref{eq:gibbs_step}  changes to
\begin{equation*}
	\textstyle{\Pi(c_{i}=j \mid \bX_{i},\bzeta_{1:K}, \bc_{-i}) \propto \Pi(c_{i}=j \mid  \bc_{-i}) \times   \prod_{\ell\in \O_{i}} q_{\ell}(X_{i,\ell} \mid \zeta_{j,\ell})}.
\end{equation*}
While updating the cluster membership of the units, {only the observed variables $X_{i,\ell}$'s in $\bX_{i}$} are matched with the corresponding $\zeta_{j,\ell}$ for all $\ell \in \O_{i}$.
A more detailed discussion can be found in \citelatex{ppmx_missing}.

\section{Variations of the Importance Resampling Scheme}
\label{sm sec:importance_resamping}
\subsection{Number of Patients to Resample from the RWD}
\label{sm subsec:importance_number}
Due to various reasons \citeplatex[see, e.g.,][for a review]{hey77},
in two-arm designs the allocation of patients
in the treatment and control arms are generally considered to be equal,
including in particular early-phase GBM trials
\citeplatex{centric_trial,core_trial,samplesize_gbm}.
As a rule of thumb, we thus recommend the size of the resampled
population to be equal to the treatment arm population. 

However, if desired any different ratio of sample sizes in
treatment and control arm, say $R:1$, could be used. 
In that case, even if the the distribution of the covariates in the
two arms are same after the importance resampling population
adjustment, the AUC of any classifier used in step 5 of Algorithm
\ref{algo:importance_resampling} would be $R/(R+1)$, rather than $0.5$. 

\subsection{Averaging over Multiple Resamplings}
\label{sm subsec:multiple_resamping}
It may be tempting to average over multiple, say $R$,
instances of the random importance-resampling, to remove one
source of variability. 
But this gives rise to some fundamental
problems.
For illustrative purpose, we refer to Section \ref{sec:GBM} of the
main manuscript where we discuss the application in GBM.
There we use the importance resampling strategy to generate an
equivalent subpopulation of the treatment arm and then use the Cox
proportional hazard model to test for treatment effects.
In Figure \ref{fig:pvals_GBM}, we plot the histogram of $p$-values
under the null scenario which resembles the $\Unif(0,1)$ distribution.
Now for $R$ resamplings we would have multiple $p$-values corresponding to each of the $R$ resampled populations. 
Subsequently we need a statistic to summarize the $p$-values, let us denote it by $T$.
Letting $p_{1},\dots,p_{R}$ be the $p$-values thus obtained, the {distribution of $T(p_{1},\dots,p_{R})$ will not be $U(0,1)$} anymore under the null.
We therefore recommend against it.
As importance resampling schemes are asymptotically unbiased \citeplatex{importance_resampling},
 under reasonably large sample sizes, a single resampled population should be adequate.

\section{Goodness-of-Fit Test for Continuous Responses}
\label{subsec:goodnessoffit}
We use the approach of \citetlatex{valen_model_fit_2007} to suggest a
graphical goodness-of-fit tool to validate the mixture of lognormals model for the CA-PPMx.
The procedure is valid as
long as $\h$ in \eqref{eq:sampligmodel_cappmx} is a univariate
continuous density, i.e., as long as the response variables are
univariate and continuous.
For the moment, we suppress the additional ${s}$ subindex on
$(\bX_i,Y_i)$, $i=1,\ldots,n$. Let $m(\bY\mid \bX)$ be the marginal
distribution after integrating out all model parameters
\vskip -2ex
\begin{equation*}
	m(\bY\mid \bX)= \sum_{\bc} \int \left\{\prod_{i=1}^{n}
          \h(Y_{i}\mid \btheta_{c_{i}})\right\} \de
        p(\btheta,\bc_{1:n}\mid \bX). 
\end{equation*}
\vskip -1ex \noindent
We implement a test of fit based on the following result. 
Assuming that $m(\bY\mid \bX)$ is the true marginal distribution of
$\bY$, we have:
\begin{prop}
  \label{th:goodnessoffit}
  Let $\bomega=({\btheta},{\bc}_{1:n})$ be a sample from their posterior, 
  $\H(y\mid \btheta)=\int_{-\infty}^{y} \h(z\mid \btheta) \de z$ be the CDF, 
  and $U_{i}=\H(Y_{i} \mid {\btheta}_{{c}_{i}}),~i=1,\dots,n$.
  Then, $U_{i}\simiid \Unif(0,1)$.
\end{prop}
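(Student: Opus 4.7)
The plan is to reduce the claim to the classical probability integral transform (PIT) by conditioning on $\bomega$. The enabling observation is the standard Bayesian identity that, under the posited truth that $m(\bY\mid\bX)$ is the actual marginal distribution of $\bY$, the joint distribution of $(\bY,\bomega)$ produced by sampling $\bomega$ from the posterior given $\bY$ coincides with the joint distribution produced by sampling $\bomega$ from its prior and then $\bY$ from the sampling model given $\bomega$. Concretely,
\begin{equation*}
p(\bomega)\,p(\bY\mid\bomega) \;=\; m(\bY\mid\bX)\,p(\bomega\mid\bY,\bX),
\end{equation*}
where $p(\bY\mid\bomega)=\prod_{i=1}^{n}\h(Y_i\mid\btheta_{c_i})$ by the sampling model \eqref{eq:sampligmodel_cappmx}. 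This equivalence is what lets us compute the law of $U_i$ ``forward'' from the sampling model even though $\bomega$ was obtained by posterior simulation.

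Next I would apply the PIT componentwise. Conditional on $\bomega=(\btheta,\bc_{1:n})$, the $Y_i$ are independent with $Y_i\sim\h(\cdot\mid\btheta_{c_i})$, a continuous univariate distribution with CDF $\H(\cdot\mid\btheta_{c_i})$. The classical PIT then gives
\begin{equation*}
U_i \mid \bomega \;=\; \H(Y_i\mid\btheta_{c_i}) \mid \bomega \;\sim\; \Unif(0,1),
\end{equation*}
independently across $i$, i.e.\ $U_i\mid\bomega\simiid\Unif(0,1)$. Because this conditional distribution does not depend on $\bomega$, marginalizing over $\bomega$ preserves it, so $U_i\simiid\Unif(0,1)$ unconditionally, which is the statement.

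The main (indeed only) subtlety lies in the first step: the conclusion hinges on the premise that $m(\bY\mid\bX)$ is the true marginal, which is precisely the null hypothesis of correct model specification. Everything downstream is an elementary application of the PIT together with the conditional independence built into \eqref{eq:sampligmodel_cappmx}; continuity of $\h(\cdot\mid\btheta)$ is what guarantees that $\H(Y\mid\btheta)$ is exactly (and not merely stochastically dominated by) uniform. Departures of the empirical distribution of the $U_i$'s from $\Unif(0,1)$ in a QQ plot therefore constitute evidence against the fitted CA-PPMx, which is exactly the diagnostic tool the authors propose to use.
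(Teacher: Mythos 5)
Your proposal is correct and is essentially the paper's own argument: the paper's substitution of $\Pi(\bomega\mid\bX,\bY)=\{\prod_i \h(Y_i\mid\btheta_{c_i})\}\Pi(\bomega\mid\bX)/m(\bY\mid\bX)$ is exactly your factorization swap, and its observation that the inner integral over $A(\bu_{1:n};\bomega)$ equals $\prod_i u_i$ independently of $\bomega$ is the probability integral transform applied conditionally on $\bomega$, which you invoke by name. No gaps; your remark that continuity of $\h(\cdot\mid\btheta)$ is needed for exact uniformity matches the paper's stated scope for the procedure.
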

\begin{proof}
	Let $\bu_{1:n}=\{u_{1},\dots,u_{n}\}$
	and define $A(\bu_{1:n}; {\bomega} )=\cap_{i=1}^{n} \{y: \H(y\mid {\btheta}_{ {c}_{i}} ) \leq u_{i} \}$.
	Then,
	\vskip-2ex
	\begin{equation*}
		\Pr(U_{i} \leq u_{i}\text{ for all } i=1,\dots,n) 
		= \int \int_{A(\bu_{1:n};{\bomega})}  \de  \Pi( \bomega \mid \bX,  \bY)   m( \bY\mid \bX) \de \bY.		
	\end{equation*}
	\vskip-1ex
	\noindent Note that $\Pi(\bomega \mid \bX,  \bY)= { \left\{\prod_{i=1}^{n} \h( Y_{i}\mid {\btheta}_{{c}_{i}})\right\}  \Pi(\bomega \mid \bX) }/{m( \bY\mid \bX)}$.
	Substituting this in the above equation, we get	
	\vskip-7ex
	\begin{equation*}
		\Pr(U_{i} \leq u_{i}\text{ for all } i=1,\dots,n)
		= \int  \left\{\int_{A(\bu_{1:n}; \bomega)} \prod_{i=1}^{n} \h( Y_{i}\mid {\btheta}_{ {c}_{i}}) \de Y\right\} \de \Pi(\bomega \mid \bX).
	\end{equation*}
	\vskip-2ex
	\noindent Now, the term inside the parenthesis integrates to $\prod_{i=1}^{n} u_{i}$ which is independent from $\Pi(\bomega \mid \bX)$. 
	Hence the proof.
\end{proof}

To understand the implications,
consider the distribution $(\bY,\bomega \mid \bX)$ for a
hypothetical data set $(\bX,\bY)$. 
First sample $\omegat=(\tht,\ct_{1:n})$ from $p(\bomega \mid \bX) =
p(\bc \mid \bX)\; p(\thb \mid \bc, \bX)$ and then $(\bY \mid \omegat,
\bX)$ from the sampling model \eqref{eq:sampligmodel_cappmx}.  Letting
$\wt{U}_{i} = \H(Y_{i} \mid \wt{\btheta}_{\wt{c}_{i}})$, we then have
$\wt{U}_{i}\simiid \Unif(0,1)$.  Assuming that the observed data $\bY$
do in fact arise from the assumed marginal model $m(\bY \mid \bX)$,
Proposition \ref{th:goodnessoffit} sets up sampling from the
alternative factorization
$p(\bY, \bomega \mid \bX) = m(\bY \mid \bX) \cdot p(\bomega \mid \bY,\bX)$.
It follows that  
$\wt{\bU}_{1:n}$ and $\bU_{1:n}$ are indistinguishable in distribution.
The latter, $\bU_{1:n}$, can be readily obtained from the posterior
samples of $\bomega$. 
Letting $\bU_{1:n}\mm$ denote the evaluation under the $m\th$ posterior
MCMC sample $\bomega\mm$,
a goodness-of-fit test can then be carried out to validate the uniform distribution.

Note that the $\bU_{1:n}\mm$'s vary across different posterior samples
$\bomega^{(m)}$ while also having hierarchical dependence since all of
them are  sampled conditionally on the same $\bY$ (and
$\bX$). 
Although in principle formal prior-predictive-posterior based tests be
carried out
\citeplatex{valen_model_fit_2007,cao2010_goodnessoffit},
it can be numerically infeasible for complex models
like ours.  As a practical alternative, goodness-of-fit can be
assessed by inspecting the quantile-quantile plots of
$\bU_{1:n}^{(m)}$.  
Such visual tools can be effective for
detecting departures from model assumptions \citeplatex[Chapter
2]{meloun2011}. 
We use it to assess the model fit in Section \ref{sec:GBM}.

To assess the goodness-of-fit in the GBM application, 
where the outcomes are right-censored survival data, 
we extend the result in the following corollary.

\begin{corollary}
	\label{cor:survival_outcome}
	Suppose we have right-censored survival outcomes $(Y_{i},\nu_{i})$ with covariate $\bX_{i}$ where $\nu_{i}=1$ if $Y_{i}$ is an observed failure time, for $i=1,\dots,n$.
	Following the notations of Theorem \ref{th:goodnessoffit}, 
	define $U_{i}=\H(Y_{i} \mid \btheta_{c_{i}})$ if $\nu_{i}=1$,
	else if $\nu_{i}=0$ define $U_{i}=\H(Y_{i} \mid \btheta_{c_{i}})+ \gamma_{i}\{1-\H(Y_{i} \mid \btheta_{c_{i}})\}$, 
	where $\gamma_{i}\simiid \Unif(0,1)$ independent from $Y_{i}$.	
	If the observed failure times are independent of the censoring times, 
	then $U_{i}\simiid \Unif(0,1)$.
\end{corollary}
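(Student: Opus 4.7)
\MyProof
The plan is to reduce the corollary to Proposition \ref{th:goodnessoffit} by conditioning on whether an observation is censored and exploiting the independence of true failure times and censoring times. Let $T_{i}$ denote the underlying (latent) failure time, so that $Y_{i}=T_{i}$ when $\nu_{i}=1$ and $Y_{i}=C_{i}$ when $\nu_{i}=0$, where $C_{i}$ is the censoring time. Applying Proposition \ref{th:goodnessoffit} to $\{T_{i}\}$ in place of $\{Y_{i}\}$ yields that $V_{i}\equiv \H(T_{i}\mid \btheta_{c_{i}})\simiid \Unif(0,1)$. The task is then to show that the observable quantity $U_{i}$ defined in the statement has the same distribution as $V_{i}$.

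First I would handle the uncensored case ($\nu_{i}=1$). Here $Y_{i}=T_{i}$, so $U_{i}=\H(Y_{i}\mid \btheta_{c_{i}})=V_{i}\sim \Unif(0,1)$ immediately. Next, for the censored case ($\nu_{i}=0$), the idea is to use a standard probability-integral-transform randomization. Let $W_{i}=\H(C_{i}\mid \btheta_{c_{i}})$. Since $T_{i}$ and $C_{i}$ are independent and $\H(\cdot\mid \btheta_{c_{i}})$ is monotone, conditional on $\{T_{i}>C_{i}\}$ and on the value $W_{i}=w$, the distribution of $V_{i}$ is $\Unif(w,1)$. Hence, with $\gamma_{i}\sim\Unif(0,1)$ independent of everything else, the random variable $w+\gamma_{i}(1-w)$ has the same conditional $\Unif(w,1)$ distribution. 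Marginalizing over $W_{i}$ and over the event $\{T_{i}>C_{i}\}$ shows that $U_{i}=W_{i}+\gamma_{i}(1-W_{i})$ has the same unconditional distribution as $V_{i}$, namely $\Unif(0,1)$.

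Putting the two cases together, a single unified argument is available by noting that, whether $\nu_{i}=1$ or $\nu_{i}=0$, $U_{i}$ can be written as $\H(Y_{i}\mid\btheta_{c_{i}})+\nu_{i}'\gamma_{i}\{1-\H(Y_{i}\mid\btheta_{c_{i}})\}$ with $\nu_{i}'=1-\nu_{i}$, and in both cases the displayed mixture distribution equals $\Unif(0,1)$ by the argument above. For the i.i.d.\ conclusion, independence across $i$ follows because $(T_{i},C_{i})$ are independent across $i$ (by the usual sampling assumption) and the $\gamma_{i}$'s are i.i.d.\ and independent of $(T_{i},C_{i},\btheta,\bc)$; thus $(U_{1},\ldots,U_{n})$ is a collection of independent $\Unif(0,1)$ variables.

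The main obstacle I anticipate is keeping the conditioning structure clean: one must condition on the posterior draw $\bomega=(\btheta,\bc_{1:n})$ (so that $\H(\cdot\mid\btheta_{c_{i}})$ is a deterministic monotone CDF), and also on $(\nu_{i},C_{i})$ before invoking the independence of $T_{i}$ and $C_{i}$ to identify the conditional law of $V_{i}$ on $(w,1)$. Writing the argument as an iterated expectation --- first over $\gamma_{i}$ (or equivalently $V_{i}$) given $(W_{i},\nu_{i}=0)$, then over $(W_{i},\nu_{i})$, and finally over $\bomega$ --- should make the cancellation transparent and avoid any measurability or joint-distribution subtleties.
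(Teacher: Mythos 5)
Your proof is correct and follows essentially the same route as the paper: both introduce the latent true failure time, apply Proposition \ref{th:goodnessoffit} to it to get a latent $\Unif(0,1)$ variate, and then observe that for censored observations the conditional law of that variate given the censoring value is $\Unif\{\H(Y_i\mid\btheta_{c_i}),1\}$, which is exactly reproduced by the $\gamma_i$-randomization. Your treatment is, if anything, slightly more careful than the paper's in making the conditioning on the event $\{T_i>C_i\}$ and the independence across $i$ explicit.
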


\begin{proof}[Proof of Corollary \ref{cor:survival_outcome}]
	Let
	$\wt{ Y}_{i}$ be the true failure time of the $i\th$ individual, that is $\wt{ Y}_{i}\geq  Y_{i}$ with equality if and only if $\nu_{i}=1$.
	Letting $\wt{U}_{i}= \H( \wt{ Y}_{i} \mid \btheta_{c_{i}} )$, Theorem \ref{th:goodnessoffit} implies $\wt{\bU}_{1:n} \simiid \Unif(0,1)$. 
	Note that
	\vskip-6ex
	\begin{equation*}
		\H(\wt{ Y}_{i} \mid \btheta_{c_{i}})= 		\nu_{i} \H(\wt{ Y}_{i} \mid \btheta_{c_{i}}) + (1-\nu_{i}) \left[\H({ Y}_{i} \mid \btheta_{c_{i}} )+\{\H(\wt{ Y}_{i} \mid \btheta_{c_{i}} ) - \H({ Y}_{i} \mid \btheta_{c_{i}} )\}\right].
	\end{equation*}
	\vskip-3.5ex \noindent
	Since $\H(\wt{ Y}_{i} \mid \btheta_{c_{i}}) \sim \Unif(0,1)$
        and is independent of $ Y_{i}$,  $\H({ Y}_{i} \mid
        \btheta_{c_{i}} )+\{\H(\wt{ Y}_{i} \mid \btheta_{c_{i}} ) -
        \H({ Y}_{i} \mid \btheta_{c_{i}} )\} \mid
        Y_{i},\btheta_{c_{i}} \sim \Unif\{\H( Y_{i} \mid
        \btheta_{c_{i}}), 1\}$ which follows the same distribution as
        $\gamma_{i}\{1-\H( Y_{i} \mid \btheta_{c_{i}})\}$. 
	Hence the proof.
\end{proof}

\subsection{Illustrating Example for the Graphical Goodness-of-Fit Test}
\label{sm subsec:example_testoffit}
We illustrate the Bayesian goodness-of fit test in a linear regression problem.
We simulate data $(\bX_{i},Y_{i})$, $i=1,\dots,n~(=1,000)$ from the following mixture distribution
\vskip-2ex
\begin{equation}
	Y_{i}\mid \bX_{i} \simind \pi_{0} \mn(\alpha_{0}+ \bbeta_{0}\trans \bX_{i},\sigma_{0}^{2}) +(1-\pi_{0}) \Exp(\alpha_{0}+ \bbeta_{0}\trans \bX_{i}),	\label{sm eq:true_reg}
\end{equation}
\vskip-1ex
\noindent where $\bX_{i}$'s are $p~(=5)$-variate continuous covariates and
 $\Exp(a)$ denotes an exponential distribution with mean $a$.
However, we fit the following misspecified Bayesian linear regression model on the data using the \texttt{MCMCpack R} package
\vskip-5ex
\begin{align}
	&\text{\textbf{likelihood: }} Y_{i}\mid \bX_{i} \simind \mn(\alpha+ \bbeta\trans \bX_{i},\sigma^{2});\nonumber\\
	&\text{\textbf{prior: }} (\alpha,\bbeta) \sim \mn_{p+1}(\bzero,10\times \bI_{p+1}),~ \sigma^{-2} \sim \Ga(0.1, 0.1).\label{sm eq:bayes_reg}
\end{align}
\vskip-2ex
For varying values of $\pi_{0}$, we show quantile-quantile plots in Figure \ref{sm fig:sim_testoffit} where we see deviation from the diagonal $y=x$ straight-line aggravates as $\pi_{0}\to 0$, i.e., with increasing model misspecification. 
\begin{figure}[ht]
	\centering
	\includegraphics[width=.9\linewidth]{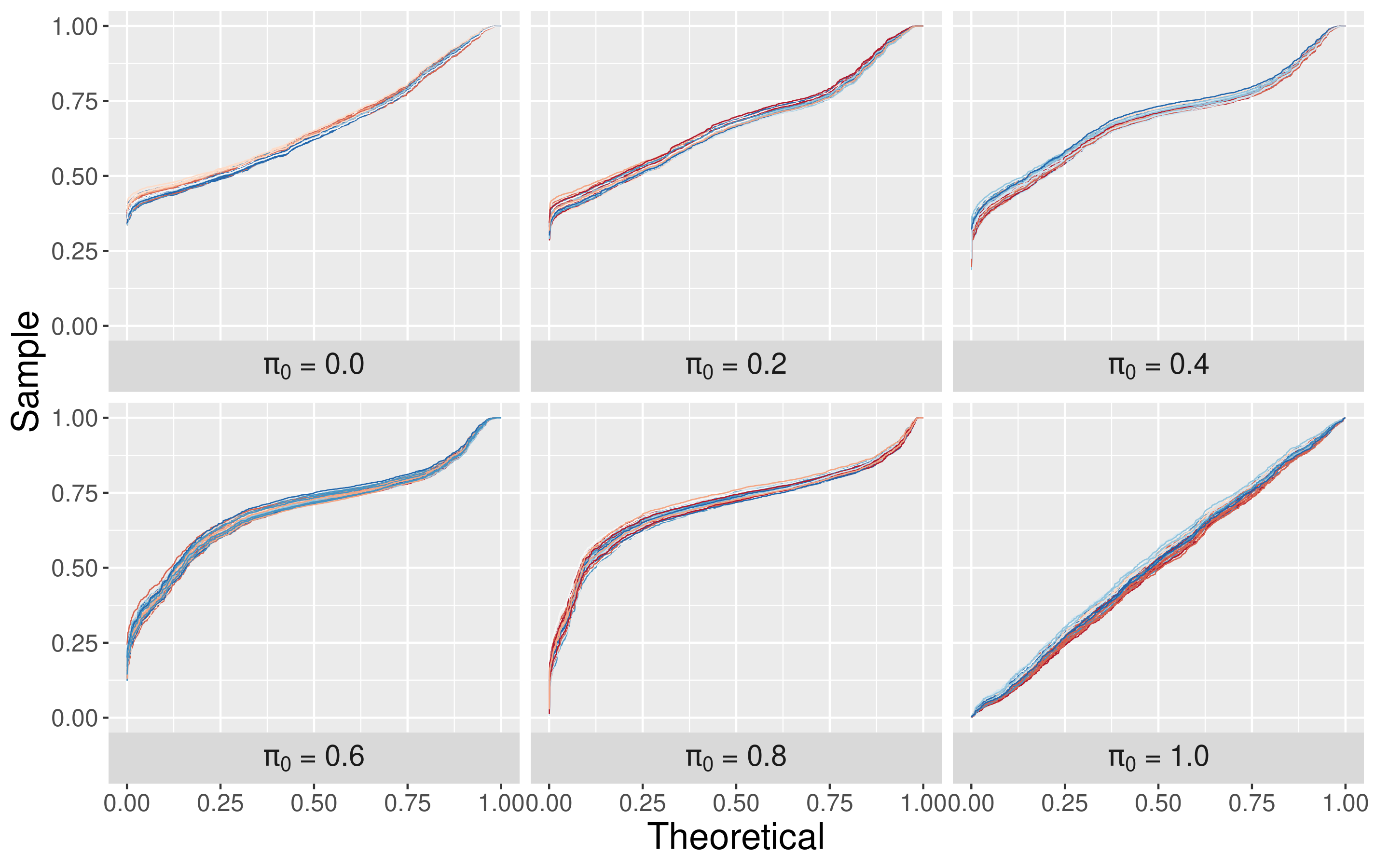}
	\caption{Quantile-quantile plots for increasing model misspecification:  Data are generated from model \eqref{sm eq:true_reg} for different values of $\pi_{0}$  and the Bayesian linear regression model in Eqn \eqref{sm eq:bayes_reg} is fitted where $\pi_{0}=0.0$ and $1.0$ denote the extreme misspecified model and the true model, respectively.
	Deviation from the diagonal $y=x$ straight-line aggravates with increasing model misspecification.}
	\label{sm fig:sim_testoffit}
\end{figure}

\section{Alternate Interpretation of the CA-PPMx}
\label{sec:SM-PS}
In Section \ref{subsec:treatment_effect}, we introduced a model-based approach for inference on
treatment effects in the CA-PPMx model.
An alternative interpretation of the approach arises from observing
the following connection with methods based on PS stratification
\citeplatex{propensity_power_prior, propensity_CL, propensity_multiple}.
The CAM model can be interpreted as a stochastic PS stratification.
To see this, first re-index all patients and patient specific
variables across $s=1,2$ as $i=1,\ldots,N=n_{1}+n_{2}$ and define $Z_{i} \in \{1,2\}$
if patient $i$ was originally in data set $s=1$ or $2$, respectively.
Assuming equal sample sizes $n_{1}=n_{2}$, 
we 
have $p(Z_{i}=1 \mid c_{i}=j)/p(Z_{i}=2 \mid c_{i}=j) = \pi_{1,j}/\pi_{2,j}$.
That is, the terms in the CAM model correspond to different PS ratios for
the selection of a patient into $s=1$ versus $s=2$.
Grouping patients in clusters $C_{j}$ is then interpreted as
stratification by PS, with clusters $C_{j}$ defining the strata.
Within each stratum we report treatment effect
$\delta_{j} = {\delta\{\h(Y\mid \btheta_{1,j}), \h(Y\mid
	\btheta_{2,j})\}}$.
Compare the discussion in Section \ref{subsec:treatment_effect}.  

Whereas fixed consolidated unidimensional PSs may be inadequate in
matching multivariate covariates
\citeplatex{propensity_limitation1,propensity_limitation2} and hence
sensitive to the specification of the PS model \citeplatex{zhao2004},
inference under the proposed CAM model overcomes limitations by
naturally including uncertainty in the stratification.

\section{CA-PPMx Specifications and Hyperparameters}
\label{sm sec:hyperparam}
Recall the setup from Section \ref{subsec:CAM_cov} and the notations from Eqn \eqref{eq:model_missingdata}.
For categorical covariate $X_{s,\ell}$ with categories $1,\dots,m_{\ell}$,
 we choose $q_{\ell}(X_{s,\ell} \mid \bzeta_{\ell})= \Mult(1; \zeta_{\ell,1},\dots,\zeta_{\ell,m_\ell})$
 and $g_{0,\ell}(\zeta_{\ell,1},\dots,\zeta_{\ell,m_\ell})=\Dir(1,\dots,1)$  {to choose a uniform distribution over the simplex}. 
 For continuous $X_{s,\ell}$, we choose $q_{\ell}(X_{s,\ell} \mid \bzeta_{\ell})=\mn(X_{s,\ell}; \mu_{X,\ell}, \sigma_{X,\ell}^{2})$ with $\bzeta_{\ell}=(\mu_{X,\ell}, \sigma_{X,\ell}^{2})$ and $g_{0,\ell}(\mu_{X,\ell}, \sigma_{X,\ell}^{2})=\mathrm{NIG}(\mu_{X,\ell}, \sigma_{X,\ell}^{2};0,1,\alpha_{X},1)$, i.e., $\mu_{X,\ell} \mid \sigma_{X,\ell}^{2} \sim \mn(0, \sigma_{X,\ell}^{2})$, $\sigma_{X,\ell}^{-2}\sim \Ga(a_{X},1)$.
Following standard practice, we center $\mu_{X,\ell}$
around zero.  Based on previous experience on Gaussian mixture models,
we set $a_{X}= \# \text{continuous covariates}+30$, as a small prior
variance on $\sigma^{2}_{X,\ell}$'s favors
a larger number of occupied clusters in the mixture model a
posteriori,
allowing for a more flexible fit.
Recall that we have assumed $\log \alpha_{s}\sim
\mn(\mu_{\alpha},\sigma_{\alpha}^{2})$ for $s=1,2$ on the
concentration parameters in models \eqref{eq:RWD_model} and
\eqref{eq:experimental_model}.  To specify weakly informative priors,
we set the hyperparameters $\mu_{\alpha}$ and $\sigma^{2}_{\alpha}$
such that $\eE(\alpha_{s})=1$ and $\var(\alpha_{s})=10$ a priori for
$s=1,2$.
 
Regarding the parameters of the sampling model for survival outcomes
in Eqn \eqref{eq:mixture_lognormal}, we set $\kappa_{0}=1$ and
$a_{0}=10$ to ensure a thin-tailed base-measure.
In our experience, 
with too heavy tailed prior distributions, small sample performance
can easily get dominated by the prior.
Regarding the hyperprior on the mean parameter
$\mu_{0}$, we choose $m_{\mu}$ using an empirical Bayes type approach.
Letting $\wt{n}$ be the number of observed failures combining the RWD and the current trial,
we set $m_{\mu}= \frac{1}{\wt{n}}\sum_{s}\sum_{i:\nu_{s,i}=1}Y_{s,i}$, i.e., the grand mean of the $\log$-observed failure times across all arms.
We further set $s_{\mu}^{2}=1$.
Regarding the hyperprior on the scale parameter $b_{0}$, we choose
$m_{b}$ and $s_{b}^{2}$ such that $\eE(b_{0})=5$ and $\var(b_{0})=20$
a priori to set a weakly informative hyperprior. 
 
 Regarding the real-valued continuous responses in the simulation studies in Section \ref{sec:sim_study}, we use the model in Eqn \eqref{eq:mixture_lognormal} on the actual response variables with $\nu_{s,i}=1$ for all $i$ and $s$.
 

\section{Posterior Computation}
\label{SM subsec:MCMC}

For computational convenience in the practical implementation, we
consider the \textit{degree $k$ weak limit approximation}
\citeplatex{Ishwaran_Zarepour02b, Ishwaran_Zarepour02c} of the
$\GEM(\alpha_{2})$ distribution in \eqref{eq:RWD_model}, i.e., we
use a $\Dir(\alpha_{2}/k, \dots, \alpha_{2}/k)$ distribution,
with fixed but large enough $k$. 
We set $k=15$ for all our simulation experiments and applications.

We develop a Gibbs sampler to avoid computational
issues with a Gaussian mixture models on the $\log$ transformed survival outcomes with censoring.  
Without loss of generality we assume $ Y_{s,i}$'s ($\log$ transformed outcomes) are supported on the entire
real line and describe our algorithm for a mixture of Gaussian distributions.
Let $\nu_{s,i}$'s be the censoring indicators
such that $\nu_{s,i}=1$ implies $ Y_{s,i}$ is an observed failure
time; else if it is censored in the interval $( Y_{s,i,l}, Y_{s,i,u})$
then $\nu_{s,i}=0$.
  For left and right censoring, we take $
Y_{s,i,u}=\infty$ and $ Y_{s,i,l}=-\infty$, respectively.  Let $\wt{
Y}_{s,i}$ be the true failure times, that is $\wt{ Y}_{s,i}={
Y}_{s,i}$ if and only if $\nu_{s,i}=1$.  
Off-line, before starting MCMC simulation, we
initialize $\wt{ Y}_{s,i}$ at some admissible value for $\nu_{s,i}=0$
and
cluster membership indicator variables $\bcon$ and $\bctw$.
For the CAM model on covariates, we consider a conjugate pair
$q_{\ell}$ and $g_{0,\ell}$ for $\ell=1,\dots,p$. This allows us to analytically marginalize
with respect to the atoms $\bzeta_{j}$'s. 
This strategy results in substantially improved mixing of the Markov chain.

The sampler iterates through the following steps.
In Step 1, we impute $\wt{Y}_{s,i}$'s for the censored observations;
in Step 2, we update the cluster membership indicators $\bcon$ and $\bctw$;
in Step 3, we update hyper-parameters related to the response model that allows sharing of information via a hierarchical model; 
in Step 4, we update the parameters required to implement the strategies outlined in Sections \ref{subsec:IS_reweighing} and \ref{subsec:treatment_effect};
finally in Step 5 we update the Dirichlet hyperparameters for the two mixture models.

\begin{description}[leftmargin=0pt]	
	\item[Step 1] 
	We define the set $S_{s,j,-i}=\{i: c_{s,i}=j\}\setminus \{i\}$, 
	$n_{s,j,-i}=\abs{S_{s,j,-i}}$,
	$\kappa_{s,j,-i}=\kappa_{0}+ n_{s,j,-i}$, 
	$\wbar{ Y}_{s,j,-i}=\sum_{r\in S_{s,j,-i} }\wt{ Y}_{s,r}/ n_{s,j,-i} $,
	$\mu_{s,j,-i}=(\kappa_{0}\mu_{0}+ n_{s,j,-i} \wbar{ Y}_{s,j,-i} )/ \kappa_{s,j,-i}$, 
	$a_{s,j,-i}=a_{0}+ n_{s,j,-i}/2$, 
	$b_{s,j,-i}=b_{0}+ \sum_{r\in S_{s,j,-i} }(\wt{ Y}_{s,r}- \wbar{ Y}_{s,j,-i})^{2}/2+ n_{s,j,-i}\kappa_{0}(\wbar{ Y}_{s,j,-i}-\mu_{0})^{2}/\kappa_{s,j,-i}$.
	Then for all $i=1,\dots,\ns$ and $s=1, 2$, generate
	\vskip-3ex
	\begin{equation*}
		\wt{ Y}_{s,i}\sim\begin{cases}
			{ Y}_{s,i}\text{ with probability 1 if } \nu_{s,i}=1;\\
			t_{2 a_{s,j,-i}} \left\{\mu_{s,j,-i}, \frac{b_{s,j,-i}(\kappa_{s,j,-i}+1)}{a_{s,j,-i} \kappa_{s,j,-i}} \mid  ( Y_{s,i,l}, Y_{s,i,u}) \right\},
		\end{cases}
	\end{equation*}
	\vskip-2ex
	where $t_{df}\{\mu, \sigma^{2} \mid (a,b)\}$ is a central Student's $t$-distribution, with degrees of freedom $df$, median $\mu$ and scale parameter $\sigma$, truncated to the set $(a,b)$.
	
	\item[Step 2] Letting $f_{t}\{\cdot\mid df, \mu, \sigma^{2} \}$ and $F_{t}\{\cdot\mid df, \mu, \sigma^{2} \}$ denote the pdf and cdf of a central Student's $t$-distribution with degrees of freedom $df$, median $\mu$ and scale parameter $\sigma$, respectively, we define
	\vskip-7ex
	\begin{equation*}
		\psi_{Y;s,j}(i)=
		\begin{cases}
			f_{t}\left\{ Y_{s,i} \mid 2 a_{s,j,-i}, \mu_{s,j,-i}, \frac{b_{s,j,-i}(\kappa_{s,j,-i}+1)}{a_{s,j,-i} \kappa_{s,j,-i}}\right\} \text{ if }
			\nu_{s,i}=1;\\
			F_{t}\left\{ Y_{s,i,u} \mid 2 a_{s,j,-i}, \mu_{s,j,-i}, \frac{b_{s,j,-i}(\kappa_{s,j,-i}+1)}{a_{s,j,-i} \kappa_{s,j,-i}} \right\}\\ \qquad\qquad- F_{t}\left\{ Y_{s,i,l} \mid 2 a_{s,j,-i}, \mu_{s,j,-i}, \frac{b_{s,j,-i}(\kappa_{s,j,-i}+1)}{a_{s,j,-i} \kappa_{s,j,-i}}\right\} \text{ otherwise}.
		\end{cases}
	\end{equation*}	
	\vskip-3ex
	Recall from Section \ref{subsec:CAM_cov} (see page \pageref{eq:model_missingdata})  that $\O_{s,i}$ is the set of indices of the covariates observed for $\bX_{s,i}$,
	and define the sets $\C_{j,\ell}= \cup_{s=1}^{2} \left\{i:i\in S_{j},\ell\in \O_{s,i} \right\}$ and $\bX^{*o}_{j,\ell}=\cup_{s=1}^{2} \left\{ X_{s,i,j}:i\in \C_{j,\ell}\right\}$.
	Define the functions $g_{\ell}(\bX^{*o}_{j,\ell} \mid \bxi_{\ell})=\int \prod_{i\in \C_{j,\ell}} q_{\ell}(X_{s,i,\ell}\mid \bzeta_{j,\ell}) g_{0,\ell} ( \bzeta_{j,\ell} \mid \bxi_{\ell})\de\bzeta_{j,\ell}$ and 
	$\psi_{X;s,j}(i)=\prod_{\ell\in \O_{s,i} }  \frac{g_{\ell}(\bX^{*o}_{j,\ell} \mid \bxi_{\ell})} {g_{\ell}\left[\bX^{*o}_{j, \ell}\setminus \left\{ X_{s,j,\ell} \right\} \mid \bxi_{\ell}\right] } $.	
	Then, $\bcon$ can be updated as 
	\vskip-2.5ex
	\begin{equation*}
		\Pi(c_{1,i}=j\mid -) \propto (n_{1,j,-i}+{{\alpha_{1}}}/{\tilkn}) \times \psi_{Y;1,j}(i)\times \psi_{X;1,j}(i)\text{ for } j=1,\dots,\tilkn.
	\end{equation*}
	\vskip-2.5ex
	Similarly $\bctw$ can be updated as\\
	$~~~~~~~\Pi(c_{2,i}=j\mid -)=1$ if $n_{1,j}>0$ and $n_{2,j,-i}=0$;\\
	else $\Pi(c_{2,i}=j\mid -) \propto (n_{2,j,-i}+\alpha_{2}/k) \times \psi_{Y;2,j}(i)\times \psi_{X;2,j}(i)$ for $j=1,\dots,k$.
	
	\item[Step 3] \label{lognorm_hyper} Define $\wt{b}=\log b_{0}$ and
	let $\Pi(\mu_{0}, \wt{b}\mid \wt{\bY}_{1,1:n_{1}}, \wt{\bY}_{2,1:n_{2}})$ be the joint posterior density of $\mu_{0}$ and $\wt{b}$ given $\wt{Y}_{s,i}$'s, $k_{n,1}$ and $k_{n,2}$ be the number of non-empty clusters in the two cohorts respectively.
	Then,
	\vskip-7.5ex
	\begin{multline*}
		\log\Pi(\mu_{0}, \wt{b}\mid \wt{\bY}_{1,1:n_{1}}, \wt{\bY}_{2,1:n_{2}})=K -\frac{(\mu_{0}- m_{\mu})^{2}}{2s_{\mu}^{2}}  -\frac{(\wt{b}- m_{b})^{2}}{2s_{b}^{2}}+ (k_{n,1}+k_{n,2})a_{0}\wt{b}  \\
		-\sum_{j=1}^{\kn} \sum_{s=1}^{2}
		\left(a_{0}+\frac{n_{s,j}}{2}\right)\log\left[e^{ \wt{b}} +\frac{1}{2} \left\{\mu_{0}^{2}\kappa_{0}+\sum_{i\in S_{s,j}} \wt{Y}_{s,i}^{2} -\frac{(\kappa_{0}\mu_{0} +n_{s,j} {\wbar{Y}}_{s,j})^{2} }{\kappa_{0}+n_{s,j}} \right\} \right]  ,
	\end{multline*}
	\vskip-4ex
	where $K$ is a constant and ${\wbar{Y}}_{s,j}=\sum_{i\in S_{s,j}} \wt{Y}_{s,i}$.
	We sample $\mu_{0}$ and $\wt{b}$ using a Hamiltonian Monte Carlo (HMC) algorithm \citeplatex{duane_hmc}.
	
	\item[Step 4] For $j=1,\dots, k$, we define the set $S_{s,j}=\{i: c_{s,i}=j\}  $, 
	$\kappa_{s,j}=\kappa_{0}+ n_{s,j}$, 
	$\mu_{s,j}=(\kappa_{0}\mu_{0}+ n_{s,j} \wbar{Y}_{s,j} )/ \kappa_{s,j}$, 
	$a_{s,j}=a_{0}+ n_{s,j}/2$, 
	$b_{s,j}=b_{0}+ \sum_{r\in S_{s,j} }(\wt{Y}_{s,r}- \wbar{Y}_{s,j})^{2}/2+ n_{s,j}\kappa_{0}(\wbar{Y}_{s,j}-\mu_{0})^{2}/\kappa_{s,j}$.
	Then, 
	\vskip-6ex
	\begin{align}
		&\mu_{s,j} \sim t_{2a_{s,j}} \left\{\mu_{s,j},\frac{b_{s,j}(\kappa_{s,j}+1)  } {a_{s,j} \kappa_{s,j}}\right\}, \qquad \sigma_{s,j}^{-2} \sim \Ga(a_{s,j}, b_{s,j}), \label{eq:sim_responseparams}\\
		&\bpion \sim \Dir \left(n_{1,1}+\frac{\alpha_{1}}{\tilkn}, \dots, n_{1,\tilkn}+ \frac{\alpha_{1}} {\tilkn} \right),~~ 
		\bpitw \sim \Dir \left(n_{2,1}+\frac{\alpha_{2}}{k}, \dots, n_{2,k}+ \frac{\alpha_{2}}{k} \right).\nonumber
	\end{align}
	\vskip-4ex
	For $s=1$, we only sample for $j=1,\dots,\tilkn$ in \eqref{eq:sim_responseparams}.
	Note that the dimension of $\bpion$ can vary across MCMC samples.
	
	\item[Step 5] 
	With lognormal priors on the Dirichlet mixture hyperparameters $\alpha_{1}$ and $\alpha_{2}$, $\log \alpha_{s} \sim \mn(\mu_{\alpha}, \sigma_{\alpha}^{2})$, $s=1,2$, the log-posterior pdfs are given by
	\begin{align*}
		&\log \Pi(\alpha_{1}\mid -) = K_{1} + \log \frac{\Gamma(\alpha_{1})}{\Gamma(\alpha_{1}+n_{1})} + \sum_{j: n_{1,j}>0 } \log \frac{\Gamma(\alpha_{1}/k(n_{2}) +n_{1})}{\Gamma(\alpha_{1})} - \log \alpha_{1}- \frac{(\log\alpha_{1}-\mu_{\alpha})^{2} } {2 \sigma_{\alpha}^{2}},\\
		&\log \Pi(\alpha_{2}\mid -) = K_{2} + \log \frac{\Gamma(\alpha_{2})}{\Gamma(\alpha_{2}+n_{2})} + \sum_{j: n_{2,j}>0 } \log \frac{\Gamma(\alpha_{2}/k +n_{2})}{\Gamma(\alpha_{2})} - \log \alpha_{2}- \frac{(\log\alpha_{2}-\mu_{\alpha})^{2} } {2 \sigma_{\alpha}^{2}}.
	\end{align*}
	As the respective pdfs are differentiable with respect to $\alpha_{1}$ and $\alpha_{2}$, we sample the parameters using HMC.
\end{description}

\begin{remark}	
	Note that in Step 2, $\C_{j,\ell}$ is the set of data points in $S_{j}$ with observed covariate $\ell$, 
	$\bX_{j,\ell}^{*o}$ is the collection of the observed values of the covariate $\ell$ in $S_{j}$ and 
	$g_{\ell}(\bX_{j,\ell}^{*o} \mid \bxi_{\ell})$ is the joint
        marginal density. 
	A conjugate pair $q_{\ell}$ and $g_{0,\ell}$ ensures the analytical availability of $g_{\ell}$ and
	$\psi_{X;s,j}(i)$ becomes the conditional distribution of $\bX_{s,i}$ given $\bX_{j,\ell}^{*o}$.
	For continuous real-valued $X_{s,j,\ell}$, we may take
        $q_{\ell}(\cdot \mid \bzeta_{j})$ to be the univariate
        Gaussian pdf where $\bzeta_{j}$ is the set of associated mean
        and variance parameters, and $g_{0,\ell}(\bzeta_{j} \mid
        \bxi_{\ell})$ to be a normal-inverse-gamma density
        (compare Section \ref{sm sec:hyperparam}).
	In this case, the ratio $\frac {g_{\ell}(\bX^{*o}_{j,\ell} \mid \bxi_{\ell})} {g_{\ell}\left[\bX^{*o}_{j, \ell}\setminus \left\{ X_{s,j,\ell} \right\} \mid \bxi_{\ell}\right] }$ 
	reduces to a central $t$-distribution density;
	for categorical $X_{s,j,\ell}$, a convenient choice can be the multinomial-Dirichlet pair which again yields an analytical expression of the ratio.
\end{remark}

In the GBM application and simulation studies in Section \ref{sec:sim_study}, we have considered conjugate normal-inverse-gamma  and multinomial-Dirichlet conjugate pairs for continuous real-valued covariates and categorical covariates, respectively.
For all simulation studies and GBM application, we consider 6,000 MCMC iterations, discarded the first 1,000 as the burn-in samples, 
and saved every $5\th$ MCMC sample to reduce autocorrelation. 

Finally we note that the complete conditional for $\pi_{1,j}$ in step 4
could be used to implement Rao-Blackwellization
\citeplatex{robert2021rao} in the evaluation of the weights $w_{i}$ in
\eqref{eq:imp_dist} by replacing $\pi_{1,j}$ with the conditional
posterior means. 

\section{Additional Details on Simulation Studies}
\label{sm sec:simulations}

\subsection{Procedure to Test for Treatment Effects in Section \ref{sec:sim_study}}
\label{sm subsec:sim_testing_procedure}
Recall that in Section \ref{sec:sim_study} 
we test $H_{0}:\delta=0$ versus $H_{1}:\delta \neq 0$ in each simulation setup.	
To compute the power, we first estimate the treatment effect, say $\deltah$ in each setup.
Estimated treatment effects under CA-PPMx are evaluated using the
posterior mean of Eqn \eqref{eq:Delta3}.
To evaluate type-II error rates we use the empirical
distribution of $\deltah$ under simulation truth $\delta=0$
for each of the seven methods under consideration 
across the 500 repeat simulations to obtain their distributions under
$H_{0}$.  
We evaluate the empirical $2.5\%$ and $97.5\%$ quantiles,
say $\deltah_{L}$ and $\deltah_{U}$ and define the test function
$\Phi(\deltah)=\mathbbm{1}_{\deltah \notin[\deltah_{L},\deltah_{U}]}$
controlling the type-I error at 5\% level of significance.

\subsection{Details on Simulation Truths}
\label{sm subsec:simulation_truth}
\paragraph{\uline{CAM scenario:}} 
We set $\mu_{1,1}=\mu_{1,1}=2$ and $\mu_{1,j}=0$ for all $j>2$, and
$\mu_{2,5}=\mu_{2,6}=2$ and $\mu_{2,j}=0$ for all $j\notin \{5,6\}$, 
$\sigma_{j}^{2}=0.05$ for all $j=1,2,3$.
Regarding the mixture weights, we set $\pi_{1,1}=\pi_{1,2}=0.5$ and $\pi_{2,1}=\pi_{2,2}=1/6$ and $\pi_{2,3}=2/3$.
Regarding the categorical covariates we set
 $\varrho_{1}=0.85$, $\varrho_{2}=0.65$.

\paragraph{\uline{MIX scenario:}} We take $k=4$. 
Recall that $\bmu_{1,j}=\bmu_{2,j}$ for all $j<k$, say $\bmu_{j}=(\mu_{j,1},\dots,\mu_{j,p})\trans$.
For each $j<k$, we take $\mu_{j,2j+1}=\mu_{j,2j+2}=2$ and $\mu_{j,\ell}=0$ for all $\ell \notin \{2j+1,2j+2\}$.
Finally for $\bmu_{s,k}=(\mu_{s,k,1},\dots,\mu_{s,k,p})\trans$ with $s=1,2$,
we set 
$\mu_{1,k,7}=\mu_{1,k,8}=2$, $\mu_{1,k,9}=1$ and $\mu_{1,k,\ell}=0$ for all $\ell\notin \{7,8,9\}$;
and 
$\mu_{2,k,2k+1}=\mu_{2,k,2k+2}=2$ and $\mu_{2,k,\ell}=0$ for all $\ell\notin \{2k+1,2k+2\}$.
In each repeat simulation we generate $w_{1,1},\dots,w_{1,k} = \mathrm{SRSWR}_{k}(1,\dots,4)$ where $\mathrm{SRSWR}_{r}(\S)$ denotes the simple random sampling scheme with replacement of size $r$ from the set $\S$.
Then we set $\pi_{1,j}=w_{1,j}/ \sum_{r=1}^{k}w_{1,r}$ for all $j=1,\dots,k$.
we set $\pi_{2,j}=1/ k$ for all $j=1,\dots,k$.

\paragraph{\uline{Interaction scenario:}}
Recall the covariates in the GBM dataset from Table \ref{tab:GBM_covs} in the main manuscript.
We consider pairwise interactions between (Gender, Age) and (RT Dose, Age).
Following that, we have one-hot-encoded the covariates with more than
two categories
(e.g., KPS) so that we are left with all binary covariates (including the interactions).
Let $\bX_{i}=(X_{i,1},\dots,X_{i,p})\trans$ be the covariates corresponding to patient record $i$ with $p$ being the number of covariates.

For each repeat simulation, we then generate
$\bb=(b_{1},\dots,b_{p})\trans = \mathrm{SRSWR}_{p}(-1,0.75)$.
We then assign the patient record $i$ to the treatment arm with
probability $\frac{\bX_{i}\trans \bb +0.8}{1+ \bX_{i}\trans \bb
  +0.8}$. 

\paragraph{\uline{Oracle scenario:}} We follow the exact same
strategy as described in the Interaction scenario but without pairwise
interactions. 

\paragraph{\uline{Outcome model:}} For $\bx=(x_{1},\dots,x_{p})\trans$, we take $f(\bx)= \beta_{1} \mathbbm{1}_{(x_{1}\geq 1.25, x_{2}\geq 1.25)}- 
\beta_{2} \mathbbm{1}_{(x_{3}\geq 1.25, x_{4}\geq 1.25)} + 
\beta_{3} \mathbbm{1}_{(x_{5}\geq 1.25, x_{6}\geq 1.25)} + 
\beta_{4} \mathbbm{1}_{(x_{p-1}\geq 1, x_{p}\geq 1)}$.
In each repeat simulation we let $\beta_{1}, \beta_{2}\simiid \Unif(40,60)$,
$\beta_{3}\sim \Unif(225,275)$ and $\beta_{4}\sim \Unif(-5,-1)$.

In the Interaction and Oracle scenarios we simulate the linear regression coefficients $\bbeta=(\beta_{1},\dots,\beta_{p})\trans\simiid \Unif(-10,10)$.

\subsection{Implementation of Matching and PS-Based Approaches}
\label{sm subsec:alternative_implementation}

\paragraph{\uline{PS-based approaches:}} 
We implemented the composite likelihood and power-prior approaches using the \texttt{psrwe R} package. 
We set the hyperparameters as recommended in the vignette.
We create 5 strata (suggested in the package vignette) and borrow $n_{1}$ patients from the RWD for all simulation studies.
For the PS model, we consider both, linear logistic regression and the random forest classifier.

\paragraph{\uline{Matching:}} 
We implemented these approaches using the \texttt{optmatch R} package. 
Following the recommendations in the vignette, we set one control to be matched to each treatment.
It makes the matched control population to be of the same size as the treatment arm.
We then fit a linear model 
to estimate the treatment effect $\delta$.

\subsection{Bias for the Methods Considered in Section \ref{sec:sim_study}}
\label{sm subsec:bias}
\begin{figure}[H]
	\begin{subfigure}[b]{\linewidth}
		\includegraphics[width=\linewidth]{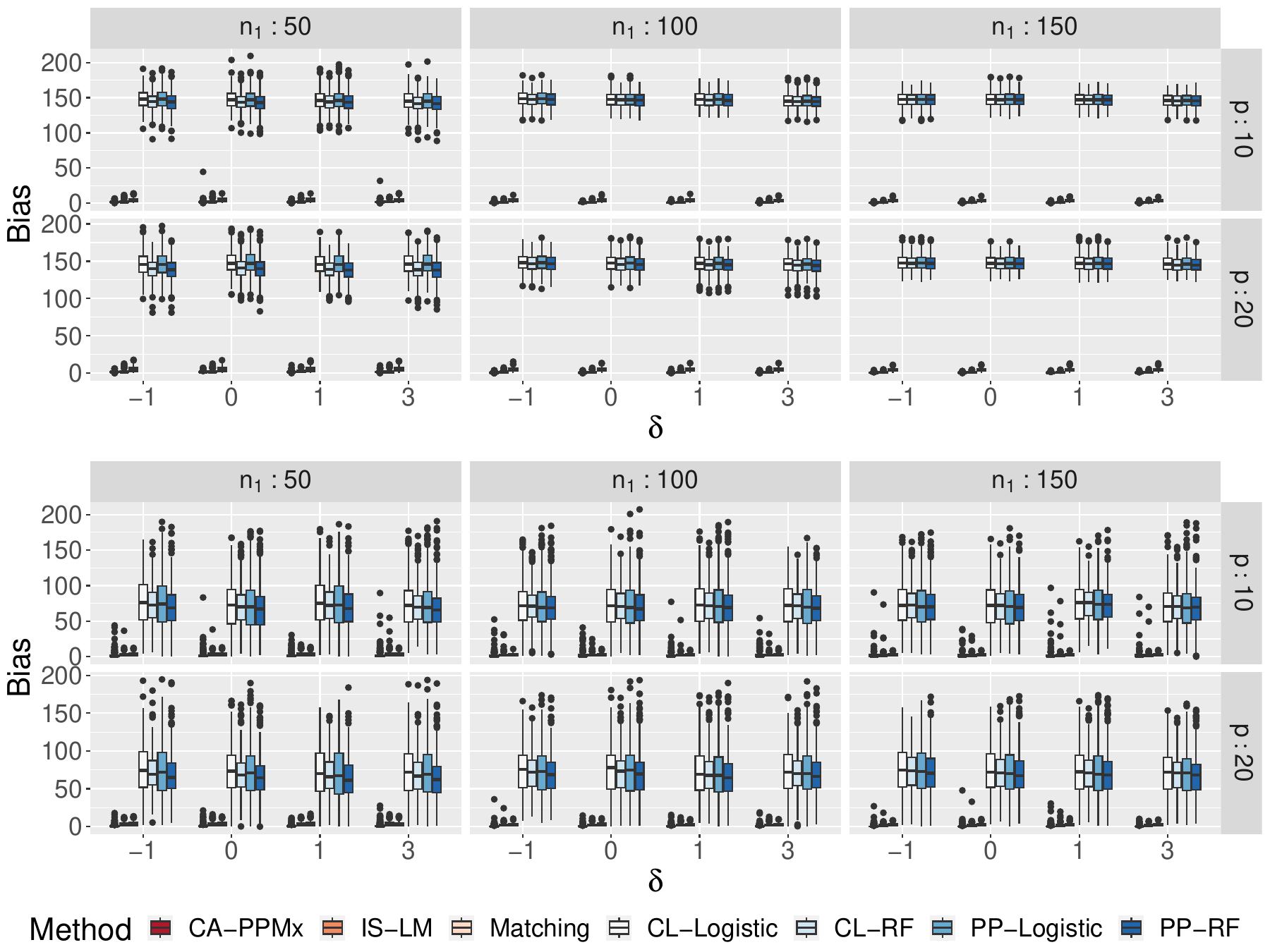}
		\caption{CAM (top) and MIX (bottom) scenarios.}
		\label{sm fig:mixture}
	\end{subfigure}
\end{figure}
\vspace*{-2ex}
\begin{figure}[H]
	\centering	
	\ContinuedFloat
	\begin{subfigure}[b]{\linewidth}
		\includegraphics[width=\linewidth]{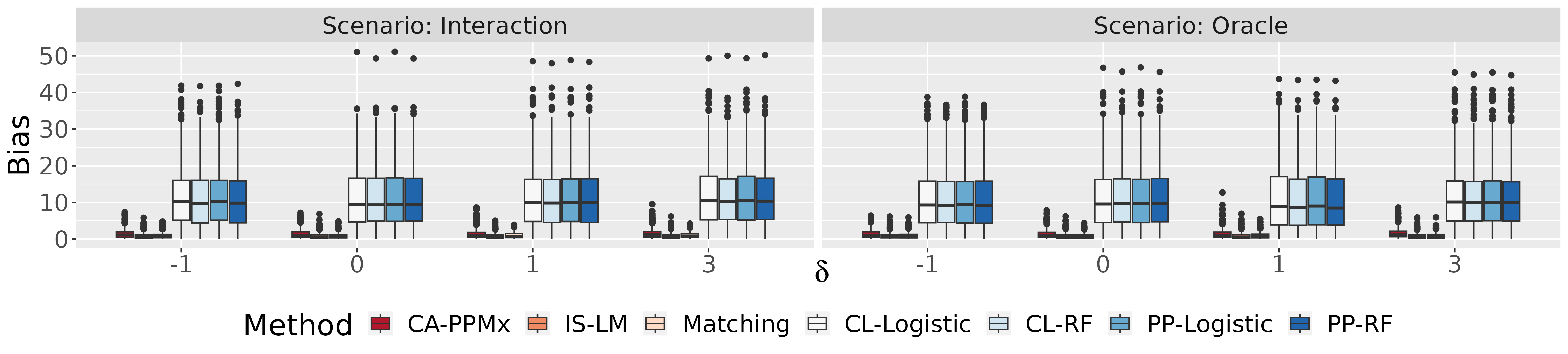}
		\caption{Interaction (left) and Oracle (right) scenarios.}
		\label{sm fig:gbm_bias}
	\end{subfigure}
	\caption{The bias in detecting treatment effects across different simulation setups:		
		{Seven methods are used to estimate the effects where
			IS-LM and CA-PPMx are based on the proposed CAM model. 
			Panel (a) corresponds to the CAM (top) and MIX (bottom) scenarios.
			Panel (b) shows results under the Interaction (left side) and
			the Oracle (right side) scenarios.} }
	\label{sm fig:Biases}
\end{figure}

\subsection{Power for the Methods Considered in Section \ref{sec:sim_study}}
\label{sm subsec:power_prior}
 {The PS-based approaches yield very similar results.
Therefore, for easier apprehension we only show the results for CL-RF together with the other types of methods in Tables \ref{tab: sim_sim} and \ref{tab: sim_gbm}, and the rest of the PS-based methods in Table \ref{sm tab:ps_sim}.}
\begin{table}[!b]
	\centering
	\caption{Power of detecting treatment effects under CAM and MIX scenarios}
	\label{tab: sim_sim}
	\resizebox{.625\textwidth}{!}{ \begin{minipage}[b]{.99\linewidth}			
	\begin{tabular}{c|c|llr|llr||c|llr|llr|}
		\toprule
		\multicolumn{1}{c|}{\multirow{2}[4]{*}{\textbf{$\delta$}}} &       & \multicolumn{3}{c|}{\textbf{Scenario: CAM}} & \multicolumn{3}{c||}{\textbf{Scenario: MIX}} &       & \multicolumn{3}{c|}{\textbf{Scenario: CAM}} & \multicolumn{3}{c|}{\textbf{Scenario: MIX}} \\
		\cmidrule{2-15}    \multicolumn{1}{c|}{} &       & \multicolumn{1}{c}{$n_{1}$} & \multicolumn{1}{c}{$p$} & \multicolumn{1}{c|}{\textbf{Power}} & \multicolumn{1}{c}{$n_{1}$} & \multicolumn{1}{c}{$p$} & \multicolumn{1}{c||}{\textbf{Power}} &       & \multicolumn{1}{c}{$n_{1}$} & \multicolumn{1}{c}{$p$} & \multicolumn{1}{c|}{\textbf{Power}} & \multicolumn{1}{c}{$n_{1}$} & \multicolumn{1}{c}{$p$} & \multicolumn{1}{c|}{\textbf{Power}} \\
		\midrule
		\multirow{6}[2]{*}{-1} & \multirow{24}[8]{*}{\begin{sideways}\textbf{Method: CA-PPMx}\end{sideways}} & 50    & 10    & 0.024 & 50    & 10    & 0.056 & \multirow{24}[8]{*}{\begin{sideways}\textbf{Method: IS-LM}\end{sideways}} & 50    & 10    & 0.054 & 50    & 10    & 0.042 \\
		&       & 100   & 10    & 0.032 & 100   & 10    & 0.066 &       & 100   & 10    & 0.080 & 100   & 10    & 0.090 \\
		&       & 150   & 10    & 0.048 & 150   & 10    & 0.118 &       & 150   & 10    & 0.080 & 150   & 10    & 0.072 \\
		&       & 50    & 20    & 0.206 & 50    & 20    & 0.052 &       & 50    & 20    & 0.050 & 50    & 20    & 0.056 \\
		&       & 100   & 20    & 0.040 & 100   & 20    & 0.050 &       & 100   & 20    & 0.068 & 100   & 20    & 0.052 \\
		&       & 150   & 20    & 0.062 & 150   & 20    & 0.030 &       & 150   & 20    & 0.118 & 150   & 20    & 0.064 \\
		\cmidrule{1-1}\cmidrule{3-8}\cmidrule{10-15}    \multirow{6}[2]{*}{0} &       & 50    & 10    & 0.050 & 50    & 10    & 0.050 &       & 50    & 10    & 0.050 & 50    & 10    & 0.050 \\
		&       & 100   & 10    & 0.050 & 100   & 10    & 0.050 &       & 100   & 10    & 0.050 & 100   & 10    & 0.050 \\
		&       & 150   & 10    & 0.050 & 150   & 10    & 0.050 &       & 150   & 10    & 0.050 & 150   & 10    & 0.050 \\
		&       & 50    & 20    & 0.050 & 50    & 20    & 0.050 &       & 50    & 20    & 0.050 & 50    & 20    & 0.050 \\
		&       & 100   & 20    & 0.050 & 100   & 20    & 0.050 &       & 100   & 20    & 0.050 & 100   & 20    & 0.050 \\
		&       & 150   & 20    & 0.050 & 150   & 20    & 0.050 &       & 150   & 20    & 0.050 & 150   & 20    & 0.050 \\
		\cmidrule{1-1}\cmidrule{3-8}\cmidrule{10-15}    \multirow{6}[2]{*}{1} &       & 50    & 10    & 0.048 & 50    & 10    & 0.056 &       & 50    & 10    & 0.056 & 50    & 10    & 0.090 \\
		&       & 100   & 10    & 0.890 & 100   & 10    & 0.266 &       & 100   & 10    & 0.064 & 100   & 10    & 0.080 \\
		&       & 150   & 10    & 0.950 & 150   & 10    & 0.674 &       & 150   & 10    & 0.112 & 150   & 10    & 0.042 \\
		&       & 50    & 20    & 0.058 & 50    & 20    & 0.142 &       & 50    & 20    & 0.064 & 50    & 20    & 0.048 \\
		&       & 100   & 20    & 0.806 & 100   & 20    & 0.534 &       & 100   & 20    & 0.082 & 100   & 20    & 0.074 \\
		&       & 150   & 20    & 0.924 & 150   & 20    & 0.826 &       & 150   & 20    & 0.096 & 150   & 20    & 0.082 \\
		\cmidrule{1-1}\cmidrule{3-8}\cmidrule{10-15}    \multirow{6}[2]{*}{3} &       & 50    & 10    & 0.866 & 50    & 10    & 0.056 &       & 50    & 10    & 0.146 & 50    & 10    & 0.132 \\
		&       & 100   & 10    & 0.960 & 100   & 10    & 0.746 &       & 100   & 10    & 0.358 & 100   & 10    & 0.216 \\
		&       & 150   & 10    & 0.998 & 150   & 10    & 0.754 &       & 150   & 10    & 0.472 & 150   & 10    & 0.154 \\
		&       & 50    & 20    & 0.966 & 50    & 20    & 0.754 &       & 50    & 20    & 0.164 & 50    & 20    & 0.078 \\
		&       & 100   & 20    & 0.958 & 100   & 20    & 0.900 &       & 100   & 20    & 0.312 & 100   & 20    & 0.228 \\
		&       & 150   & 20    & 0.998 & 150   & 20    & 0.900 &       & 150   & 20    & 0.518 & 150   & 20    & 0.240 \\
		\midrule
		\midrule
		\multirow{6}[2]{*}{-1} & \multirow{24}[8]{*}{\begin{sideways}\textbf{Method: CL-RF}\end{sideways}} & 50    & 10    & 0.056 & 50    & 10    & 0.014 & \multirow{24}[8]{*}{\begin{sideways}\textbf{Method: Matching}\end{sideways}} & 50    & 10    & 0.050 & 50    & 10    & 0.076 \\
		&       & 100   & 10    & 0.056 & 100   & 10    & 0.056 &       & 100   & 10    & 0.064 & 100   & 10    & 0.044 \\
		&       & 150   & 10    & 0.042 & 150   & 10    & 0.060 &       & 150   & 10    & 0.096 & 150   & 10    & 0.100 \\
		&       & 50    & 20    & 0.044 & 50    & 20    & 0.046 &       & 50    & 20    & 0.040 & 50    & 20    & 0.058 \\
		&       & 100   & 20    & 0.076 & 100   & 20    & 0.034 &       & 100   & 20    & 0.046 & 100   & 20    & 0.076 \\
		&       & 150   & 20    & 0.060 & 150   & 20    & 0.046 &       & 150   & 20    & 0.038 & 150   & 20    & 0.062 \\
		\cmidrule{1-1}\cmidrule{3-8}\cmidrule{10-15}    \multirow{6}[2]{*}{0} &       & 50    & 10    & 0.050 & 50    & 10    & 0.050 &       & 50    & 10    & 0.050 & 50    & 10    & 0.050 \\
		&       & 100   & 10    & 0.050 & 100   & 10    & 0.050 &       & 100   & 10    & 0.050 & 100   & 10    & 0.050 \\
		&       & 150   & 10    & 0.050 & 150   & 10    & 0.050 &       & 150   & 10    & 0.050 & 150   & 10    & 0.050 \\
		&       & 50    & 20    & 0.050 & 50    & 20    & 0.050 &       & 50    & 20    & 0.050 & 50    & 20    & 0.050 \\
		&       & 100   & 20    & 0.050 & 100   & 20    & 0.050 &       & 100   & 20    & 0.050 & 100   & 20    & 0.050 \\
		&       & 150   & 20    & 0.050 & 150   & 20    & 0.050 &       & 150   & 20    & 0.050 & 150   & 20    & 0.050 \\
		\cmidrule{1-1}\cmidrule{3-8}\cmidrule{10-15}    \multirow{6}[2]{*}{1} &       & 50    & 10    & 0.044 & 50    & 10    & 0.016 &       & 50    & 10    & 0.078 & 50    & 10    & 0.070 \\
		&       & 100   & 10    & 0.030 & 100   & 10    & 0.062 &       & 100   & 10    & 0.076 & 100   & 10    & 0.084 \\
		&       & 150   & 10    & 0.040 & 150   & 10    & 0.042 &       & 150   & 10    & 0.150 & 150   & 10    & 0.106 \\
		&       & 50    & 20    & 0.038 & 50    & 20    & 0.042 &       & 50    & 20    & 0.068 & 50    & 20    & 0.092 \\
		&       & 100   & 20    & 0.064 & 100   & 20    & 0.040 &       & 100   & 20    & 0.104 & 100   & 20    & 0.070 \\
		&       & 150   & 20    & 0.074 & 150   & 20    & 0.052 &       & 150   & 20    & 0.052 & 150   & 20    & 0.070 \\
		\cmidrule{1-1}\cmidrule{3-8}\cmidrule{10-15}    \multirow{6}[2]{*}{3} &       & 50    & 10    & 0.072 & 50    & 10    & 0.020 &       & 50    & 10    & 0.112 & 50    & 10    & 0.162 \\
		&       & 100   & 10    & 0.056 & 100   & 10    & 0.030 &       & 100   & 10    & 0.216 & 100   & 10    & 0.278 \\
		&       & 150   & 10    & 0.024 & 150   & 10    & 0.044 &       & 150   & 10    & 0.414 & 150   & 10    & 0.382 \\
		&       & 50    & 20    & 0.034 & 50    & 20    & 0.066 &       & 50    & 20    & 0.154 & 50    & 20    & 0.132 \\
		&       & 100   & 20    & 0.064 & 100   & 20    & 0.038 &       & 100   & 20    & 0.206 & 100   & 20    & 0.230 \\
		&       & 150   & 20    & 0.046 & 150   & 20    & 0.030 &       & 150   & 20    & 0.236 & 150   & 20    & 0.294 \\
		\bottomrule
	\end{tabular}%
	\end{minipage}}
	\label{tab:addlabel}%
\end{table}%

\begin{table}[htbp]
	\centering
	\caption{Power of detecting treatment effects under Interaction and Oracle scenarios}
	\label{tab: sim_gbm}
	\vskip-1.5ex
	\hspace*{-1ex}\resizebox{.7\textwidth}{!}{ \begin{minipage}[b]{.99\linewidth}
			\begin{tabular}{c|c|rr||c|c|rr|}
				\toprule
				\textbf{Method} & \textbf{Scenario} & \multicolumn{1}{c}{$\delta$} & \multicolumn{1}{c||}{\textbf{Power}} & \textbf{Method} & \textbf{Scenario} & \multicolumn{1}{c}{$\delta$} & \multicolumn{1}{c|}{\textbf{Power}} \\
				\midrule
				\multirow{8}[4]{*}{\begin{sideways}CA-PPMx\end{sideways}} & \multirow{4}[2]{*}{Interaction} & -1    & 0.079 & \multirow{8}[4]{*}{\begin{sideways}IS-LM\end{sideways}} & \multirow{4}[2]{*}{Interaction} & -1    & 0.085 \\
				&       & 0     & 0.052 &       &       & 0     & 0.052 \\
				&       & 1     & 0.047 &       &       & 1     & 0.083 \\
				&       & 3     & 0.116 &       &       & 3     & 0.497 \\
				\cmidrule{2-4}\cmidrule{6-8}          & \multirow{4}[2]{*}{Oracle} & -1    & 0.077 &       & \multirow{4}[2]{*}{Oracle} & -1    & 0.091 \\
				&       & 0     & 0.053 &       &       & 0     & 0.053 \\
				&       & 1     & 0.084 &       &       & 1     & 0.092 \\
				&       & 3     & 0.132 &       &       & 3     & 0.570 \\
				\midrule
				\multirow{8}[4]{*}{\begin{sideways}CL-RF\end{sideways}} & \multirow{4}[2]{*}{Interaction} & -1    & 0.064 & \multirow{8}[4]{*}{\begin{sideways}Matching\end{sideways}} & \multirow{4}[2]{*}{Interaction} & -1    & 0.108 \\
				&       & 0     & 0.053 &       &       & 0     & 0.050 \\
				&       & 1     & 0.062 &       &       & 1     & 0.121 \\
				&       & 3     & 0.071 &       &       & 3     & 0.575 \\
				\cmidrule{2-4}\cmidrule{6-8}          & \multirow{4}[2]{*}{Oracle} & -1    & 0.061 &       & \multirow{4}[2]{*}{Oracle} & -1    & 0.103 \\
				&       & 0     & 0.053 &       &       & 0     & 0.053 \\
				&       & 1     & 0.039 &       &       & 1     & 0.122 \\
				&       & 3     & 0.043 &       &       & 3     & 0.752 \\
				\bottomrule
			\end{tabular}
	\end{minipage}}
	\caption{Power of the PS-based methods in CAM and MIX scenarios (upper table) and Interaction and Oracle scenarios (lower table)}
	\label{sm tab:ps_sim}
	\hspace*{-5cm}
	\resizebox{.62\textwidth}{!}{ \begin{minipage}[b]{.99\linewidth}
	\begin{tabular}{c|c|llr|llr|c|llr|llr|c|llr|llr|}
		\toprule
		\multicolumn{1}{c|}{\multirow{2}[4]{*}{$\delta$}} &       & \multicolumn{3}{c|}{\textbf{Scenario: CAM}} & \multicolumn{3}{c|}{\textbf{Scenario: MIX}} &       & \multicolumn{3}{c|}{\textbf{Scenario: CAM}} & \multicolumn{3}{c|}{\textbf{Scenario: MIX}} &       & \multicolumn{3}{c|}{\textbf{Scenario: CAM}} & \multicolumn{3}{c|}{\textbf{Scenario: MIX}} \\
		\cmidrule{2-22}    \multicolumn{1}{c|}{} &       & \multicolumn{1}{c}{$n_{1}$} & \multicolumn{1}{c}{$p$} & \multicolumn{1}{c|}{\textbf{Power}} & \multicolumn{1}{c}{$n_{1}$} & \multicolumn{1}{c}{$p$} & \multicolumn{1}{c|}{\textbf{Power}} &       & \multicolumn{1}{c}{$n_{1}$} & \multicolumn{1}{c}{$p$} & \multicolumn{1}{c|}{\textbf{Power}} & \multicolumn{1}{c}{$n_{1}$} & \multicolumn{1}{c}{$p$} & \multicolumn{1}{c|}{\textbf{Power}} &       & \multicolumn{1}{c}{$n_{1}$} & \multicolumn{1}{c}{$p$} & \multicolumn{1}{c|}{\textbf{Power}} & \multicolumn{1}{c}{$n_{1}$} & \multicolumn{1}{c}{$p$} & \multicolumn{1}{c|}{\textbf{Power}} \\
		\midrule
		\multirow{6}[2]{*}{-1} & \multirow{24}[8]{*}{\begin{sideways}\textbf{Method: PP-Logistic}\end{sideways}} & 50    & 10    & 0.082 & 50    & 10    & 0.022 & \multirow{24}[8]{*}{\begin{sideways}\textbf{Method: PP-RF}\end{sideways}} & 50    & 10    & 0.062 & 50    & 10    & 0.022 & \multirow{24}[8]{*}{\begin{sideways}\textbf{Method: CL-Logistic}\end{sideways}} & 50    & 10    & 0.068 & 50    & 10    & 0.016 \\
		&       & 100   & 10    & 0.044 & 100   & 10    & 0.072 &       & 100   & 10    & 0.050 & 100   & 10    & 0.056 &       & 100   & 10    & 0.044 & 100   & 10    & 0.064 \\
		&       & 150   & 10    & 0.036 & 150   & 10    & 0.056 &       & 150   & 10    & 0.048 & 150   & 10    & 0.064 &       & 150   & 10    & 0.036 & 150   & 10    & 0.070 \\
		&       & 50    & 20    & 0.060 & 50    & 20    & 0.044 &       & 50    & 20    & 0.036 & 50    & 20    & 0.026 &       & 50    & 20    & 0.058 & 50    & 20    & 0.040 \\
		&       & 100   & 20    & 0.062 & 100   & 20    & 0.038 &       & 100   & 20    & 0.074 & 100   & 20    & 0.036 &       & 100   & 20    & 0.060 & 100   & 20    & 0.034 \\
		&       & 150   & 20    & 0.060 & 150   & 20    & 0.044 &       & 150   & 20    & 0.066 & 150   & 20    & 0.052 &       & 150   & 20    & 0.048 & 150   & 20    & 0.048 \\
		\cmidrule{1-1}\cmidrule{3-8}\cmidrule{10-15}\cmidrule{17-22}    \multirow{6}[2]{*}{0} &       & 50    & 10    & 0.050 & 50    & 10    & 0.050 &       & 50    & 10    & 0.050 & 50    & 10    & 0.050 &       & 50    & 10    & 0.050 & 50    & 10    & 0.050 \\
		&       & 100   & 10    & 0.050 & 100   & 10    & 0.050 &       & 100   & 10    & 0.050 & 100   & 10    & 0.050 &       & 100   & 10    & 0.050 & 100   & 10    & 0.050 \\
		&       & 150   & 10    & 0.050 & 150   & 10    & 0.050 &       & 150   & 10    & 0.050 & 150   & 10    & 0.050 &       & 150   & 10    & 0.050 & 150   & 10    & 0.050 \\
		&       & 50    & 20    & 0.050 & 50    & 20    & 0.050 &       & 50    & 20    & 0.050 & 50    & 20    & 0.050 &       & 50    & 20    & 0.050 & 50    & 20    & 0.050 \\
		&       & 100   & 20    & 0.050 & 100   & 20    & 0.050 &       & 100   & 20    & 0.050 & 100   & 20    & 0.050 &       & 100   & 20    & 0.050 & 100   & 20    & 0.050 \\
		&       & 150   & 20    & 0.050 & 150   & 20    & 0.050 &       & 150   & 20    & 0.050 & 150   & 20    & 0.050 &       & 150   & 20    & 0.050 & 150   & 20    & 0.050 \\
		\cmidrule{1-1}\cmidrule{3-8}\cmidrule{10-15}\cmidrule{17-22}    \multirow{6}[2]{*}{1} &       & 50    & 10    & 0.058 & 50    & 10    & 0.030 &       & 50    & 10    & 0.072 & 50    & 10    & 0.018 &       & 50    & 10    & 0.060 & 50    & 10    & 0.020 \\
		&       & 100   & 10    & 0.040 & 100   & 10    & 0.054 &       & 100   & 10    & 0.036 & 100   & 10    & 0.054 &       & 100   & 10    & 0.036 & 100   & 10    & 0.058 \\
		&       & 150   & 10    & 0.028 & 150   & 10    & 0.042 &       & 150   & 10    & 0.044 & 150   & 10    & 0.034 &       & 150   & 10    & 0.030 & 150   & 10    & 0.048 \\
		&       & 50    & 20    & 0.034 & 50    & 20    & 0.050 &       & 50    & 20    & 0.028 & 50    & 20    & 0.024 &       & 50    & 20    & 0.028 & 50    & 20    & 0.048 \\
		&       & 100   & 20    & 0.040 & 100   & 20    & 0.038 &       & 100   & 20    & 0.068 & 100   & 20    & 0.032 &       & 100   & 20    & 0.036 & 100   & 20    & 0.040 \\
		&       & 150   & 20    & 0.062 & 150   & 20    & 0.054 &       & 150   & 20    & 0.080 & 150   & 20    & 0.060 &       & 150   & 20    & 0.064 & 150   & 20    & 0.048 \\
		\cmidrule{1-1}\cmidrule{3-8}\cmidrule{10-15}\cmidrule{17-22}    \multirow{6}[2]{*}{3} &       & 50    & 10    & 0.084 & 50    & 10    & 0.018 &       & 50    & 10    & 0.072 & 50    & 10    & 0.020 &       & 50    & 10    & 0.072 & 50    & 10    & 0.026 \\
		&       & 100   & 10    & 0.040 & 100   & 10    & 0.038 &       & 100   & 10    & 0.062 & 100   & 10    & 0.026 &       & 100   & 10    & 0.044 & 100   & 10    & 0.028 \\
		&       & 150   & 10    & 0.038 & 150   & 10    & 0.050 &       & 150   & 10    & 0.028 & 150   & 10    & 0.044 &       & 150   & 10    & 0.034 & 150   & 10    & 0.056 \\
		&       & 50    & 20    & 0.052 & 50    & 20    & 0.064 &       & 50    & 20    & 0.038 & 50    & 20    & 0.042 &       & 50    & 20    & 0.048 & 50    & 20    & 0.064 \\
		&       & 100   & 20    & 0.042 & 100   & 20    & 0.042 &       & 100   & 20    & 0.060 & 100   & 20    & 0.030 &       & 100   & 20    & 0.040 & 100   & 20    & 0.038 \\
		&       & 150   & 20    & 0.040 & 150   & 20    & 0.028 &       & 150   & 20    & 0.058 & 150   & 20    & 0.040 &       & 150   & 20    & 0.038 & 150   & 20    & 0.038 \\
		\bottomrule
	\end{tabular}%
\end{minipage}}
%
\vskip2.5ex
	\vskip-1.5ex
	\resizebox{.65\textwidth}{!}{ \begin{minipage}[b]{.99\linewidth}
	\begin{tabular}{c|c|rr|c|c|rr|}
		\toprule
		\textbf{Method} & \textbf{Scenario} & \multicolumn{1}{c}{$\delta$} & \multicolumn{1}{c||}{\textbf{Power}} & \textbf{Method} & \textbf{Scenario} & \multicolumn{1}{c}{$\delta$} & \multicolumn{1}{c|}{\textbf{Power}} \\
		\midrule
		\multirow{8}[4]{*}{\begin{sideways}CL-Logistic\end{sideways}} & \multirow{4}[2]{*}{Interaction} & -1    & 0.060 & \multirow{8}[4]{*}{\begin{sideways}PP-Logistic\end{sideways}} & \multirow{4}[2]{*}{Interaction} & -1    & 0.058 \\
		&       & 0     & 0.052 &       &       & 0     & 0.052 \\
		&       & 1     & 0.058 &       &       & 1     & 0.058 \\
		&       & 3     & 0.062 &       &       & 3     & 0.058 \\
		\cmidrule{2-4}\cmidrule{6-8}          & \multirow{4}[2]{*}{Oracle} & -1    & 0.065 &       & \multirow{4}[2]{*}{Oracle} & -1    & 0.063 \\
		&       & 0     & 0.053 &       &       & 0     & 0.053 \\
		&       & 1     & 0.036 &       &       & 1     & 0.036 \\
		&       & 3     & 0.043 &       &       & 3     & 0.045 \\
		\midrule
		\multirow{4}[2]{*}{\begin{sideways}PP-RF\end{sideways}} & \multirow{4}[2]{*}{Oracle} & -1    & 0.061 & \multirow{4}[2]{*}{\begin{sideways}PP-RF\end{sideways}} & \multirow{4}[2]{*}{Interaction} & -1    & 0.066 \\
		&       & 0     & 0.053 &       &       & 0     & 0.053 \\
		&       & 1     & 0.039 &       &       & 1     & 0.057 \\
		&       & 3     & 0.043 &       &       & 3     & 0.064 \\
		\bottomrule
	\end{tabular}%
	\end{minipage}}
\end{table}

\subsection{Multiple Historical Controls}
\label{sm subsec:mult_hist}
We consider a setup with historical controls arising from multiple
sources, i.e., with $S>2$. 
As mentioned earlier in Section \ref{subsec:CAM_cov}, we merge the
historical datasets and treat the merged data set as a single RWD population with increased
heterogeneity. 
We study the performance of the CA-PPMx model in this scenario via simulation studies.
We extend the MIX scenario discussed in Section \ref{sec:sim_study}.
We generate the treatment arm $\bX_{1,i}\simiid \sum_{j=1}^{k} \pi_{1,j} \mn_{p}(\bmu_{j}, \sigma^{2} \bI_ {p})$.
We generate two RWD datasets from $\bX_{2,i}\simiid \sum_{j=1}^{k-1}
\pi_{2,j} \mn_{p}(\bmu_{j}, \sigma^{2} \bI_ {p})$ and
$\bX_{3,i}\simiid \sum_{j=2}^{k} \pi_{3,j} \mn_{p}(\bmu_{j},
\sigma^{2} \bI_ {p})$. 
In this construction, the historical populations $\bXtw$ and $\bX_{3}$
are substantially different, with
one distinct atom each, as well as varying weights for the common atoms.
Letting $\bX_{2'}$ denote the merged $\bXtw$ and $\bX_{3}$ population, we fit the CA-PPMx model on $\bXon$ and $\bX_{2'}$.
Note that the current trial population $\bXon$ has an extra atom compared to each of the RWD populations but the merged $\bX_{2'}$ and $\bXon$ share common atoms.

We generate the response
$Y_{1,i}\simind \mn( \delta+ \bX_{1,i}\trans \bbeta , 1 )$ and
$Y_{s,i} \simind \mn(  \bX_{s,i}\trans \bbeta , 1 )$ for $s=2,3$ implying $\delta$ to be the true treatment effect. 
We let $n_{2}$, $n_{2}$ and $n_{3}$ denote the sample sizes in the three populations, respectively where we set $n_{2}=n_{3}= 3\times n_{2}$ in coherence with the simulation studies in Section \ref{sec:sim_study}.
We set the dimension of the covariates $p=10$ and repeat the the experiments for $\delta=-1,0,1,3$ and $n_{2}=50, 100, 150$.
We plot the power of discovering the treatment effect in Figure
\ref{sm fig:mult_RWD} {calculated in the exact same manner as described in Section \ref{sec:sim_study}}.
We observe that the power increases with respect to both sample size and strength of the treatment effect.

\begin{figure}[!h]
	\centering
	\includegraphics[width=.7\linewidth]{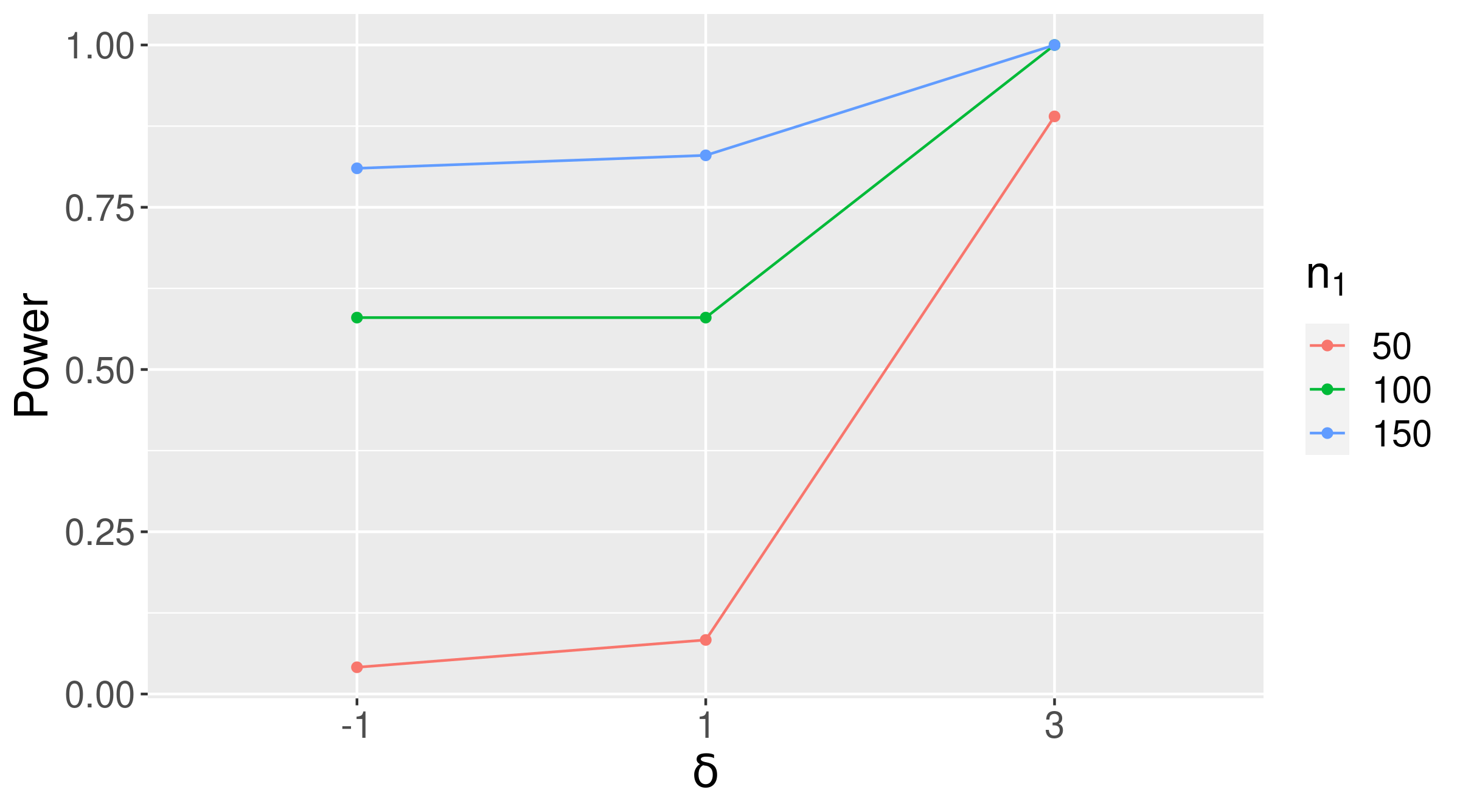}
	\caption{Multiple historical data in the CA-PPMx model: We combine different historical datasets and combine them as a more heterogeneous single population and subsequently fit the CA-PPMx model.
	We observe that the power increases with respect to both sample size and strength of the treatment effect.}
	\label{sm fig:mult_RWD}
\end{figure}



\subsection{Effect of Missing Confounders}
\label{sm subsec:missing_confounder}
In this section we briefly study the effect of missing confounders on
inference under the proposed CA-PPMx model.
In particular we consider the case where a confounding factor is completely unobserved.
In such cases causal inference methods are often biased; see \citelatex{missing_confounder} and the references therein for a detailed review.
However, in many applications, multivariate covariates are often correlated among each other. 
Several imputation methods for partially observed confounders are based on this assumption \citeplatex{missing_confounder2,missing_confounder3}.
In such cases, observing and using another covariate which is correlated to the missing confounder as predictor can reduce bias.
We study this in a simulated example.

We consider a regression setup in a case-control study
$(\bX_{s,i},Y_{s,i})$,  $i=1,\dots,\ns$, $s=1,2$ with bivariate
covariate $\bX_{s,i}=(X_{s,i,1}, X_{s,i,2})\trans$. 
First, we generate $X_{s,i,1} \sim \sum_{j=1}^{k} \pi_{s,j} \mn(\mu_{j},0.01)$ and subsequently generate $X_{s,i,2}= m X_{s,i,1} +\varepsilon_{s,i}$ where $\varepsilon_{s,i} \simiid \mn(0,1)$ and $m\in \rR$.
Then, we generate the responses $Y_{1,i}= \delta + \beta X_{1,i,1}+ \epsilon_{1,i}$ and $Y_{2,i}=  \beta X_{2,i,1}+\epsilon_{2,i}$ where $\epsilon_{s,i}\simiid \mn(0,1)$ implying $\delta$ to be the true treatment effect.
Thus conditionally on the $X_{s,i,1}$'s, the responses $Y_{s,i}$'s are independent of the $X_{s,i,2}$'s.
We take $n_{2}=50$, $n_{2}=300$, $k=3$, $(\mu_{2},\mu_{2},\mu_{3})=(-3,0,3)$, $\delta=3$ and $\beta=1$.
We repeat the simulation experiment independently 100 times and randomly generate the $\pi_{s,j}$'s in each replicate.

\begin{figure}[!ht]
	\centering
	\includegraphics[width=0.65\linewidth]{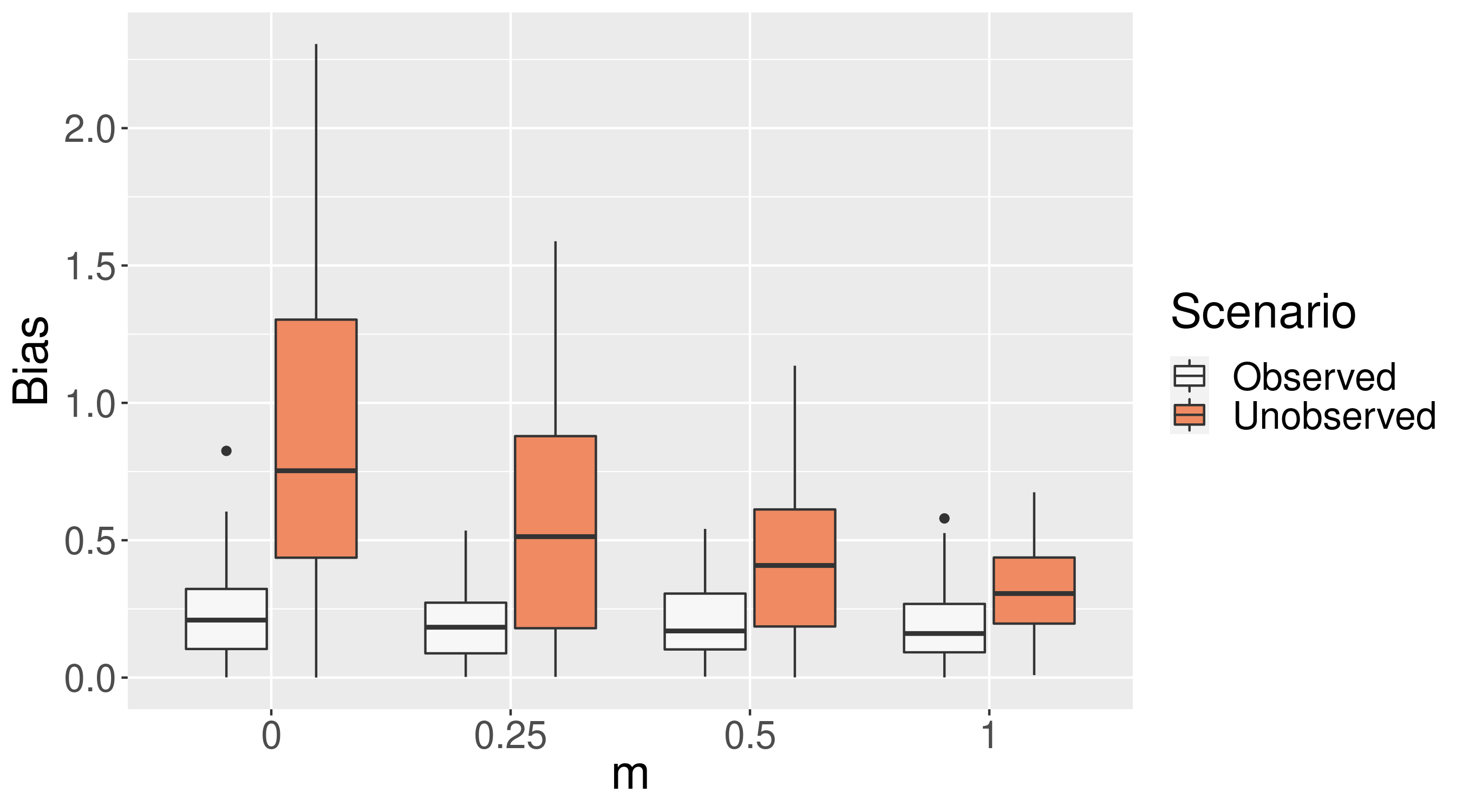}
	\caption{Effect of missing confounder in the CA-PPMx model: The bias in estimating the treatment effect decreases as the correlation between the observed covariate and the unobserved confounder increases.}
	\label{sm fig:compare_missing}
\end{figure}

We consider two analysis scenarios: 
(1) \textit{Unobserved}: $X_{s,i,1}$ is assumed to be unobserved and
the CA-PPMx model is fitted using $(X_{s,i,2},Y_{s,i})$; 
(2) \textit{Observed}: the CA-PPMx model is fitted using $(\bX_{s,i},Y_{s,i})$.
We compute the bias in estimating the treatment effect $\delta$ for varying values of $m$ in both scenarios.
We show boxplots of the biases over the repeat simulations in Figure \ref{sm fig:compare_missing}.

Note that for $m=0$, $X_{s,i,1}$ and $X_{s,i,2}$ are uncorrelated.
Additionally, $\abs{\corr(X_{s,i,1}, X_{s,i,2})}$ is an increasing function of $\abs{m}$.
Coherently, the bias is maximum in the \textit{Unobserved} scenario for $m=0$ as the $X_{s,i,2}$'s carry no information regarding the confounding factor $X_{s,i,1}$'s.
The marginal correlation between the observed covariate and the response increases with $m$ and accordingly we see a reduction in the bias.
This simulation study indicates that the CA-PPMx method will not yield
terribly biased results
as long as the data includes observed covariates that are correlated to the unmeasured confounder.

\subsection{Computation Times for the CA-PPMx Method}
\label{sm subsec:computatiokn_cost}
In this section we report computation times of the MCMC sampler proposed in Section \ref{SM subsec:MCMC} across different sample sizes and covariate dimensions.
We consider the CAM and MIX scenarios and the exact same simulation setups discussed in Section \ref{sec:sim_study} of the main paper.
Since the model implementation times do not depend on the treatment effect size, 
we report the computation times for $\delta=3$ only.
Computation times for $6,000$  MCMC iterations in seconds for a single repeat simulation on an Intel Core i9-13900K CPU with 128GB of RAM are provided in Figure \ref{sm fig:mcmctime} where we see that the computational cost increases with the covariate dimension $p$ as well as the sample size $n_{1}$.

\begin{figure}[H]
	\centering
	\includegraphics[width=0.7\linewidth]{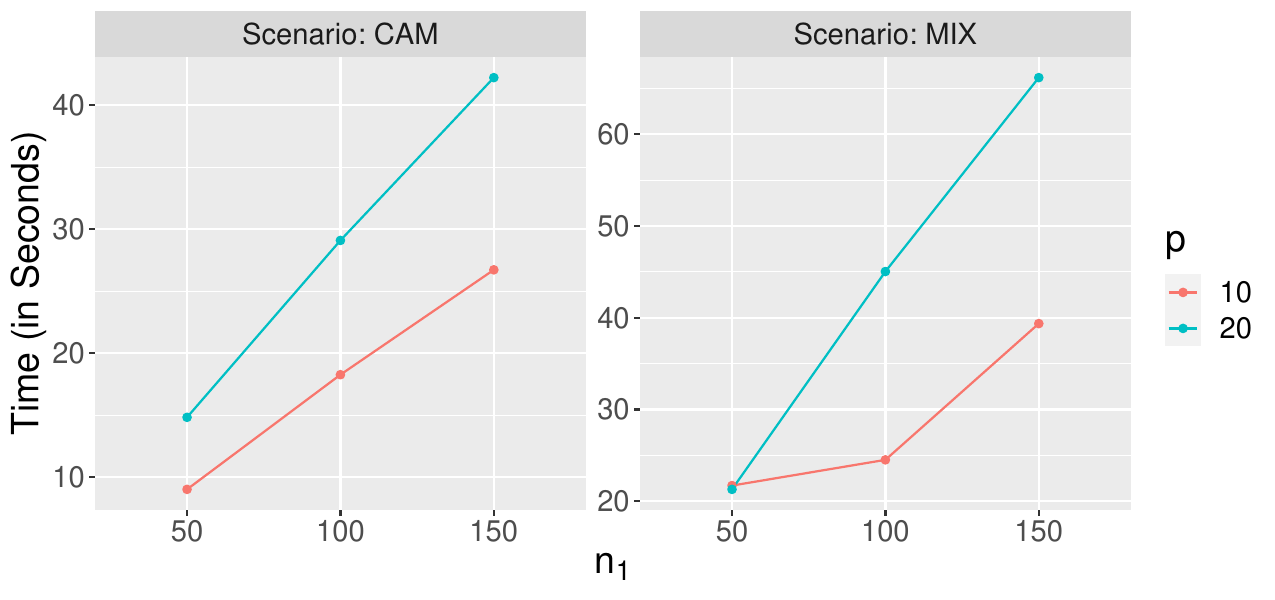}
	\vskip-2ex
	\caption{Computation times of the MCMC sampler in seconds: $n_{1}$ and $p$ denotes the number of patients in the current trial arm and the dimension of the covariates, respectively.}
	\label{sm fig:mcmctime}
\end{figure}

\section{MCMC Diagnostics}
\label{sm subsec:MCMC_diagnose}
In this section, we provide some convergence diagnostics of the MCMC sampler discussed in Section \ref{SM subsec:MCMC} for one trial replicate discussed in Section \ref{sec:GBM}. 
We show traceplots and Geweke's convergence diagnostics
\citeplatex{geweke_mcmc} for some selected parameters, using an implementation in
the \texttt{ggmcmc R} package \citeplatex{ggmcmc}. 

Recall the importance resampling weights $w_{i}\propto \frac{\pi_{1,c_{2,i}}}{n_{2,c_{2,i}}}$ in Eqn \eqref{eq:ref_needs} attached to the historical patients.
We evaluate MCMC convergence diagnostics for the five $w_{i}$'s with the largest posterior means,
the lognormal hyperparameters $\mu_{0}$ and $b_{0}$ mentioned in Step 3 and the Dirichlet mixture hyperparameters $\alpha_{1}$ and $\alpha_{2}$ in Step 5 of the MCMC sampler in Section \ref{SM subsec:MCMC}.
The results, provided in Figure \ref{sm fig:MCMC_lognorm_hyperparams}, do not suggest any convergence or mixing issues.

\begin{figure}[H]
	\centering
	\begin{subfigure}[b]{\linewidth}
		\includegraphics[trim={0cm 0cm 1.5cm 0.15cm},clip,width=\linewidth,height=.475\paperheight]{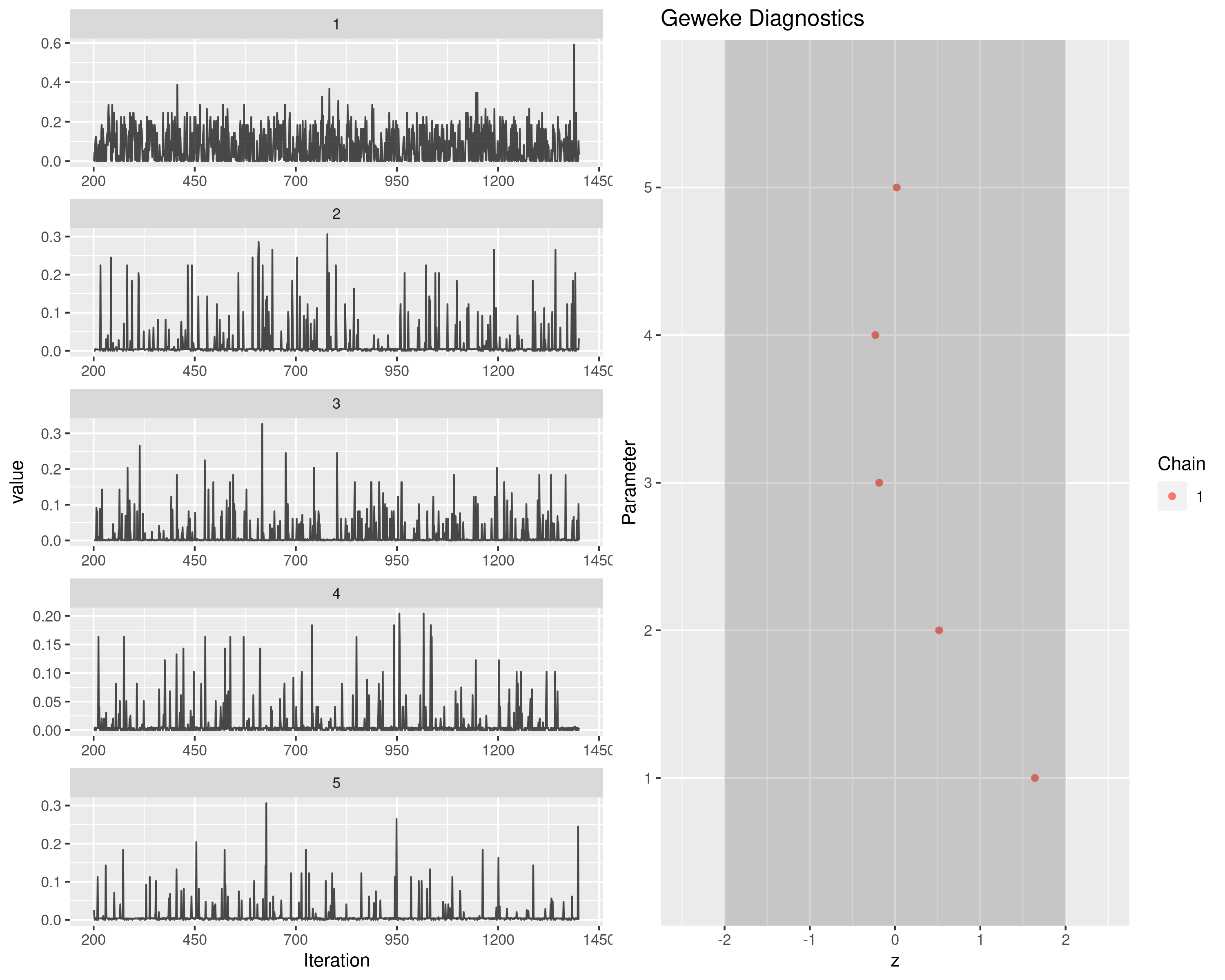}
		\vskip-1.5ex
		\caption{MCMC analysis of the largest 5 importance resampling weights.}
		\label{sm fig:MCMC_importance_weights}
	\end{subfigure}
%
	\begin{subfigure}[b]{\linewidth}
		\includegraphics[trim={0cm 0cm 1.5cm 0.15cm},clip,width=\linewidth]{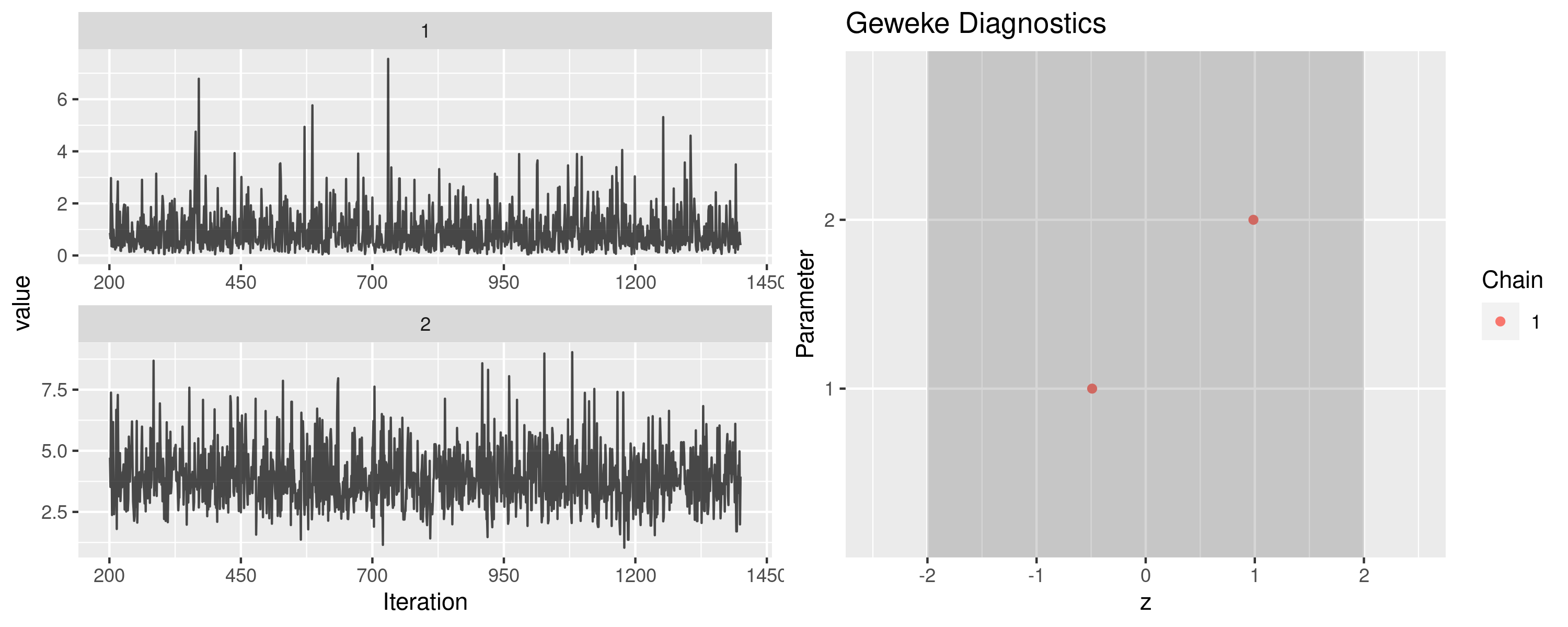}
		\vskip-1.5ex
		\caption{MCMC analysis of the Dirichlet mixture hyperparameters $\alpha_{2}$ and $\alpha_{2}$.}
		\label{sm fig:MCMC_dir_params}
	\end{subfigure}
\end{figure}

\begin{figure}[H]
\ContinuedFloat	
\begin{subfigure}[b]{\linewidth}
	\includegraphics[trim={0cm 0cm 1.5cm 0.2cm},clip,width=\linewidth]{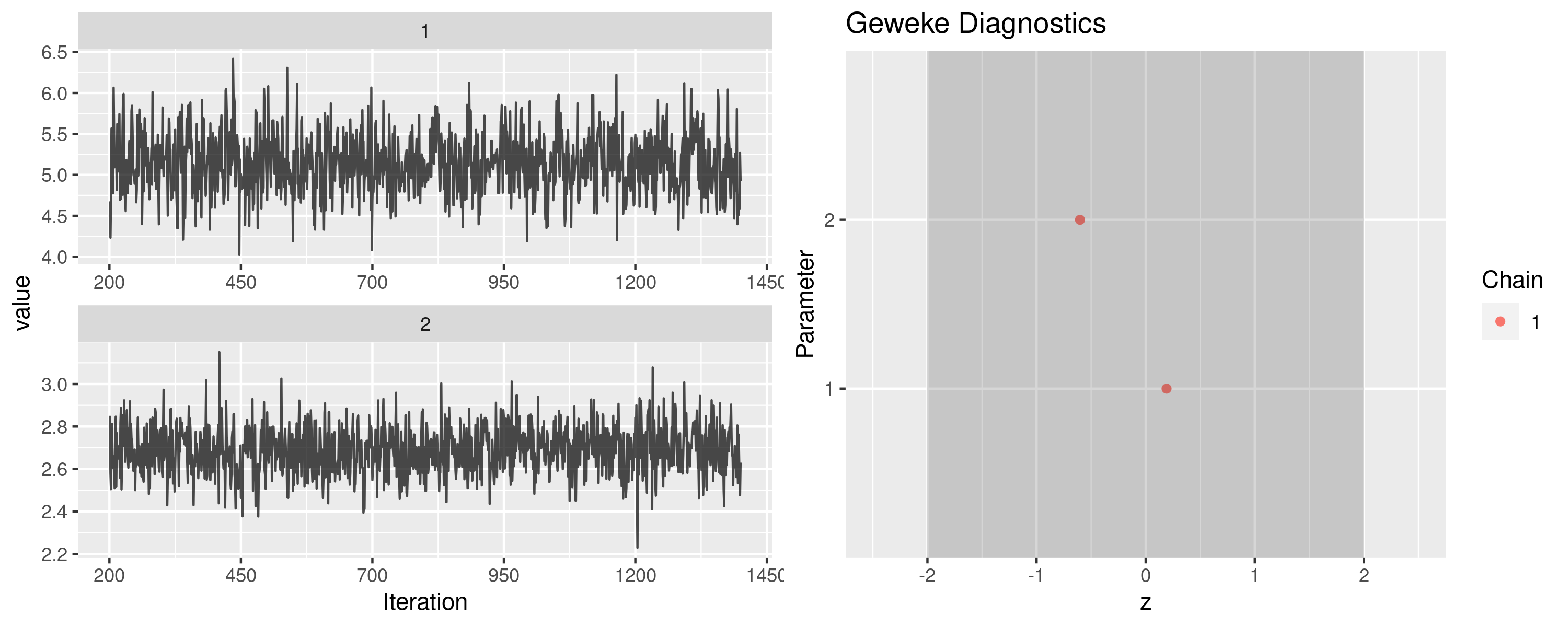}
	\vskip-1.5ex
	\caption{MCMC analysis of the lognormal hyperparameters $\mu_{0}$ and ${b}_{0}$.}	    
\end{subfigure}
	\caption{MCMC convergences diagnostics for some selected parameters: 
		Panel (a), (b) and (c) shows results for the top five
                $w_{i}\propto \frac{\pi_{1,c_{2,i}}}{n_{2,c_{2,i}}}$
                with largest posterior means,		  
		the lognormal hyperparameters $\mu_{0}$ and $b_{0}$ and 
		the Dirichlet mixture hyperparameters $\alpha_{1}$ and $\alpha_{2}$ in Step 5, respectively.
		In each panel, we show the corresponding traceplots
                across the thinned out MCMC samples on the left, and
                Geweke's diagnostics on the right.} 
	\label{sm fig:MCMC_lognorm_hyperparams}	
\end{figure}

\bibliographystylelatex{natbib}
\bibliographylatex{refs}
\end{document}